\newtheorem{cj}{Conjecture}
\newtheorem*{cj*}{Conjecture}
\newtheorem{theorem}{Theorem}
\newtheorem{proposition}{Proposition}[section]
\newtheorem{lemma}[proposition]{Lemma}
\newtheorem{corollary}[proposition]{Corollary}
\newtheorem{definition}[proposition]{Definition}
\newtheorem{remark}[proposition]{Remark}
\newtheorem{example}[proposition]{Example}
 \newcommand\ben{\begin{equation*}}
 \newcommand\ebn{\end{equation*}}
 \newcommand\beq{\begin{equation}}
 \newcommand\eeq{\end{equation}}
 \newcommand\lb{\left(}
  \newcommand\rb{\right)} 
\numberwithin{equation}{section}
 \newcommand{\np}{d}
 \newcommand{\Summed}{n}
 \newcommand{\NB}{\text{NB}_\lambda}
 \newcommand{\SYT}{\text{SYT}_\lambda}
 \newcommand{\RSYT}{\text{RSYT}^\lambda_\varsigma}
 \newcommand{\CSYT}{\text{CSYT}^\lambda_\varsigma}
 \newcommand{\idpts}{s_{\varsigma}}
 \newcommand{\idpta}{a_{\varsigma}}
    \newcommand{\addQEDstyle}[2]{\AtBeginEnvironment{#1}{\pushQED{\qed}\renewcommand{\qedsymbol}{#2}}
    \AtEndEnvironment{#1}{\popQED}} 
\title{Degenerate conformal blocks for the $W_3$ algebra at $c=2$ and connection probabilities in the triple dimer model}
\author{Augustin Lafay$^*$, Ian Le$^\dagger$ and Julien Roussillon$^*$}
\address{$^*$\quad Department of Mathematics and Systems Analysis, 
\\ \phantom{$^*$}\quad P.O. Box 11100, FI-00076 Aalto, Finland}
\address{$^\dagger$\quad Mathematical Sciences Institute, 
\\ \phantom{$^\dagger$}\quad Australian National University, Australia}
\email{augustin.lafay@aalto.fi}
\email{ian.le@anu.edu.au}
\email{julien.roussillon@aalto.fi}
\begin{document}
\maketitle
\begin{abstract}
    We study a homogeneous system of $d+8$ linear partial differential equations (PDEs) in $d$ variables arising from two-dimensional Conformal Field Theories (CFTs) with a $W_3$-symmetry algebra. In the CFT context, $d$ PDEs are third-order and correspond to the null-state equations, whereas the remaining 8 PDEs (five being second-order and three being first-order) correspond to the $W_3$ global Ward identities. In the case of central charge $c=2$, we construct a subspace of the space of all solutions which grow no faster than a power law. We call this subspace the space of $W_3$ conformal blocks, and we provide a basis expressed in terms of Specht polynomials associated with column-strict, rectangular Young tableaux with three columns. The dimension of this space is a Kostka number which coincides with CFT predictions, hence we conjecture that it exhausts the space of all solutions having a power law bound. Moreover, we prove that the space of $W_3$ conformal blocks is an irreducible representation of a certain diagram algebra defined from $\mathfrak{sl}_3$ webs that we call Kuperberg algebra. Finally, we prove a formula relating the $W_3$ conformal blocks at $c=2$ we constructed to Kenyon and Shi's scaling limits of connection probabilities in the triple dimer model. For more general central charges, we expect that $W_3$ conformal blocks are related to scaling limits of probabilities in lattice models based on $\mathfrak{sl}_3$ webs. 
\end{abstract}

{\hypersetup{linkcolor=black}
\setcounter{tocdepth}{2}
\tableofcontents}

\section{Introduction}




\subsection*{Introduction} Since its inception in \cite{BPZ}, Conformal Field Theory (CFT) has provided a formidable playground to both theoretical physics and mathematics. To this day, there is no general agreement on what the definition of a CFT is. A powerful approach to CFT developed in \cite{BPZ} and generally used in the physics literature is called conformal bootstrap. In this approach, one postulates a set of conformally covariant local fields called \emph{primary fields}, along with a rule called \emph{operator product expansion} (OPE) which expresses the product of two primary fields within a correlation function as a sum over a subset of primary fields called \emph{spectrum}. The coefficients in the OPE contain the so-called \emph{structure constants} which are closely related to the three-point correlation functions. Given the knowledge of the spectrum and the structure constants, higher-point correlation functions are then determined recursively. 

In the conformal bootstrap approach, correlation functions of $N>3$ primary fields on the Riemann sphere conjecturally decompose into products of structure constants and universal quantities called \emph{conformal blocks}. These are special functions which arise solely from representation theory of the underlying symmetry algebra of the CFT. A special emphasis is given to the four-point conformal blocks, since the four-point correlation function carries a substantial amount of information about the model. For instance, it has to satisfy the so-called crossing symmetry equations which are consequences of the associativity of the operator product expansion between primary fields of the model. More generally, for a review of the conformal bootstrap approach to CFT in the physics literature, the reader is referred to \cite{DMS,Rib}. 

Conformal blocks are not well understood in general. For the celebrated class of conformal blocks appearing in Liouville theory, until recently, only formal power series expansions were available in the physics literature and the AGT correspondence \cite{AGT} provided a conjectural closed-form formula for the coefficients of the series. In a remarkable series of recent works culminating in \cite{GKRV2020,GKRV2021}, those conformal blocks have been constructed rigorously.

In the special case where $N \geq 1$ primary fields are degenerate, that is, they have a descendant field which is a so-called null-field, the conformal blocks of the Virasoro algebra satisfy a system of $N$ partial differential equations called Virasoro-BPZ equations. For the simplest nontrivial degenerate fields, the resulting PDEs are second order \cite{BPZ,KRV19}. 

Besides their key role in the conformal bootstrap approach to CFTs, Virasoro conformal blocks of degenerate fields are closely related to the scaling limits of critical lattice models such as the Ising or the percolation model. For instance, imposing suitable boundary conditions on a simply connected domain typically leads to the emergence of possibly multiple interfaces connecting boundary points. In the scaling limit, such interfaces become a system of conformally invariant random curves called Schramm-Loewner evolutions (SLE) \cite{S, Sc11}. In the case where multiple interfaces are present, they are described by the so-called partitions functions which are constrained by a martingale property \cite{BBK,Gr,Du, KL07, Pe19} which amounts to the second-order system of BPZ equations mentioned above. Moreover, in this context, interfaces can join in different patterns known as \emph{link patterns}. A \emph{connection probability} is the probability of obtaining a given link pattern, and pure partition functions are expected to provide probability weights for the scaling limit of connection probabilities \cite{BBK, KP2}. An important example for the present paper is the case of the double dimer model, where the relation between conformal blocks and such probability weights was made explicit \cite{KW1}. In a broader context, the relations between these lattice models and CFT were first conjectured in the physics literature \cite{DS,Ca,Ja}. These predictions were based on a Coulomb gas argument, which models the scaling limit of loop models using the compactified imaginary Liouville theory. This CFT has recently been rigorously constructed in \cite{GKR}.

More generally, in addition to curves joining boundary points according to a given link pattern, interfaces in lattice models also include collections of loops within the domain. Such link patterns and loops have a natural interpretation in terms of invariants of $U_q(\mathfrak{sl}_2)$. Higher-rank generalizations of link patterns, called \emph{webs}, have been introduced in representation theory \cite{Ku,CKM}. Moreover, one of the present authors recently realized in \cite{LGJ1} at a physics level of rigor that there exist macroscopic objects in some lattice models -- such as interfaces in the Potts model -- which are described by these webs (see also \cite{FLL,DKS,Ta1,Ta2} in the mathematics literature on dimer models).  

More recently, Kenyon and Shi rigorously constructed connection probabilities in the triple dimer model \cite{KS}, where the connection patterns are given by webs. Their result was one of the main motivations for the present work. Indeed, by analogy with the case of the double dimer model, it is natural to expect that Kenyon and Shi's connection probabilities can be expressed in terms of higher-rank analogues of Virasoro conformal blocks. The main contribution of this article is to construct such conformal blocks, understand their CFT meaning, and rigorously relate them to Kenyon and Shi's connection probabilities.

We now heuristically describe what kind of CFT naturally arises in this context. One of the authors of the present article argued in \cite{LGJ2} at a theoretical physics level of rigor that the scaling limits of models based on $\mathfrak{sl}_3$ webs are described by a compactified imaginary $\mathfrak{sl}_3$ Toda field theory -- a higher rank analog of Liouville theory -- where the central charge satisfies $c \in(-\infty,2]$. This suggests that the corresponding observables in these critical lattice models cannot be described solely using SLE. Finally, although the compactified imaginary $\mathfrak{sl}_3$ Toda field theory is not understood rigorously, the path integral formulation of the real $\mathfrak{sl}_3$ Toda field theory has been rigorously studied in recent works \cite{CH,CRV}. This CFT belongs to a large class of theories whose symmetry algebra strictly contains the Virasoro algebra. Such extended symmetry algebras are generally called $W$-algebras \cite{BS}. 

\subsection*{Brief description of our results} In this paper, we study a system of equations arising from CFTs whose symmetry algebra is the $W_3$ algebra, which is also called $WA_2$ algebra \cite{Zam85}. Just as the simplest nontrivial Virasoro-degenerate fields lead to second-order PDEs, the simplest nontrivial $W_3$-degenerate fields give rise to third-order PDEs (see for instance \cite{FL}). We introduce a system comprising $d$ such third-order PDEs, which we call $W_3$-BPZ equations, along with five second-order PDEs and three first-order PDEs. Physically, the first and second-order PDEs are the global Ward identities associated with the currents $T$ and $\mathcal W$ which encode the $W_3$ symmetry algebra of the model. Solutions of this system will be referred to as $W_3$ conformal blocks.

The present paper describes three classes of results which are analytic, algebraic and probabilistic in nature. The analytic results consist of an explicit construction of $W_3$ conformal blocks in the special case of central charge $c=2$. More precisely, we construct a subspace of solutions of the $d+8$ PDEs, whose elements are expressed in terms of the celebrated Specht polynomials associated with rectangular Young diagrams with $3$ columns. Although we were not able to prove that this space exhausts the space of all solutions, we expect that this is the case, since its dimension is the one predicted by the fusion rules of CFT. As for the algebraic results, it is known that the solution space of the Virasoro-BPZ equations forms an irreducible representation of the Temperley-Lieb algebra for irrational central charge \cite{FP20, KP1}. In this paper, we prove a higher-rank analog for $c=2$. More precisely, at this value of the central charge, we prove that the solution space of the $W_3$-BPZ equations forms an irreducible representation of a certain diagram algebra denoted $\text{K}^\varsigma$ defined from $\mathfrak sl_3$ webs \cite{Ku} which we call Kuperberg algebra. Finally, our probabilistic results are a complete characterization of the scaling limits of Kenyon and Shi's connection probabilities in the triple dimer model in terms of the $W_3$ conformal blocks at $c=2$. This result is a higher-rank analogue of the one relating connection probabilities in the double dimer model and multiple SLE partition functions \cite{KW1}. 

Although the main results of this paper focus on the central charge $c=2$, we expect that our three classes of results have analogues for arbitrary central charge. Specifically, we conjecture that $W_3$ conformal blocks at any central charge provide amplitudes for the scaling limits of probabilities in lattice models based on $\mathfrak{sl}_3$ webs. If a stochastic description of $\mathfrak{sl}_3$ webs were to be constructed, we would expect the $W_3$-BPZ equations to play a central role in the case of multiple boundary points.

\subsection{Organization of the paper}

The paper is organized as follows. In Section \ref{section2}, we introduce the space of functions satisfying the $d+8$ PDEs of interest, and we provide a more precise formulation of our main results. The following sections are devoted to the proofs of the algebraic, probabilistic, and analytic results in order. Specifically, in Section \ref{sec:algebra} we study the algebraic structures associated with the space of $W_3$ conformal blocks at $c=2$. Our probabilistic results, which relate Kenyon and Shi's connection probabilities and the $W_3$ conformal blocks at $c=2$, are established in section \ref{sec:dimer}. In Section \ref{sectionprooftheorem1}, we prove our analytic results, which construct the space of $W_3$ conformal blocks at $c=2$. Finally, a derivation of the $d+8$ PDEs of interest is described at a theoretical physics level of rigor in the Appendix.

We emphasize that Sections \ref{sec:algebra}, \ref{sec:dimer} and \ref{sectionprooftheorem1} are almost entirely independent. We chose to present them in this order for two reasons. First, the Section \ref{sec:dimer} on the triple dimer model relies on algebraic results from \cite{Ku} which we recall in Section \ref{sec:algebra}. Second, Section \ref{sectionprooftheorem1} assumes linear independence of the $W_3$-conformal blocks, which is proved in Section \ref{sec:algebra}.



\subsection{Acknowledgements}
We thank Eveliina Peltola and Richard Kenyon for illuminating discussions which motivated us to start this project. We also thank Richard Kenyon and Haolin Shi for helping us understanding their work, and Eveliina Peltola for her useful comments on the draft. A.L. is supported by the Academy of Finland grant number 340461, entitled “Conformal invariance in planar random geometry”. J.R. is supported by the Academy of Finland Centre of Excellence Programme grant number 346315 entitled "Finnish centre of excellence in Randomness and STructures (FiRST)". 

\section{Formulation of the main results} \label{section2}

\subsection{Notations} 
Throughout the paper, the following notations will be used. Let $d \in \mathbb N$ and let $\mathcal H_d$ be the configuration space
\begin{equation}
    \mathcal H_d = \{ (x_1,\cdots,x_d) \in \mathbb R^d : x_1 < \cdots < x_d \}.
\end{equation}
Moreover, let $n=0 \; \text{mod} \; 3$ and let $\varsigma=(s_1,\cdots,s_d)$ be such that $\sum_{i=1}^d s_i = n$, with $s_i=1$ or $s_i=2$ for all $i=1,\cdots,d$. We also introduce $q_i = (-1)^{s_i+1}$ for all $i=1,\cdots,d$.
Finally, we define the partition $\pi$ of $n$ by
\begin{equation} \label{defpi}
    \pi=\left(\frac{n}{3},\frac{n}{3},\frac{n}{3}\right),
\end{equation} and denote by $\Bar{\pi}=(3^{(\frac{n}{3})})$ its conjugate partition. The main results of this paper will focus exclusively on partitions of the form $\pi$.

\subsection{Definitions}

In this section, we introduce in Definition \ref{defSvarsigmac} the $\mathbb R$-vector space denoted $\mathcal S_\varsigma^{(c)}$ of all functions $f: \mathcal H_d \to \mathbb R$ satisfying a homogeneous system of $d+8$ linear PDEs and a power law bound. We also introduce in Definition \ref{defconformalblocks} the space $\mathcal C^{(c=2)}_\varsigma$ of $W_3$-conformal blocks at central charge $c=2$ in terms of the celebrated Specht polynomials. These two definitions form the core objects of the paper upon which the main results are built.

\subsubsection{The space $\mathcal S_\varsigma^{(c)}$}

Let $\beta\in \mathbb R_{> 0}$ and define $c(\beta)$ and $h(\beta)$ such that
\begin{align}\label{relationcandbeta}
    c = 2 - 24 \lb \beta - \frac{1}{\beta} \rb^2,\qquad h(\beta) = \frac{4\beta^2}3 - 1.
\end{align} 
For convenience, the dependence of $c$ and $h$ on $\beta$ will be omitted. In CFT, $c$ and $h$ correspond to the central charge and the conformal dimension of the degenerate field, respectively.

We call the first $d$ PDEs $W_3$-BPZ equations. They read
\begin{align} \label{W3BPZ} \tag{$W_3$-BPZ}
\mathcal D^{(m)}_{\varsigma,c} f(x_1,\cdots,x_d) = 0, \qquad m=1,\cdots,d,
\end{align}
where $ \mathcal D^{(m)}_{\varsigma,c}$ is the following third order differential operator:
\begin{align} \label{defoperatorBPZ}
 \mathcal D^{(m)}_{\varsigma,c} = \; & q_m \partial_{x_m}^3 + \frac{3(h+1)}4 \sum_{i \neq m}^d \frac{q_m \partial_{x_m} \partial_{x_i} + q_i \partial_{x_i}^2}{x_i-x_m} \\
\nonumber & + \frac{3(h+1)}{32} \sum_{i\neq m}^d \frac{[(5+h)q_m-(5h+1)q_i] \partial_{x_i} - 4[2h q_m+(h+1)q_i] \partial_{x_m}}{(x_i-x_m)^2} \\
\nonumber & - \frac{3(h+1)^2}{8} \sum_{i \neq m}^d \sum_{j \neq i,m}^d \frac{q_i \partial_{x_j}}{(x_m-x_i)(x_j-x_i)} - \frac{h(h+1)(h+5)}{16} \sum_{i \neq m}^d \frac{q_i+3q_m}{(x_i-x_m)^3} \\
\nonumber & + \frac{3h(h+1)^2}{8} \sum_{i \neq m}^d \sum_{j \neq i,m}^d  \frac{q_i}{(x_m-x_i)(x_j-x_i)^2}.
\end{align}
A formal derivation of \eqref{W3BPZ} utilizing CFT techniques developed in \cite{FL} will be presented in the Appendix. To the best of our knowledge, the operators \eqref{defoperatorBPZ} have not previously appeared in the literature, as existing works typically involve only $4$ fields, or other types of degenerate fields and/or nondegenerate fields. Nevertheless, closely related equations appeared in \cite{Rib2, CH2}.

The second property is the covariance property under Möbius transformations. More precisely, let $\varphi(x) = \frac{ax+b}{cx+d}$ with $a,b,c,d \in \mathbb R$, $ad-bc=0$ and such that $\varphi(x_1) < \varphi(x_2) < \cdots < \varphi(x_d)$. Then,
\begin{equation} \label{covariance} \tag{COV}
     f(\varphi(x_1),\cdots ,\varphi(x_d)) = \prod_{i=1}^d \varphi'(x_i)^{-h} \; f(x_1,\cdots ,x_d).
\end{equation}
It is well-known \cite{DMS} that such a covariance property is equivalent to the following three first order PDEs:
\begin{align} 
\label{globalward1} & \sum_{k=1}^d \frac{\partial f(x_1,\cdots ,x_d)}{\partial x_k} = 0, \\
\label{globalward2} & \sum_{k=1}^d \bigg[x_k \frac{\partial}{\partial x_k} + h \bigg] \; f(x_1,\cdots ,x_d) = 0, \\
\label{globalward3} & \sum_{k=1}^d \bigg[ x_k^2 \frac{\partial}{\partial x_k} + 2hx_k \bigg] \; f(x_1,\cdots ,x_d) = 0.
\end{align}
The five remaining second order PDEs are described as follows. Let $\mathcal M^{(m)}_{\varsigma,c}$ be the differential operator
\begin{align} \label{defM}
\mathcal M^{(m)}_{\varsigma,c} = \; & \sum_{i=1}^d q_i x_i^{m-1} \partial_{x_i}^2 - \frac{h+1}2 \sum_{i=1}^d \sum_{j \neq i}^d q_i x_i^{m-1} \lb \frac{h}{(x_j-x_i)^2} - \frac{\partial_{x_j}}{x_j-x_i}\rb \\
\nonumber & + \frac{(m-1)}8 (5h+1) \sum_{i=1}^d q_i x_i^{m-2} \partial_{x_i} + \frac{(m-1)(m-2)}{24} h(5h+1)\sum_{i=1}^d q_i x_i^{m-3}.
\end{align}

Then, we have
\begin{align}
    \label{ward1} \tag{WI1} & \mathcal M^{(1)}_{\varsigma,c} f(x_1,\cdots ,x_d) = 0, \\
    \label{ward2} \tag{WI2} & \mathcal M^{(2)}_{\varsigma,c} f(x_1,\cdots ,x_d) = 0, \\
    \label{ward3} \tag{WI3} & \mathcal M^{(3)}_{\varsigma,c} f(x_1,\cdots ,x_d) = 0, \\
    \label{ward4} \tag{WI4} & \mathcal M^{(4)}_{\varsigma,c} f(x_1,\cdots ,x_d) = 0, \\
    \label{ward5} \tag{WI5} & \mathcal M^{(5)}_{\varsigma,c} f(x_1,\cdots ,x_d) = 0.
\end{align}
A formal derivation of Equations \eqref{ward1}-\eqref{ward5} in the CFT framework is also presented in the Appendix. They correspond to the global Ward identities associated with the current $\mathcal W$ which encodes parts of the $W_3$-symmetry algebra of the CFT.

The last condition satisfied by $f$ is a power law bound: there exist positive constants $C$ and $p$ such that 
\begin{equation} \label{POW} \tag{POW}
f(x_1,\cdots ,x_d) \leq C \prod_{i<j}^d |x_j-x_i|^{\mu_{ij}(p)} \quad \text{with} \quad \mu_{ij}(p) := \begin{cases} -p, \quad |x_i-x_j|<1, \\ +p, \quad |x_i-x_j| \geq 1. \end{cases}
\end{equation}
We are now finally ready to define the space $\mathcal S_\varsigma^{(c)}$:
\begin{definition} \label{defSvarsigmac}
   $\mathcal S_\varsigma^{(c)} = \{f: \mathcal H_d \to \mathbb R \; | \; \text{f satisfies \eqref{W3BPZ}, \eqref{covariance}, \eqref{ward1}-\eqref{ward5}}, \eqref{POW} \}$. 
\end{definition}
We emphasize that $\mathcal S_\varsigma^{(c)}$ is a higher-rank analog of the function space associated with the Virasoro-BPZ equations defined in \cite[Definition 1]{FK1}, and studied in detail in \cite{FK1,FK2,FK3}.
\subsubsection{The space $\mathcal C_\varsigma^{(c=2)}$ of $W_3$ conformal blocks}
Let $\lambda \vdash \Summed$ be a partition, that is,  $\lambda=(\lambda_1,\lambda_2,\cdots ,\lambda_l)$ such that $\lambda_1 \geq \lambda_2 \geq\cdots \geq\lambda_l \geq 0$ and $\lambda_1+\lambda_2+\cdots +\lambda_l = \Summed$. The length of the partition $\lambda$ is then denoted by $|\lambda|=n$. A Young diagram of shape $\lambda$ is a finite collection of boxes arranged in $l$ left-justified rows with row lengths being, from top to bottom, $\lambda_1,\cdots ,\lambda_l$. Let $\text{Fill}^\lambda_\varsigma$ be the set of fillings of Young diagrams of shape $\lambda \vdash \Summed$ where the number $k$ appears $s_k$ times for all $k=1,\cdots ,\np$. We say that $\varsigma$ is the content, or weight, of a filling in $\text{Fill}^\lambda_\varsigma$. A numbering of a Young diagram is a filling such that the numbers $1,\cdots ,\Summed$ appear exactly once. A row-strict Young tableau is a filling whose entries are weakly increasing down each column and strictly increasing along each row. Similarly, a column-strict Young tableau is a filling whose numbers are weakly increasing along each row and strictly increasing down each column. In particular, column-strict Young tableaux are often called semistandard Yound tableaux in the literature. 

Let $\text{RSYT}^\lambda_\varsigma$ and $\text{CSYT}^\lambda_\varsigma$ be the set of row-strict and column-strict Young tableaux of shape $\lambda$ and content $\varsigma$, respectively. It is well-known that we have $|\text{CSYT}^\lambda_\varsigma| = |\text{RSYT}^{\bar\lambda}_\varsigma|$. A standard Young tableau is a numbering which is strictly increasing across each row and down each column. The set of numberings and of standard Young tableaux of shape $\lambda$ will be denoted $\text{NB}_\lambda$ and $\text{SYT}^\lambda$, respectively.  In particular, we have $\text{NB}_\lambda = \text{Fill}^\lambda_{(1^n)}$.

$\lambda$ and $\varsigma$ need to satisfy the following condition to guarantee that $|\text{RSYT}^\lambda_\varsigma|$ (or, equivalently, $|\text{RSYT}^\lambda_\varsigma|$) is positive. Let $\varsigma^\text{ord}$ be the composition $\varsigma$ rearranged in decreasing order, i.e. a partition. We say that two partitions $\lambda$ and $\mu$ satisfy the dominance ordering relation $\lambda \geq \mu$ if, for all $i$,
$$\lambda_1 + \cdots  + \lambda_i \geq \mu_1 + \cdots  + \mu_i$$
where we possibly extend the sequences by zeros.

\begin{definition} \label{defspechtpolynomial}
    Let $S=(i_1,\cdots ,i_r)$ be a list of positive integers. We define the Vandermonde determinant $\Delta(S)$ of the variables $x_i$ for $i \in S$:
    $$\Delta(S):= \prod_{1\leq j < k \leq r} (x_{i_j}-x_{i_k}).$$
    If $r=1$ we say $\Delta(S)=1$. 
    
    Let $F \in \text{Fill}^\lambda_\varsigma$. The Specht polynomial associated with $F$ is the polynomial
    \begin{equation} \label{spechtfactorized}
    \mathcal P_F = \prod_c \Delta(F_{.,c}),
\end{equation}
where $c$ runs through the columns of $F$ and $F_{.,c}$ denotes the entries in the $c$-th column of $N$ listed from bottom to top.
\end{definition}
For instance we have 
$$\mathcal P_{\; \begin{ytableau} 
    1 & 1 & 4 \\
    3 & 3 \\
    2
\end{ytableau}} = (x_3-x_1)^2(x_2-x_1)(x_2-x_3).$$
Let us mention that in the literature the Specht polynomials are generally defined for a numbering $N \in \text{NB}^\lambda$; here we naturally extend their definition to the case of any filling $F \in \text{Fill}^\lambda_\varsigma$. 

We are now ready to define the space $\mathcal C^{(c=2)}_\varsigma$ of $W_3$ conformal blocks. 
\begin{definition} \label{defconformalblocks}
Let $T \in {\normalfont\text{Fill}^\pi_\varsigma}$ where $\pi$ is the partition \eqref{defpi}, and let $T^t$ be the transpose of $T$, that is, $T^t$ is the reflection of $T$ through its north-west and south-east diagonal. Moreover, let $\mathcal U_T: \mathcal H_d \to \mathbb R$ be defined by
\begin{equation} \label{definitionCB}
    \mathcal U_T(x_1,\cdots,x_d) = \lb \prod_{1\leq i<j\leq n} (x_j-x_i)^{-\frac{s_i s_j}{3}} \rb \mathcal P_{T^t}(x_1,\cdots,x_d).
\end{equation}
Then, the space of $W_3$ conformal blocks $\mathcal C^{(c=2)}_\varsigma$ is defined as follows:
\begin{align} \label{defCc=2}
   \normalfont \mathcal C^{(c=2)}_\varsigma = \text{span}\lb\mathcal U_T \; | \; T \in \text{Fill}^\pi_\varsigma \rb.
\end{align}
\end{definition}

\subsection{Main results}
We now present the three classes of results of the present paper in more detail.

\subsubsection{Analytic results} The core of our analytic results is that the space of $W_3$ conformal blocks $\mathcal C^{(c=2)}_\varsigma$ in \eqref{defCc=2} is a subspace of $\mathcal S^{(c=2)}_\varsigma$. Our first result towards this goal provides a basis for the space $\mathcal C^{(c=2)}_\varsigma$:
\begin{proposition} \label{theorem1}
  A basis for $\mathcal C^{(c=2)}_\varsigma$ is given by $\left\{\mathcal U_T \; | \; T \in {\normalfont  \text{RSYT}}^\pi_\varsigma \right\}$.
\end{proposition}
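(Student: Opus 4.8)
The plan is to remove the data common to every $\mathcal U_T$ and thereby reduce the statement to a classical fact about Specht polynomials: that the semistandard ones of a fixed content are a basis of the space they span.

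\textbf{Reduction.} The $T$-independent prefactor appearing in \eqref{definitionCB} does not vanish on $\mathcal H_d$, so multiplication by it is an $\mathbb R$-linear isomorphism from $\text{span}\{\mathcal P_{T^t}:T\in\text{Fill}^\pi_\varsigma\}$ onto $\mathcal C^{(c=2)}_\varsigma$, carrying $\{\mathcal P_{T^t}:T\in\text{RSYT}^\pi_\varsigma\}$ to $\{\mathcal U_T:T\in\text{RSYT}^\pi_\varsigma\}$. It therefore suffices to prove the statement at the level of Specht polynomials. Reindexing by the transpose $S:=T^t$, which is a content-preserving bijection $\text{Fill}^\pi_\varsigma\to\text{Fill}^{\bar\pi}_\varsigma$ sending $\text{RSYT}^\pi_\varsigma$ onto $\text{CSYT}^{\bar\pi}_\varsigma$, the claim becomes: the polynomials $\{\mathcal P_S:S\in\text{CSYT}^{\bar\pi}_\varsigma\}$ form a basis of $W:=\text{span}\{\mathcal P_S:S\in\text{Fill}^{\bar\pi}_\varsigma\}$.

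\textbf{Spanning.} Here I would use the two elementary relations satisfied by $\mathcal P_S$. First, $\mathcal P_S$ is alternating in the entries of any single column, so exchanging two entries of a column negates it and a repeated column entry forces $\mathcal P_S=0$; hence every $\mathcal P_S$ is, up to sign, equal to some $\mathcal P_{S'}$ with strictly increasing columns, or vanishes. Second, the Garnir relations rewrite a column-strict but not row-weakly-increasing $\mathcal P_{S'}$ as an integer combination of $\mathcal P_{S''}$ whose tableaux are strictly larger in a fixed dominance order on the row-reading words. Iterating this straightening—which preserves the content $\varsigma$ and terminates because the order is finite with maxima exactly the semistandard fillings—expresses an arbitrary $\mathcal P_S$ as a combination of $\{\mathcal P_S:S\in\text{CSYT}^{\bar\pi}_\varsigma\}$. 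Thus these span $W$, giving $\dim W\le|\text{CSYT}^{\bar\pi}_\varsigma|$.

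\textbf{Linear independence and conclusion.} It remains to show $\{\mathcal P_S:S\in\text{CSYT}^{\bar\pi}_\varsigma\}$ is linearly independent; this is the \emph{standard basis theorem} for Specht polynomials of fixed content, which I would invoke, or prove via the triangularity of the straightening law (equivalently, by identifying $W$ with the weight-$\varsigma$ space of the Schur module $\mathbb S^{\bar\pi}(\mathbb R^d)$, whose dimension is the Kostka number $K_{\bar\pi,\varsigma^{\mathrm{ord}}}$). Since $|\text{CSYT}^{\bar\pi}_\varsigma|=|\text{RSYT}^\pi_\varsigma|$ equals this Kostka number, independence together with the spanning bound shows the semistandard Specht polynomials are a basis of $W$; transporting back through the isomorphism of the reduction step then proves the proposition.

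\textbf{Main obstacle.} The substantive point is the linear independence. A naive leading-monomial argument in $\mathbb R[x_1,\dots,x_d]$ fails once the content has repeated letters, because a commutative monomial only records the total degree in each variable: for instance, the two semistandard fillings of shape $(3,1)$ with content $(2,1,1)$ give $\mathcal P_S=x_2-x_1$ and $x_3-x_1$, which are independent yet share the leading monomial $x_1$. One must therefore use the genuine straightening law / standard monomial theory, or the representation-theoretic dimension count, to certify that no further relations survive among the semistandard polynomials; the spanning step, by contrast, is the routine Garnir straightening.
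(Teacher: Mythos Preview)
Your proof is correct but takes a different route from the paper's. You appeal to the classical theory of Schur modules---Garnir straightening for spanning and the standard-basis theorem (equivalently, a Kostka-number dimension count on the weight space of $\mathbb S^{\bar\pi}(\mathbb R^d)$) for independence---which works for any shape. The paper instead gives a self-contained argument that exploits the three-column hypothesis. Spanning comes by evaluating the SYT basis of $P^{\bar\pi}$: a standard tableau either has two entries from the same fusion block $\llbracket p_k,p_k{+}s_k{-}1\rrbracket$ in one column, hence evaluates to zero, or collapses to some $\mathcal P_T$ with $T$ semistandard. Independence is proved by a \emph{monomial} argument after all: each $\mathcal P_T$ has lex-smallest monomial $\prod_i x_i^{r^T(i)-s_i}$ with coefficient $1$, and Lemma~\ref{lemmainject} shows these exponent vectors are pairwise distinct on $\text{CSYT}^{\bar\pi}_\varsigma$ precisely because $\bar\pi$ has three columns. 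So your ``main obstacle'' is correctly diagnosed for general shapes, but the paper sidesteps it not by abandoning monomial arguments but by choosing the right distinguished monomial (lex-trailing rather than lex-leading) together with the three-column restriction; in your own example of shape $(3,1)$ with content $(2,1,1)$ the trailing monomials are $x_3$ and $x_2$, which are distinct. Your approach buys generality at the price of importing heavier machinery; the paper's buys elementarity and self-containedness at the price of being shape-specific.
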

The proof of Proposition \ref{theorem1} requires algebraic techniques and follows directly from Proposition \ref{basisSpechtprop}, which proves linear independence of the relevant Specht polynomials.

Then, the main result of this article is the following
\begin{theorem} \label{theorem2}
  We have $\mathcal C^{(c=2)}_\varsigma \subseteq \mathcal S^{(c=2)}_\varsigma$.
\end{theorem}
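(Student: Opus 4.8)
The plan is to verify directly that every spanning function $\mathcal{U}_T$, $T\in\text{Fill}^\pi_\varsigma$, belongs to $\mathcal{S}^{(c=2)}_\varsigma$; since the operators in \eqref{defoperatorBPZ}, \eqref{defM} and the conditions \eqref{covariance}, \eqref{POW} are all linear, this gives $\mathcal{C}^{(c=2)}_\varsigma\subseteq\mathcal{S}^{(c=2)}_\varsigma$. First I would record that $c=2$ forces $\beta=1$ in \eqref{relationcandbeta}, hence $h=\tfrac13$, and substitute the resulting constants ($h+1=\tfrac43$, $5h+1=\tfrac83$, and so on) into \eqref{defoperatorBPZ} and \eqref{defM}. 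I then split the verification into three groups of increasing difficulty: the Möbius covariance \eqref{covariance} (equivalently the first-order identities \eqref{globalward1}--\eqref{globalward3}), the power bound \eqref{POW}, and finally the higher-order equations \eqref{W3BPZ} and \eqref{ward1}--\eqref{ward5}.

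For \eqref{covariance} and \eqref{POW} I would argue geometrically. Writing $\mathcal{U}_T=P\,\mathcal{P}_{T^t}$ with $P=\prod_{i<j}(x_j-x_i)^{-s_is_j/3}$, each difference transforms covariantly as $\varphi(x_j)-\varphi(x_i)=\varphi'(x_i)^{1/2}\varphi'(x_j)^{1/2}(x_j-x_i)$, and by \eqref{spechtfactorized} every column factor of $\mathcal{P}_{T^t}$ is a Vandermonde built from such differences. Collecting the Jacobian factors, the total exponent of $\varphi'(x_k)$ is $-\tfrac{s_k}{6}(n-s_k)+\tfrac{s_k}{2}\bigl(\tfrac n3-1\bigr)=\tfrac{s_k}{6}(s_k-3)$, which equals $-\tfrac13=-h$ for both $s_k=1$ and $s_k=2$; this yields exactly $\mathcal{U}_T(\varphi(x_1),\dots,\varphi(x_d))=\prod_k\varphi'(x_k)^{-h}\,\mathcal{U}_T(x_1,\dots,x_d)$, i.e. \eqref{covariance}, and hence \eqref{globalward1}--\eqref{globalward3}. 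The cancellation of the charge-dependent part is special to $c=2$, to the prefactor exponent $-s_is_j/3$, and to the shape $\bar\pi=(3^{n/3})$. The bound \eqref{POW} is then immediate: $\mathcal{U}_T$ is a finite product of real powers of the differences $|x_j-x_i|$ with exponents bounded below (by $-\tfrac43$) and overall polynomial growth.

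The core of the proof is \eqref{W3BPZ} and \eqref{ward1}--\eqref{ward5}. Here I would conjugate each operator by the prefactor, using $P^{-1}\partial_{x_i}P=\partial_{x_i}+u_i$ with $u_i=\partial_{x_i}\log P=-\tfrac{s_i}{3}\sum_{k\ne i}\frac{s_k}{x_i-x_k}$, so that $\mathcal{D}^{(m)}_{\varsigma,2}\mathcal{U}_T=P\,(\widetilde{\mathcal{D}}^{(m)}\mathcal{P}_{T^t})$ and likewise for $\mathcal{M}^{(m)}_{\varsigma,2}$, where $\widetilde{\mathcal{D}}^{(m)}=P^{-1}\mathcal{D}^{(m)}_{\varsigma,2}P$ is a differential operator with rational coefficients. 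The theorem is then equivalent to the identities $\widetilde{\mathcal{D}}^{(m)}\mathcal{P}_{T^t}=0$ and $\widetilde{\mathcal{M}}^{(m)}\mathcal{P}_{T^t}=0$ for all $T$. I would prove these by exploiting the factorized, column-alternating structure \eqref{spechtfactorized}: since each column factor is a Vandermonde, $\mathcal{P}_{T^t}$ is alternating under permutations of variables sharing a column, which forces the vanishing of large families of terms on the coincidence loci $x_i=x_j$ and lets one reorganize the apparent poles of $\widetilde{\mathcal{D}}^{(m)}$ into residue contributions that cancel at $c=2$.

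The main obstacle is exactly this last cancellation: showing, uniformly in $T$, that the conjugated third-order operators annihilate the Specht polynomials. A pure degree count does not suffice, since $\deg\mathcal{P}_{T^t}$ grows with $n$ while each operator lowers total degree only by three; the vanishing is a genuine polynomial identity that uses the precise value $h=\tfrac13$. To make it tractable I would (i) reduce to a set of representatives, using that the span $\{\mathcal{P}_{T^t}\}$ carries the action of the Young subgroup $\mathfrak{S}_{d_1}\times\mathfrak{S}_{d_2}$ permuting the variables of equal content, under which the whole system \eqref{W3BPZ}--\eqref{ward5} is invariant, and (ii) localize the computation near each wall $x_i=x_j$, where the alternating property collapses the third-order operator to a low-order model problem whose cancellation can be checked explicitly. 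I would expect \eqref{ward1} ($m=1$), whose operator $\mathcal{M}^{(1)}_{\varsigma,2}$ carries no explicit powers of $x$, to serve as the cleanest test case and to guide the general argument; the linear independence from Proposition \ref{theorem1} then upgrades the spanning computation to a statement about the full space.
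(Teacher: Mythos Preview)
Your treatment of \eqref{covariance} and \eqref{POW} is correct and coincides with the paper's argument (Proposition~\ref{propcovariance}): the Jacobian bookkeeping giving the exponent $\tfrac{s_k}{6}(s_k-3)=-\tfrac13$ is exactly what the paper does, and the power bound is indeed immediate from the product form.

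For the Ward identities \eqref{ward1}--\eqref{ward5} your conjugation-by-$P$ strategy is also what the paper does (Proposition~\ref{propW3WardUT}), but you should be aware that carrying it out is not a matter of ``alternating symmetry plus residue cancellation'': the paper packages the conjugated operator in terms of the exponents $\alpha_T(i,j)=\psi_{T^t}(i,j)-\tfrac{s_is_j}{3}$ and then needs a list of nontrivial pointwise and summed identities (Lemmas~\ref{lemmasmartidentities} and~\ref{lemmasumidentities}), including the quadratic relation $\alpha_T(i,j)^2=\tfrac29+\tfrac{q_iq_j}{3}\alpha_T(i,j)$ and three separate summation identities, to make the cases $m=3,4,5$ vanish. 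Your outline does not yet isolate these ingredients.

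The genuine gap is in \eqref{W3BPZ}. Your plan is to conjugate $\mathcal D^{(m)}_{\varsigma,2}$ through $P$ for general $\varsigma$ and then argue by symmetry reduction and ``localization near each wall $x_i=x_j$''. Neither step is made precise, and it is not clear how localization at a single wall controls the global polynomial identity $\widetilde{\mathcal D}^{(m)}\mathcal P_{T^t}=0$, which involves triple sums over all indices and signs $q_i=(-1)^{s_i+1}$ that break the naive permutation symmetry. The paper takes a different, more structured route: it first establishes a clean third-order identity for Specht polynomials with three columns (Lemma~\ref{propeqspecht3columns}), which via conjugation gives \eqref{W3BPZ} in the special case $\varsigma=(1^n)$ (Lemma~\ref{lemmaW3BPZvarsigmaequals1}); it then proves a \emph{fusion lemma} (Lemma~\ref{lemmafusion}) showing that whenever $s_j=s_{j+1}=1$, the leading Frobenius coefficient of a solution as $x_{j+1}\to x_j$ satisfies \eqref{W3BPZ} for the fused signature $\varsigma'$ with a $2$ in slot $j$. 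Iterating this, together with the evaluation representation of $\mathcal U_T$ (Lemma~\ref{lemmasecondrepCBs}), reduces general $\varsigma$ to $\varsigma=(1^n)$. This fusion step is the missing idea in your proposal; without it (or a fully worked direct computation for mixed signatures, which you have not supplied), the argument for \eqref{W3BPZ} is incomplete. Finally, your closing remark that Proposition~\ref{theorem1} ``upgrades the spanning computation'' is unnecessary: once each $\mathcal U_T$ is shown to lie in $\mathcal S^{(c=2)}_\varsigma$, linearity alone gives the inclusion.
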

Given Proposition \ref{theorem1}, the proof of Theorem \ref{theorem2} consists of proving that each basis element satisfies \eqref{W3BPZ}, \eqref{covariance}, \eqref{ward1}-\eqref{ward5} and \eqref{POW}. While the property \eqref{POW} readily follows from Definition \ref{defconformalblocks}, the proofs of the other equations are relegated to Section \ref{sectionprooftheorem1}, because they involve rather lengthy computations. The proof of \eqref{W3BPZ} we present reveals interesting features about the $W_3$ conformal blocks and the Specht polynomials. Specifically, on the first hand, for any central charge, we prove that solutions of \eqref{W3BPZ} for any $\varsigma$ can be constructed from solutions of \eqref{W3BPZ} for $\varsigma=(1^n)$. This is an explicit realization of the fusion procedure in CFT. On the other hand, we prove that the Specht polynomials associated with numberings with three columns, which are not necessarily rectangular, satisfy a system of third order PDEs. In \cite{LPR24}, we also established with Peltola a system of second-order PDEs in the two-column case. Building on the knowledge from these two and three-column cases, we conjecture a system of $M$th-order PDEs satisfied by Specht polynomials associated with numberings with $M$ columns which are not necessarily rectangular (see Conjecture \ref{cjMcolumns}). As far as we know, this aspect of Specht polynomials has not been studied in the literature. Finally, the proofs of \eqref{covariance} and \eqref{ward1}-\eqref{ward5} are elementary and consist of straightforward, yet involved computations (see Propositions \ref{propcovariance} and \ref{propW3WardUT}). 

Proposition \ref{theorem1} implies that $\text{dim}\;\mathcal C^{(c=2)}_\varsigma$ is the number of row-strict Young tableaux of shape $\pi$ and content $\varsigma$, which is a Kostka number. Thus, Theorem \ref{theorem2} provides a lower bound for $\text{dim}\;\mathcal S^{(c=2)}_\varsigma$. In fact, this dimension is exactly what we expect from a CFT viewpoint, because the fusion rules of the $W_3$ degenerate fields coincide with the ones of the $\mathfrak{sl}_3$ fundamental representations \cite{FL}. This leads us to the following 
\begin{cj}
    $\mathcal C^{(c=2)}_\varsigma = \mathcal S^{(c=2)}_\varsigma$.
\end{cj}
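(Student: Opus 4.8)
The plan is to upgrade the inclusion of Theorem~\ref{theorem2} to an equality by establishing the matching a priori upper bound on the dimension of the full solution space. Since Theorem~\ref{theorem2} already gives $\mathcal C^{(c=2)}_\varsigma \subseteq \mathcal S^{(c=2)}_\varsigma$, and Proposition~\ref{theorem1} identifies $\dim \mathcal C^{(c=2)}_\varsigma = |\text{RSYT}^\pi_\varsigma|$, it suffices to prove that
\[
\dim \mathcal S^{(c=2)}_\varsigma \;\le\; |\text{RSYT}^\pi_\varsigma|,
\]
after which finite-dimensionality and the known inclusion force equality. First I would exploit the covariance property \eqref{covariance} to pass to the Möbius quotient of $\mathcal H_d$, normalizing three coordinates (say $x_1 \to 0$, $x_{d-1} \to 1$, $x_d \to \infty$) and thereby reducing the system to one in $d-3$ effective variables on a bounded region with a finite boundary stratification. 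This is the step that turns the problem into the analysis of a regular-singular system near the collision loci, where a dimension count becomes feasible.

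The core is a Frobenius/indicial analysis of the reduced system along the diagonals $x_j = x_{j+1}$, which are the regular singular loci of the third-order operators in \eqref{defoperatorBPZ} and of the operators \eqref{defM}. At $c=2$ one has $\beta = 1$ and hence $h=\tfrac13$, and I would substitute $x_{j+1}-x_j = \epsilon \to 0^+$ into $\mathcal D^{(j)}_{\varsigma,c}$ and $\mathcal M^{(m)}_{\varsigma,c}$ and extract the leading operator. Its indicial equation, a cubic in the local exponent $\alpha$, should have admissible roots of the form $\alpha = -\tfrac{s_j s_{j+1}}{3} + k$ with $k$ a nonnegative integer, matching the leading singularities of the explicit blocks $\mathcal U_T$ coming from the prefactor $\prod (x_j-x_i)^{-s_is_j/3}$ in \eqref{definitionCB}. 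The decisive point is that the power-law bound \eqref{POW} rules out all but finitely many of these exponents: I would show the surviving ones are precisely the fusion channels of $V_{\omega_{s_j}} \otimes V_{\omega_{s_{j+1}}}$ as $\mathfrak{sl}_3$-representations, and, crucially, that no logarithmic (resonant) solution is compatible with \eqref{POW} at $c=2$.

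The inductive fusion step then bounds the dimension. For each admissible channel, the leading coefficient of the local expansion at $x_j = x_{j+1}$ is a function of the remaining coordinates, and I would prove that it solves the same $W_3$-system \eqref{W3BPZ}, \eqref{covariance}, \eqref{ward1}-\eqref{ward5} for $d-1$ points, with the merged pair replaced by a single point carrying the spin that labels the channel. This is a rigorous realization of the fusion procedure announced after Theorem~\ref{theorem2}. Inducting on $d$ (equivalently on $n$), the restriction-to-leading-coefficient map embeds $\mathcal S^{(c=2)}_\varsigma$ into a direct sum, over admissible channels, of reduced solution spaces, giving
\[
\dim \mathcal S^{(c=2)}_\varsigma \;\le\; \sum_{\text{channels}} \dim \mathcal S^{(c=2)}_{\varsigma'}.
\]
This recursion is exactly the branching recursion for the Kostka number $|\text{RSYT}^\pi_\varsigma|$, because row-strict tableaux of shape $\pi=(\tfrac n3,\tfrac n3,\tfrac n3)$ and content $\varsigma$ are in bijection with lattice paths recording a compatible sequence of fusion channels; the base case $d$ small being checked directly.

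The main obstacle will be the rigorous local analysis at the special value $c=2$. Unlike the generic/irrational central charge used in the Virasoro Temperley-Lieb story, here the indicial roots of the third-order operator are rationally related and can differ by integers, so the Frobenius theory is in the resonant regime and a priori admits logarithmic and confluent solutions. Excluding these uniformly using \eqref{POW}, while simultaneously controlling higher and coincident collisions (several pairs merging at once) and verifying that the overdetermined compatibility of the $d$ BPZ equations with the eight Ward identities survives the passage to the reduced $(d-1)$-point system, is the delicate heart of the argument. Establishing that the reduced leading-order equations are precisely the $W_3$-system for the fused configuration — with no spurious constraints lowering the channel dimensions — is what makes the recursion sharp and yields the matching bound.
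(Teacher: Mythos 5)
There is a genuine gap here — in fact several — and it is worth noting first that the statement you are proving is stated in the paper as a \emph{conjecture}: the authors prove only the inclusion $\mathcal C^{(c=2)}_\varsigma \subseteq \mathcal S^{(c=2)}_\varsigma$ (Theorem \ref{theorem2}) and explicitly leave the matching upper bound on $\dim \mathcal S^{(c=2)}_\varsigma$ open, remarking that in the Virasoro case \cite{FK1,FK2,FK3} this bound rested on hypoellipticity of the second-order operators. Your proposal is the natural higher-rank transcription of that Flores--Kleban program, but it is a program, not a proof, and its decisive steps are precisely the ones the paper identifies as missing. Concretely: your Frobenius/fusion analysis presupposes that an arbitrary element of $\mathcal S^{(c=2)}_\varsigma$ admits an expansion of the form \eqref{frobeniusansatz} with smooth coefficients satisfying the uniform derivative bounds \eqref{uniformbound} --- these are exactly the \emph{hypotheses} of the paper's fusion Lemma \ref{lemmafusion}, which is why the paper can apply it only to the explicit algebraic functions $\mathcal U_T$. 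For a general solution one knows nothing beyond \eqref{POW}; the operators \eqref{defoperatorBPZ} are third order with principal part a single $q_m\partial_{x_m}^3$, for which no analogue of the second-order hypoellipticity theory is available, and you supply none. The same regularity issue undermines the injectivity of your restriction-to-leading-coefficients map: to conclude $\dim \mathcal S^{(c=2)}_\varsigma \leq \sum_{\text{channels}} \dim \mathcal S^{(c=2)}_{\varsigma'}$ you need that a solution whose local expansions vanish in all channels vanishes identically, which again requires analyticity or a unique-continuation statement you have not established.

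Second, your key claim that the power-law bound \eqref{POW} excludes logarithmic (resonant) solutions is false as stated. At $c=2$ one has $\beta=1$, $h=\tfrac13$, and the indicial roots recorded in the paper's remark after Lemma \ref{lemmafusion} are $X_1=-h=-\tfrac13$, $X_2=\tfrac{1+h}2=\tfrac23$, $X_3=\tfrac{5+h}2=\tfrac83$: they differ by the integers $1$ and $2$, so the system sits in the fully resonant regime, exactly as you fear. But a factor $\log(x_{j+1}-x_j)$ multiplying a power is compatible with \emph{any} bound of the form \eqref{POW}, since logarithms grow slower than every power; so \eqref{POW} by itself cannot rule out log terms, and some genuinely different mechanism (in the Virasoro case this required delicate integral estimates, not the power bound alone) is needed. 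Until (i) regularity/expandability of arbitrary solutions of the third-order system and (ii) exclusion of resonant logarithmic branches are actually proved, the recursion matching the Kostka-number branching remains heuristic, and the conjecture remains open --- as the paper itself says.
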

Proving this conjecture requires proving an upper bound for $\operatorname{dim}{\; \mathcal S^{(c=2)}_\varsigma}$. In the lower-rank case of the Virasoro-BPZ equations studied in \cite{FK1,FK2,FK3}, an upper bound for the dimension of the solution space was proved, based on the hypoellipticity of the second order PDEs.

Apart from the proof of Theorem \ref{theorem2}, in Section \ref{sectionasy} we also provide explicit asymptotic properties for the basis elements of Proposition \ref{theorem1}. These properties imply that this basis is not the comb basis often studied in the physics literature \cite{FL}. It would be interesting to relate the two bases. 

Finally, although in Definition \ref{defconformalblocks} the Young diagram in the $W_3$-conformal blocks is assumed to be rectangular, our proof of Equations \eqref{W3BPZ} remains valid for nonrectangular Young diagrams. However, by analogy with the case of Virasoro conformal blocks \cite{KP1}, we expect that some of the eight remaining PDEs need to be modified, and some of them will not be satisfied at all. 

\subsubsection{Algebraic results}

In Section \ref{sec:algebra}, we describe several actions on the space $\mathcal C^{(c=2)}_\varsigma$. Our first result, which is a consequence of Proposition \ref{prop:repfusedsym}, uses fusion arguments to prove that $\mathcal C^{(c=2)}_\varsigma$ is an irreducible representation of the fused Hecke algebra at $q=-1$ introduced in \cite{CA}. We then show in Proposition \ref{prop:kupalgrep} that this representation descends to a certain quotient that we identify with a diagram algebra made from Kuperberg's $\mathfrak{sl}_3$ webs. Such a diagram algebra, denoted $\text{K}^\varsigma$, is a special case of spaces of webs introduced by Kuperberg in \cite{Ku}, which we refer to as Kuperberg algebra. We emphasize that the algebra $\text{K}^\varsigma$ is a higher-rank analog of the Temperley-Lieb algebra with fugacity parameter $2$.  

\subsubsection{Probabilistic results} 

We now formulate our probabilistic results which relate Kenyon and Shi's connection probabilities in the triple dimer model in terms of the $W_3$ conformal blocks of Definition \ref{defconformalblocks}. 

Let us first briefly describe the known results for the double dimer model \cite{KW1,KW2}. It is well-known that Virasoro conformal blocks of the simplest nontrivial degenerate fields lead to multiple SLE partition functions \cite{Pe19}. Moreover, the scaling limit of interfaces in the double dimer model is expected to be described by SLE$_\kappa$ at $\kappa=4$. Then, scaling limits of connection probabilities in the double dimer model are given by ratios of multiple SLE$_4$ partition functions \cite{KW1}. 

As described in the introduction, Kenyon and Shi constructed higher-rank analogs of these connection probabilities in the triple dimer model \cite{KS} where the relevant geometrical objects, replacing curves, are given by $\mathfrak sl_3$ webs. In Section \ref{connproba}, we exhibit a basis different from the one of Proposition \ref{theorem1} that we denote $\{\mathcal Z_\lambda,\; \lambda\in \Lambda^\varsigma\}$, where $\Lambda^\varsigma$ is the set of so-called reduced webs. This new basis is defined by a certain change of basis square matrix $M$, arising from seeing webs as invariant tensors. Then, an informal description of the main result of this section is the following
\begin{theorem}
    
    In the triple dimer model in the upper half plane, the scaling limit of the connection probability with topological configuration $\lambda$ and boundary conditions $T$ is given by $M_{T\lambda }\;\frac{\mathcal{Z}_\lambda}{\mathcal U_{T}}$.

\end{theorem}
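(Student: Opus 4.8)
The plan is to transpose the strategy that Kenyon and Wilson used for the double dimer model \cite{KW1} to the $\mathfrak{sl}_3$ setting, where webs replace link patterns, feeding in the analytic input of Kenyon and Shi \cite{KS} together with the web calculus recalled in Section \ref{sec:algebra}. First I would recall from \cite{KS} the explicit scaling limit of the triple dimer connection probability in the upper half plane, with boundary insertions at $x_1 < \cdots < x_d$ carrying the data $\varsigma$ (each $s_i \in \{1,2\}$ recording which fundamental representation $V_{s_i}$ of $\mathfrak{sl}_3$ is inserted, equivalently whether the point is a source or a sink) and topological type a reduced web $\lambda \in \Lambda^\varsigma$. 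Writing $P_\lambda(x_1,\dots,x_d)$ for the unnormalized scaling limit of the weight of configurations of topology $\lambda$, the connection probability is the ratio $P_\lambda/\sum_{\lambda'}P_{\lambda'}$, so it suffices to identify the $P_\lambda$ up to a common, topology-independent factor.

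Next I would use the identification from Section \ref{sec:algebra} of the reduced webs $\Lambda^\varsigma$ with a basis of the $\mathfrak{sl}_3$-invariant space of $\bigotimes_{i} V_{s_i}$, which at the fugacity corresponding to $c=2$ is the Kuperberg algebra module $\mathcal C^{(c=2)}_\varsigma$. The matrix $M$ is the transition matrix writing the Specht basis $\{\mathcal U_T : T \in \text{RSYT}^\pi_\varsigma\}$ of Proposition \ref{theorem1} in the web basis, so that $\mathcal U_T = \sum_\lambda M_{T\lambda}\,\mathcal Z_\lambda$; one then defines $\mathcal Z_\lambda$ by inverting $M$. Nondegeneracy of the web bilinear pairing at fugacity $2$ guarantees that $M$ is invertible, hence $\{\mathcal Z_\lambda : \lambda \in \Lambda^\varsigma\}$ is again a basis of $\mathcal C^{(c=2)}_\varsigma$.

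The core of the proof is the identity $P_\lambda = M_{T\lambda}\,\mathcal Z_\lambda$ for the boundary data recorded by $T$. I would establish it by showing that both sides lie in $\mathcal C^{(c=2)}_\varsigma$ and agree. That $P_\lambda$ is a $W_3$ conformal block should follow from the conformal covariance \eqref{covariance}, the global Ward identities \eqref{ward1}-\eqref{ward5}, and the null-state equations \eqref{W3BPZ} that the dimer observable inherits from the harmonicity and martingale structure of the model, so that membership in the span is controlled by Theorem \ref{theorem2}. The coefficients are then pinned down by the boundary-point collision asymptotics of Section \ref{sectionasy}, which on the web side are dictated by the evaluation rules of Kuperberg's spider \cite{Ku}. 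Summing over $\lambda$ and using $\mathcal U_T = \sum_\lambda M_{T\lambda}\mathcal Z_\lambda$ yields $\sum_\lambda P_\lambda = \mathcal U_T$, so the normalization is exactly $\mathcal U_T$ and the connection probability equals $M_{T\lambda}\,\mathcal Z_\lambda/\mathcal U_T$.

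The hard part is this matching step: rigorously bridging Kenyon and Shi's analytic amplitude $P_\lambda$, built from coupling functions and combinatorial weights of dimer covers, with the algebraic object $M_{T\lambda}\,\mathcal Z_\lambda$, built from Specht polynomials via $\mathfrak{sl}_3$ web calculus. Concretely, one must show that the combinatorial weights in Kenyon and Shi's expansion coincide with the entries of $M$ produced by the web pairing at fugacity $2$, and that the collision limits of $P_\lambda$ match those of $\mathcal Z_\lambda$. I expect that verifying that the Kenyon--Shi observable satisfies the third-order equations \eqref{W3BPZ}, rather than only the classical covariance and Ward identities, will be the most delicate point, and it is precisely where the special $c=2$ structure and Theorem \ref{theorem2} must be brought to bear.
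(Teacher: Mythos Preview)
Your plan has a genuine gap. You propose to show that the dimer amplitude $P_\lambda$ lies in $\mathcal C^{(c=2)}_\varsigma$ by verifying that it satisfies the system \eqref{W3BPZ}, \eqref{covariance}, \eqref{ward1}--\eqref{ward5} and then invoking Theorem~\ref{theorem2}. But Theorem~\ref{theorem2} only gives the inclusion $\mathcal C^{(c=2)}_\varsigma \subseteq \mathcal S^{(c=2)}_\varsigma$; the reverse inclusion is the open Conjecture in Section~\ref{section2}. So even if you succeeded in checking all $d+8$ PDEs for the dimer observable (itself a hard and unmotivated computation), you would not be able to conclude that it lies in the span of the $\mathcal U_T$. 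Your coefficient-matching via the asymptotics of Section~\ref{sectionasy} then has nothing to stand on.

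The paper's route is entirely different and far more direct, bypassing the PDEs completely. The key input is Proposition~\ref{propIan} (from \cite{FLL}): evaluating the web-valued partition function on the vector $e_T$ factorizes it as a product $Z(e_T)=Z_1Z_2Z_3$ of three ordinary \emph{single} dimer partition functions, one for each row of $T$. Each $Z_a/Z_D$ has a standard scaling limit given by a Cauchy determinant in the boundary points, and an explicit computation (Proposition~\ref{prop:scalinglimit}) shows that the product equals $\mathcal F\cdot\mathcal U_T$ with $\mathcal F$ a $T$-independent prefactor. Once you know $\lim Z(e_T)/Z_D^3=\mathcal F\,\mathcal U_T$ for every $T$, the rest is linear algebra: expand $Z=\sum_\lambda C_\lambda\,\lambda$ over reduced webs, invert $M$ to get $C_\lambda=\sum_U M^{-1}_{\lambda U}Z(e_U)$, and take the ratio $C_\lambda M_{T\lambda}/Z(e_T)$. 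The common factor $\mathcal F$ cancels and you obtain $M_{T\lambda}\,\mathcal Z_\lambda/\mathcal U_T$. No PDE is ever checked for a dimer quantity; the identification with conformal blocks comes out of the Cauchy-determinant computation, not from analytic characterization.
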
 
A more precise formulation of this Theorem is presented in Theorem \ref{theorem:dimer}. 

We finally remark that the formulation of connection probabilities in Theorem \ref{theorem:dimer} is directly analogous to the double dimer case \cite{KW1}. In particular, the functions $\mathcal{Z}_\lambda,\; \lambda\in \Lambda^\varsigma$ may be thought of as "pure partition functions", and the matrix $M$ is the higher-rank analog of the incidence matrix introduced in \cite{KW2}. However, whereas the entries of the incidence matrix of \cite{KW2} always equal $0$ or $1$, which indicates whether a topological configuration is possible or not given prescribed boundary conditions, the entries of our matrix $M$ can take other nonnegative values.



\section{Algebras acting on the space $\mathcal C_\varsigma^{(c=2)}$} \label{sec:algebra}
In this section, we study algebraic structures associated with the space $\mathcal C_{\varsigma}^{(c=2)}$ of Definition \ref{defconformalblocks}. 

\subsection{Irreducible modules of the Symmetric groups}

We begin by recalling facts about Specht polynomials and the representation theory of the symmetric groups. Let $\mathfrak{S}_\Summed$ be the symmetric group of $\Summed$ letters and $\mathbb C[\mathfrak{S}_\Summed]$ its group algebra. $\mathbb C[\mathfrak{S}_\Summed]$ is the algebra generated by the transpositions $\tau_i = (i,i+1) \in \mathfrak{S}_\Summed$ for $i=1,\cdots ,\Summed-1$ with relations
\begin{align}
   \nonumber &\tau_i^2=1 \quad i\in\llbracket 1, n-1\rrbracket\\
   \nonumber &\tau_i\tau_{i+1}\tau_i=\tau_{i+1}\tau_i\tau_{i+1}\quad i\in\llbracket 1, n-2\rrbracket\\
   \nonumber &\tau_i\tau_j=\tau_j\tau_i\quad \text{for } |j-i|>1.
\end{align}

The symmetric group $\mathfrak{S}_\Summed$ acts linearly on $\mathbb C[x_1,x_2,\cdots ,x_\Summed]$ by permutation of letters. A complete set of irreducible finite dimensional representations are given by so called Specht modules $V^\lambda$. They can be realized in various ways, one of them being as spaces of Specht polynomials. More precisely, it was showed in \cite[Theorem 1.1]{P75} that the space
\begin{equation} \label{defPlambda}
    P^\lambda := \text{span}\{\mathcal P_N: \; N \in \text{NB}_\lambda \} 
\end{equation}
is an irreducible $\mathfrak{S}_\Summed$-module with basis $\{\mathcal P_T \; | \; T \in \text{SYT}^\lambda\}$ and $P^\lambda\cong V^\lambda$.

\subsection{Fused symmetric groups} \label{section3p2}

In order to define various actions on $\mathcal C_{\varsigma}^{(c=2)}$, we prove some preliminary results on the Specht polynomials. Set $s_0=0$ and define
\begin{equation} \label{defqi}
    p_i = 1+ \sum_{j=0}^{i-1} s_j, \qquad i=1,\cdots ,d+1.
\end{equation}

Consider the group
\begin{align*}
    Q_\lambda = \mathfrak S_{\bar \lambda_1} \times \mathfrak S_{\bar \lambda_2} \times \cdots  \times \mathfrak S_{\bar \lambda_k}\subset \mathfrak S_n
\end{align*} where $\bar \lambda_1,\cdots ,\bar \lambda_k$ are the parts of the partition $\bar \lambda$ (in particular, $\sum_i \bar \lambda_i=n$). $Q_\lambda$ acts on $\text{Fill}^\lambda_\varsigma$ by permuting entries of a filling such that each factor $\mathfrak S_{\bar \lambda_i}$ permutes entries in the $i$th column. 
\begin{example}
    Let $T = \begin{ytableau}
        1 & 3 \\
        1 & 3 \\
        2 & 2
    \end{ytableau}$. In this case $Q_\lambda \cong \mathfrak S_3 \times \mathfrak S_3$. For instance, the element $\sigma = (13) \times \text{Id} \in Q_\lambda$ exchanges the entries lying in the first row, first column and third row, first column. Therefore,
    $\sigma.T = \begin{ytableau}
        2 & 3 \\
        1 & 3 \\
        1 & 2
    \end{ytableau}.$ Similarly, the permutation $\sigma = \text{Id} \times (12)$ leaves $T$  unchanged because it permutes two identical entries.
\end{example}
The group orbit of $F \in \text{Fill}^\lambda_\varsigma$ under the action of $Q_\lambda$ is denoted $Q_\lambda.F$. 

\begin{lemma} 
    Let $T\in {\normalfont\CSYT}$. The Specht polynomial $\mathcal P_T$ of Definition \ref{defspechtpolynomial} can be rewritten as follows:
\begin{equation} \label{spechtmonomials}
    \mathcal P_T =  \sum_{U \in Q_\lambda.T} \text{sgn}(\sigma_{T;U}) \prod_{i=1}^d x_{i}^{r^U(i)-s_i}.
\end{equation}
where $\sigma_{T;U}\in Q_\lambda$ is a permutation sending $T$ to $U$, $\text{sgn}(\sigma_{T;U})$ is its sign, and $r^U(i)$ denotes the sum of the row numbers of the entries $i$ in $U$, where we count row numbers from top to bottom.
\end{lemma}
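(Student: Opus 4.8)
The plan is to expand the product of Vandermonde determinants defining $\mathcal{P}_T$ column by column, and then to reorganize the resulting signed sum of monomials according to the orbit $Q_\lambda.T$.

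First I would rewrite the Vandermonde attached to a single column. Fix column $c$ of $T$, of length $r_c = \bar\lambda_c$, and list its entries from bottom to top as $i_1,\dots,i_{r_c}$, so that $i_j$ sits in row $r_c-j+1$ counted from the top. The classical Vandermonde identity gives
$$\Delta(F_{.,c}) = \prod_{1 \le j < k \le r_c}(x_{i_j}-x_{i_k}) = \det\!\big((x_{i_j})^{\,r_c-k}\big)_{j,k=1}^{r_c} = \sum_{\tau_c \in \mathfrak{S}_{r_c}} \text{sgn}(\tau_c)\prod_{j=1}^{r_c}(x_{i_j})^{\,r_c-\tau_c(j)}.$$
Taking the product over all columns, $\mathcal{P}_T = \prod_c \Delta(F_{.,c})$ becomes a sum over tuples $(\tau_1,\dots,\tau_k) \in Q_\lambda$, each contributing the global sign $\prod_c \text{sgn}(\tau_c) = \text{sgn}(\tau_1,\dots,\tau_k)$.

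Next I would identify each tuple with an element of the orbit. Since $T \in \CSYT$ has strictly increasing columns, the entries within each column are distinct, so $Q_\lambda$ acts freely on $T$ and the orbit map $\sigma \mapsto \sigma.T$ is a bijection $Q_\lambda \xrightarrow{\sim} Q_\lambda.T$; in particular $\sigma_{T;U}$ is well defined. The core bookkeeping step is to check that the monomial attached to $(\tau_c)_c$ is exactly $\prod_i x_i^{\,r^U(i)-s_i}$ with $U = \sigma.T$. To see this, observe that the exponent $r_c-\tau_c(j)$ carried by $x_{i_j}$ equals $\rho-1$, where $\rho = r_c-\tau_c(j)+1$ is precisely the row (from the top) into which $\sigma$ moves $i_j$; equivalently, in bottom-to-top coordinates $\sigma$ permutes the positions of column $c$ by $\tau_c$, whence $\text{sgn}(\sigma_{T;U}) = \text{sgn}(\tau_1,\dots,\tau_k)$ agrees with the global sign above. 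Summing the exponents of a fixed variable $x_i$ over all its occurrences, and using that the number of occurrences equals $s_i$, the total power of $x_i$ collapses to $\big(\sum \rho\big)-s_i = r^U(i)-s_i$. This establishes \eqref{spechtmonomials}.

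The main obstacle is purely one of conventions: one must fix consistently the bottom-to-top listing used in $\Delta$, the top-to-bottom numbering used in $r^U(i)$, and the resulting identification $\rho = r_c-\tau_c(j)+1$, so that the exponent count and the sign come out correctly and simultaneously. Once this dictionary is pinned down, the identity is a term-by-term matching, and I would sanity-check it on a single column such as $(1,2,3)^t$ — verifying that the orientation of the Vandermonde determinant and the sign of $\sigma_{T;U}$ are aligned — before committing to the general argument.
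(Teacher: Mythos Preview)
Your proposal is correct and is precisely the argument the paper has in mind: the paper's proof is the single sentence ``Each term corresponds to a summand obtained by expanding the products that are Vandermonde determinants,'' and you have unpacked exactly that expansion, matching each $(\tau_c)_c \in Q_\lambda$ with the orbit element $U$ it produces and tracking both the sign and the exponents. The only place to be careful is the one you already flagged --- aligning the bottom-to-top listing in $\Delta$ with the top-to-bottom row count in $r^U(i)$ --- and your identification $\rho = r_c - \tau_c(j) + 1$ handles it correctly.
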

\begin{proof}
    Each term corresponds to a summand obtained by expanding the products that are Vandermonde determinants.
\end{proof}

\begin{definition} \label{defTtilde}
Let $T\in {\normalfont\CSYT}$. To $T$ we associate a standard tableau $\tilde T$ injectively as follows. First, we relabel the entry $k$ of $T$ by $p_k$ (see \eqref{defqi}) for $k=1,\cdots ,n$. This gives a tableau $T'$. Second, we construct a word $w$ by reading the entries of $T'$ from left to right, row by row beginning with the topmost one. Third, we construct a standard tableau $\tilde T$ by relabelling the entry $l$ of $T'$ by $l+u$, where $u$ is the number of times the letter $l$ has previously appeared in $w$. 

Let $T\in {\normalfont\RSYT}$, we define $\tilde T$ in the same way except that we construct the word $w$ by reading the entries of $T'$ from top to bottom, column by column beginning with the leftmost one. 

Remark that $\Tilde{T}^t=\widetilde{T^t}$ for $T\in {\normalfont\CSYT}\cup{\normalfont\RSYT}$.
\end{definition}
\begin{definition} \label{remarkeval}
Let $f$ be a complex valued function defined on a domain $U\subset \mathbb C^n$. Suppose that $f$ can be extended by continuity on a subset $U'\subset\mathbb C^n$ such that $\{(x_1,\cdots, x_n)\in U',\;  x_{p_k}=x_{p_k+s_k-1}\text{ for each }k=1,\cdots ,d\}\neq \emptyset $. Then, by ${\normalfont [f(x_1,\cdots,x_n)]_\text{eval}}$ we mean the function of $x_1,\cdots ,x_d$ that is obtained from $f$ by the evaluations $x_{p_k}=x_{p_k+s_k-1}=y_k$ for each $k=1,\cdots ,d$ and then renaming variables as  $y_k=x_k$.
\end{definition}
\begin{definition}
   Let $P^\lambda_\varsigma$ denote the space of polynomials ${\normalfont P^\lambda_\varsigma=\{[\mathcal P]_\text{eval},\; \mathcal P\in P^\lambda\}}$. 
\end{definition}
\begin{proposition}
\label{basisSpechtprop}
    Let $\lambda$ be a partition of $n$ with three columns. The set $\{\mathcal P_T,\; T\in {\normalfont\CSYT}\}$ is a basis of $P^\lambda_\varsigma$.
\end{proposition}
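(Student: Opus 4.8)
The plan is to realize each semistandard Specht polynomial as the evaluation of a standard one, and then to combine a spanning statement (from straightening) with a linear independence statement (from leading monomials) to conclude that the CSYT polynomials form a basis.

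First I would record the block structure of the evaluation $[\,\cdot\,]_{\text{eval}}$ of Definition \ref{remarkeval}. Writing $\beta\colon\{1,\dots,n\}\to\{1,\dots,d\}$ for the map sending $m$ to the unique $k$ with $p_k\le m<p_{k+1}$ (see \eqref{defqi}), the evaluation is the substitution $x_m\mapsto x_{\beta(m)}$. Applied to a Specht polynomial $\mathcal P_N=\prod_c\Delta(N_{\cdot,c})$ of a numbering $N$, a column Vandermonde factor collapses to $\pm$ the Vandermonde of the collapsed column when $\beta$ is injective on that column, and to $0$ otherwise; hence $[\mathcal P_N]_{\text{eval}}$ is either $0$ or $\pm\mathcal P_{\beta(N)}$, the Specht polynomial of the content-$\varsigma$ filling $\beta(N)$, whose columns have pairwise distinct entries. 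Taking $N=\tilde T$ for $T\in{\normalfont\CSYT}$ (Definition \ref{defTtilde}), the construction of $\tilde T$ gives $\beta(\tilde T)=T$, and $\beta$ is injective on each column of $\tilde T$ because $T$ is column-strict; thus $\mathcal P_T=\pm[\mathcal P_{\tilde T}]_{\text{eval}}\in P^\lambda_\varsigma$, so the CSYT polynomials lie in $P^\lambda_\varsigma$.

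For spanning I would use that, by \eqref{defPlambda} together with \cite{P75}, the space $P^\lambda_\varsigma$ is the image under $[\,\cdot\,]_{\text{eval}}$ of $\operatorname{span}\{\mathcal P_N : N\in\text{SYT}^\lambda\}$. By the previous paragraph each generator is $0$ or $\pm\mathcal P_F$ with $F$ a content-$\varsigma$ filling having column-distinct entries, and the classical straightening (Garnir) relations — being polynomial identities, they survive the evaluation and preserve the content $\varsigma$ — rewrite every such $\mathcal P_F$ as a $\mathbb Z$-linear combination of $\mathcal P_T$, $T\in{\normalfont\CSYT}$. Hence $\{\mathcal P_T : T\in{\normalfont\CSYT}\}$ spans $P^\lambda_\varsigma$.

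For linear independence I would use the monomial expansion \eqref{spechtmonomials}. Ordering monomials in $x_1,\dots,x_d$ lexicographically with $x_d\succ x_{d-1}\succ\cdots\succ x_1$, the leading monomial of $\mathcal P_T$ is attained by the unique orbit element $U=T$: placing the largest remaining value in the lowest free cell of each column forces $U=T$, because the columns of a column-strict tableau have distinct entries. Thus no cancellation occurs and $\operatorname{LM}(\mathcal P_T)=\prod_i x_i^{r^T(i)-s_i}$ with coefficient $\pm1$, so independence reduces to the injectivity of $T\mapsto(r^T(i))_i$ on ${\normalfont\CSYT}$. Writing $r^T(i)=\sum_\rho \rho\,m^T_{\rho,i}$, where $m^T_{\rho,i}$ is the number of $i$'s in row $\rho$, one recovers $T$ from the row-content matrix $(m^T_{\rho,i})$ since every row of a semistandard tableau is weakly increasing; and because $s_i\le 2$ each value occupies at most two rows, so $r^T(i)$ and $\sum_\rho m^T_{\rho,i}=s_i$ determine its row distribution up to the single ambiguity ``one row twice'' versus ``two rows with the same sum'', which I would exclude using column-strictness and the three-column shape. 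Triangularity then gives independence, and together with the spanning statement this proves $\{\mathcal P_T : T\in{\normalfont\CSYT}\}$ is a basis of $P^\lambda_\varsigma$.

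The main obstacle is precisely the injectivity of the leading-monomial map $T\mapsto(r^T(i))_i$: the per-value row-sum statistic is a priori far coarser than the tableau, and it is here that the hypotheses $s_i\le 2$ and three columns are genuinely used to resolve the parity ambiguity in each value's row distribution (my expectation is that a clean global argument exploiting the rigidity of column-strict fillings handles all residual cases at once). The secondary technical points are checking that the leading term of \eqref{spechtmonomials} is free of cancellation and that the straightening step stays within content-$\varsigma$ fillings.
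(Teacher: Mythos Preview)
Your overall strategy is the same as the paper's: show $\mathcal P_T\in P^\lambda_\varsigma$ via $[\mathcal P_{\tilde T}]_{\text{eval}}$, get spanning from the SYT basis of $P^\lambda$, and get independence from the leading monomial $\prod_i x_i^{r^T(i)-s_i}$ together with injectivity of $T\mapsto (r^T(i))_i$ on $\CSYT$. Two points are worth sharpening.

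First, the Garnir/straightening step is unnecessary. If $U\in\text{SYT}^\lambda$ and $\beta$ is injective on each column of $U$, then $\beta(U)$ is not merely a column-distinct filling: since $\beta$ is weakly increasing and $U$ is strictly increasing down columns and along rows, $\beta(U)$ is strictly increasing down columns and weakly increasing along rows, i.e.\ $\beta(U)\in\CSYT$ already. So $[\mathcal P_U]_{\text{eval}}$ is either $0$ or $\mathcal P_{\beta(U)}$ with $\beta(U)\in\CSYT$, and spanning follows immediately. This is exactly the paper's argument.

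Second, your acknowledged ``main obstacle'' is the heart of the matter, and your proposed route (recover the full row-content matrix from the row-sum vector) is harder than necessary. The paper's argument is a minimal-index comparison: if $T\neq T'$ in $\CSYT$, let $i$ be least with the positions of $i$ differing. All entries $<i$ occupy identical cells, so the available frontier cells for $i$ are the same in $T$ and $T'$. If $s_i=1$ the single $i$ sits in different frontier cells, hence different rows, so $r^T(i)\neq r^{T'}(i)$. If $s_i=2$, the two $2$-subsets of the three frontier columns share exactly one column; the other two frontier cells lie in different columns, and the weakly-increasing-row condition forces them to lie in different rows (otherwise one of $T,T'$ would have a larger entry to the left of $i$ in that row), so again $r^T(i)\neq r^{T'}(i)$. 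This is where the three-column hypothesis and column-strictness are actually used, and it replaces your residual ``parity ambiguity'' case analysis in one stroke.
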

This implies in particular Proposition \ref{theorem1}. To prove Proposition \ref{basisSpechtprop}, we need the following Lemma.
\begin{lemma}
\label{lemmainject}
    Let $\lambda$ be a partition of $n$ with three columns. The map which assigns to a tableau $T\in {\normalfont\CSYT}$ the sequence $(r^T(i)-s_i)_{1\leq i\leq d}$ is injective.
\end{lemma}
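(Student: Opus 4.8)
The plan is to reconstruct the tableau $T$ from the sequence $(r^T(i))_{1\le i\le d}$ inductively, building it up one value at a time and showing at each stage that the next piece of data is forced. Since $\lambda$ has (at most) three columns, I would encode $T\in\CSYT$ by the chain of its column-height vectors: for $k=0,1,\dots,d$, let $h^{(k)}=(h^{(k)}_1,h^{(k)}_2,h^{(k)}_3)$ record the lengths of the three columns of the subdiagram of $T$ filled by the entries $\le k$. Column-strictness guarantees this subdiagram is a genuine Young diagram, so each $h^{(k)}$ is weakly decreasing; moreover $h^{(0)}=(0,0,0)$, $h^{(d)}=\bar\lambda$, and passing from $h^{(k-1)}$ to $h^{(k)}$ amounts to adding the $s_k$ copies of the value $k$, one to each column in a set $A_k\subseteq\{1,2,3\}$ with $|A_k|=s_k$, so that $h^{(k)}=h^{(k-1)}+\mathbf 1_{A_k}$. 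Recovering the whole chain $(h^{(k)})_k$ is equivalent to recovering $T$, because $A_k$ records precisely which columns (and, via the heights, which rows) contain the value $k$.

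The bridge to the statement is the observation that the copy of $k$ inserted into a column $c\in A_k$ lands in row $h^{(k-1)}_c+1=h^{(k)}_c$, whence
\[
  r^T(k)=\sum_{c\in A_k}\bigl(h^{(k-1)}_c+1\bigr).
\]
I would then argue by induction on $k$. Suppose $h^{(k-1)}$ has already been determined from $r^T(1),\dots,r^T(k-1)$ (the base case $h^{(0)}=(0,0,0)$ is automatic). I claim $r^T(k)$ then determines $A_k$, and hence $h^{(k)}$. This reduces to a purely local assertion: for a fixed weakly decreasing triple $h=(h_1,h_2,h_3)$, the map $A\mapsto\sum_{c\in A}(h_c+1)$ is injective on the \emph{admissible} subsets $A$ of prescribed size $s_k\in\{1,2\}$, where $A$ is admissible if $h+\mathbf 1_A$ is again weakly decreasing.

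The crux of the proof — and the step I expect to be the only real obstacle — is this local injectivity. It is a finite verification made possible by there being just three columns and by $s_k\le 2$, and the mechanism is clean: whenever two distinct admissible sets of the same size would produce the same sum, the equal-sum condition forces an equality among the coordinates of $h$ that is incompatible with the admissibility of one of them. For instance, among the size-two sets, $\{1,3\}$ and $\{2,3\}$ give equal sums exactly when $h_1=h_2$, yet admissibility of $\{2,3\}$ (that $h+\mathbf 1_{\{2,3\}}$ stay weakly decreasing) requires $h_1>h_2$; the remaining collisions, for both sizes, are ruled out in the same way. Thus the shape constraint that makes a tableau column-strict is exactly what separates the competing insertions. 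Once this is in hand, the induction shows that $(r^T(i))_i$ pins down the entire chain $(h^{(k)})_k$, and therefore $T$ itself, proving injectivity.
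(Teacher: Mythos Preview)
Your proposal is correct and follows essentially the same approach as the paper's proof: both arguments fix the sub-tableau of entries $<k$ (the paper by taking the minimal $i$ where $T$ and $T'$ differ, you by the inductive height-vector chain) and then show that $r^T(k)$ pins down the columns receiving $k$ via the admissibility constraint. The only cosmetic difference is that the paper handles $s_i=2$ by the pigeonhole observation that two $2$-subsets of $\{1,2,3\}$ share a column, reducing to the $s_i=1$ case, whereas you carry out the three size-two collisions directly; both come to the same finite check.
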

\begin{proof}
    Let $T,T'\in \CSYT$ such that $T\neq T'$. Let $i$ be minimum such that $T$ and $T'$ differ in the positions of $i$. If $s_i=1$ then $i$ must be in different rows of $T$ and $T'$ so that $r^T(i)\neq r^{T'}(i)$. If $s_i=2$, since the tableau has three columns, there must a box containing $i$ in the same place in $T$ and $T'$. The position of the remaining box containing $i$ differs in $T$ and $T'$ so that $r^T(i)\neq r^{T'}(i)$.
\end{proof}
\begin{proof}[Proof of Proposition \ref{basisSpechtprop}]
    Let $T\in \CSYT$, we have that $\mathcal P_T=[\mathcal P_{\tilde T}]_\text{eval}$ so that $\mathcal P_T\in P^\lambda_\varsigma$. Let $U\in \SYT$. If two entries of $U$ $i,j\in \llbracket p_k,p_{k}+s_k-1\rrbracket$ are in the same column, then $[\mathcal P_U]_\text{eval}=0$. This shows that $\{\mathcal P_T,\; T\in \CSYT\}$ spans $P^\lambda_\varsigma$. 
    
    Clearly, $(r^T(i)-s_i)_{1\leq i\leq d}$ is the unique minimum in the set $\{(r^U(i)-s_i)_{1\leq i\leq d},\; U\in Q_\lambda . T\}$ for the lexicographic order. As a consequence of \eqref{spechtmonomials}, the coefficient of the monomial $\prod_{i=1}^\Summed x_{i}^{r^T(i)-s_i}$ in $\mathcal P_T$ is $1$. This shows in particular that $P_T\neq 0$. Now, it follows from Lemma \ref{lemmainject} that the polynomials $\mathcal P_T$ for $T\in \CSYT$ are linearly independent.
\end{proof}

\subsubsection{The case $\varsigma=(1^n)$}
There exists an involutive automorphism $\omega$ of $\mathbb C[\mathfrak S_n]$ defined on permutations by
\begin{align} \label{defomega}
    \omega : \sigma \mapsto \text{sgn}(\sigma) \sigma.
\end{align}
\begin{remark}
\label{remarkomega}
Let $\Bar{\lambda}$ denote the transpose of the partition $\lambda$, and let $(V^\lambda,\rho_\lambda)$ be a Specht module. Then, $(V^\lambda,\rho_\lambda\circ \omega)$ yields a representation isomorphic to $V^{\Bar{\lambda}}$ \cite[Chapter 7]{Fu}. Let us emphasize that, under the isomorphism, the basis of polytabloids in $V^\lambda$ is not mapped to the basis of polytabloids in $V^{\Bar{\lambda}}$ but instead to the basis of so called dual polytabloids\cite[Chapter 7]{Fu}.
\end{remark}

Let $f\in \mathcal C_{(1^n)}^{(c=2)}$ such that
\begin{align}
    f=\lb \prod_{1\leq i<j\leq n} (x_j-x_i)\rb ^{-\frac{1}{3}} \mathcal P
\end{align}
where $\mathcal P\in {P}^{\Bar{\pi}}$. Moreover, consider the action of $\mathbb C [\mathfrak S_n]$ given by
\begin{align}
    x \; f=\lb \prod_{1\leq i<j\leq n} (x_j-x_i)\rb ^{-\frac{1}{3}} \omega(x)\mathcal P.
\end{align}
Thus we have $\mathcal C_{(1^n)}^{(c=2)}\cong {V}^{\pi}$ as representations of $\mathbb C [\mathfrak S_n]$. Denote by $\left\langle\tau_i,\tau_{i+1},\tau_{i+2}\right\rangle$ the subgroup of $\mathfrak S_n$ generated by $\tau_i$, $\tau_{i+1}$ and $\tau_{i+2}$. The antisymmetrizer on $4$ consecutive sites
\begin{align}
    \sum_{\sigma \in \left\langle\tau_i,\tau_{i+1},\tau_{i+2}\right\rangle} \text{sgn}(\sigma) \sigma
\end{align}
vanishes on ${V}^{\pi}$ \cite{Ma}. Therefore, this representation descends to the following quotient
\begin{align}\label{TLMquotient}
   \nonumber &\tau_i^2=1 \quad i\in\llbracket 1, n-1\rrbracket, \\
   \nonumber &\tau_i\tau_{i+1}\tau_i=\tau_{i+1}\tau_i\tau_{i+1}\quad i\in\llbracket 1, n-2\rrbracket, \\
    &\tau_i\tau_j=\tau_j\tau_i\quad \text{for } |j-i|>1, \\
    \nonumber & \sum_{\sigma \in \left\langle\tau_i,\tau_{i+1},\tau_{i+2}\right\rangle} \text{sgn}(\sigma) \sigma = 0 \quad i\in\llbracket 1, n-3\rrbracket,
\end{align}
which is called the Temperley-Lieb-Martin algebra $\text{TLM}_n$. Hence $\mathcal C_{(1^n)}^{(c=2)}$ is a representation of $\text{TLM}_n$. 

We also have the following Schur-Weyl duality type result. Denote by $V_1$ the defining representation of $\mathfrak{sl}_3$ and by $V_2=V_1\wedge V_1$ the second fundamental reprensentation.
\begin{theorem}[\cite{Ma}]
\label{thmmartin}
    The Temperley-Lieb-Martin algebra is Schur-Weyl dual to $\mathfrak{sl}_3$ on the tensor product representation $V_1^{\otimes n}$
    \begin{align}
        \textup{TLM}_n \cong \textup{End}_{\mathfrak{sl}_3}(V_1^{\otimes n})
    \end{align}
\end{theorem}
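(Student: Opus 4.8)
The plan is to realize $\textup{TLM}_n$ as an explicit quotient of $\mathbb{C}[\mathfrak{S}_n]$ and to match it with the centralizer algebra using classical Schur--Weyl duality together with a branching argument identifying the kernel of the tensor representation. Write $\Phi\colon \mathbb{C}[\mathfrak{S}_n]\to \operatorname{End}(V_1^{\otimes n})$ for the representation in which $\tau_i$ acts by swapping the $i$th and $(i+1)$th tensor factors. Since this action commutes with the diagonal $\mathfrak{sl}_3$-action, the image of $\Phi$ lies in $\operatorname{End}_{\mathfrak{sl}_3}(V_1^{\otimes n})$; the double centralizer theorem gives that $\Phi$ is onto $\operatorname{End}_{\mathfrak{sl}_3}(V_1^{\otimes n})$ and that, via the decomposition $V_1^{\otimes n}\cong \bigoplus_{\ell(\lambda)\le 3} V^\lambda\otimes S^\lambda(V_1)$, one obtains the Wedderburn identification $\operatorname{End}_{\mathfrak{sl}_3}(V_1^{\otimes n})\cong \bigoplus_{\ell(\lambda)\le 3}\operatorname{End}(V^\lambda)$, the sum being over partitions of $n$ with at most three rows.

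Next I would show that $\Phi$ factors through $\textup{TLM}_n$. By definition $\textup{TLM}_n = \mathbb{C}[\mathfrak{S}_n]/J$, where $J$ is the two-sided ideal generated by the consecutive four-site antisymmetrizers $a^{(i)}=\sum_{\sigma\in\langle\tau_i,\tau_{i+1},\tau_{i+2}\rangle}\mathrm{sgn}(\sigma)\,\sigma$. Under $\Phi$, each $a^{(i)}$ is sent (up to a positive scalar) to the antisymmetrizing projector on the four tensor slots $i,\dots,i+3$, whose image is $\Lambda^4 V_1=0$ because $\dim V_1=3$. Hence $J\subseteq\ker\Phi$, and $\Phi$ descends to a surjective algebra homomorphism $\overline\Phi\colon \textup{TLM}_n\twoheadrightarrow \operatorname{End}_{\mathfrak{sl}_3}(V_1^{\otimes n})$.

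It remains to prove $J=\ker\Phi$, which forces $\overline\Phi$ to be an isomorphism. Because $\mathbb{C}[\mathfrak{S}_n]=\bigoplus_\lambda\operatorname{End}(V^\lambda)$ is semisimple and $\Phi$ annihilates precisely the blocks with $\ell(\lambda)\ge 4$, we have $\ker\Phi=\bigoplus_{\ell(\lambda)\ge 4}\operatorname{End}(V^\lambda)$; one inclusion is already in hand, so I only need $\ker\Phi\subseteq J$. For this I would show that whenever $\ell(\lambda)\ge 4$ the element $a^{(1)}$ acts nontrivially on $V^\lambda$: indeed $a^{(1)}$ projects onto the $\mathrm{sgn}$-isotypic part of $\mathfrak{S}_4$ acting on the first four letters, and by the branching rule $\mathrm{Res}^{\mathfrak{S}_n}_{\mathfrak{S}_4\times\mathfrak{S}_{n-4}}V^\lambda$ contains $V^{(1^4)}\boxtimes V^\mu$ iff $\lambda/\mu$ is a vertical four-strip, which is possible exactly when $\lambda$ has at least four rows. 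Conjugating $a^{(1)}$ by elements of $\mathfrak{S}_n$ turns it into four-site antisymmetrizers on arbitrary four-element subsets, so the consecutive generators already generate the same two-sided ideal as all of them; since each block $\operatorname{End}(V^\lambda)$ is simple, the fact that some conjugate of a generator acts nonzero on $V^\lambda$ forces $\operatorname{End}(V^\lambda)\subseteq J$ for every $\lambda$ with $\ell(\lambda)\ge 4$. Therefore $\ker\Phi\subseteq J$, so $J=\ker\Phi$ and $\overline\Phi$ is the desired algebra isomorphism.

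The main obstacle is the last step, the identification $\ker\Phi=J$: the easy direction only uses $\dim V_1=3$, but the reverse inclusion needs the representation-theoretic input that the four-site antisymmetrizer detects partitions with at least four rows. I would isolate this as the branching-rule computation above (equivalently, that $V^{(1^4)}$ occurs in $\mathrm{Res}_{\mathfrak{S}_4}V^\lambda$ exactly when $\ell(\lambda)\ge 4$), after which simplicity of the Wedderburn blocks makes the ideal identification automatic. An alternative route, avoiding the ideal analysis, would be to compare dimensions: $\overline\Phi$ is a surjection, so it suffices to check $\dim\textup{TLM}_n=\sum_{\ell(\lambda)\le 3}(\dim V^\lambda)^2$, but computing $\dim \textup{TLM}_n$ independently seems to require essentially the same branching input, so I would favour the ideal-theoretic argument.
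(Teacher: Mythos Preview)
The paper does not prove this theorem; it is stated with a citation to Martin~\cite{Ma} and used as a black box. So there is no ``paper's own proof'' to compare against.

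Your argument is correct and is essentially the standard way to establish this kind of Schur--Weyl quotient statement. A couple of small comments. First, the detour through conjugating $a^{(1)}$ to arbitrary four-element subsets is unnecessary: once you know $J$ is a two-sided ideal in the semisimple algebra $\mathbb{C}[\mathfrak{S}_n]$, it is automatically a direct sum of full matrix blocks, and a block $\operatorname{End}(V^\lambda)$ lies in $J$ if and only if \emph{some} generator of $J$ has nonzero component there; since all the $a^{(i)}$ are $\mathfrak{S}_n$-conjugate, checking $a^{(1)}$ alone suffices. Second, your branching computation is right but can be phrased a bit more directly: by the dual Pieri rule, $c^{\lambda}_{(1^4),\mu}\neq 0$ iff $\lambda/\mu$ is a vertical $4$-strip, and such a $\mu$ exists iff $\ell(\lambda)\ge 4$ (e.g.\ remove one box from each of the last four rows). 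Finally, the passage from $\mathfrak{gl}_3$ to $\mathfrak{sl}_3$ is harmless here because the center of $\mathfrak{gl}_3$ acts by scalars on $V_1^{\otimes n}$, so the two centralizers coincide; it is worth saying this explicitly since you invoke classical Schur--Weyl duality for $\mathfrak{gl}_3$.
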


\subsubsection{The case of general $\varsigma$}

For general $\varsigma$, the role of the symmetric group is replaced by the so-called fused symmetric group defined in \cite{CA}. We first recall some useful definitions and properties. Consider the following subgroup of $\mathfrak S_n$:
\begin{equation} 
\mathfrak S_{s_1} \times \mathfrak S_{s_2} \times \cdots  \times \mathfrak S_{s_d}. 
\end{equation}
where each factor $\mathfrak S_{s_k}$ is generated by the consecutive transpositions $\tau_{p_k},\dots,\tau_{p_{k+1}-1}$.  Then consider the idempotents $\idpts$ and $\idpta$ in $\mathbb C[\mathfrak S_n]$ given by
\begin{align*}
    &\idpts := \frac{1}{s_1!\cdots  s_\np!} \prod_{k=1}^\np \sum_{\sigma \in \mathfrak S_{s_k}} \sigma, \\
    &\idpta := \frac{1}{s_1!\cdots  s_\np!} \prod_{k=1}^\np \sum_{\sigma \in \mathfrak S_{s_k}}\text{sgn}(\sigma)\;  \sigma.
\end{align*}

\begin{definition}
The algebras $\idpts \mathbb C[\mathfrak S_n] \idpts$ with unit $\idpts$ and $\idpta \mathbb C[\mathfrak S_n] \idpta$ with unit $\idpta$ were studied in \cite{CA} and are called fused symmetric group and fused Hecke algebra (at $q=-1$), respectively.
    
\end{definition}

The following Lemma is known from \cite[Appendix A.1]{CA}: 
\begin{lemma} \label{lemmaPAP} Let $A$ be a finite-dimensional semisimple algebra. Let $P$ be an idempotent element of $A$. Then the algebra $PAP$ with unit $P$ is finite-dimensional and semisimple. Moreover, if $\{R^\lambda\}_{\lambda \in \mathcal{I}}$ is a complete set of pair-wise non-isomorphic irreducible representations of A, then the set 
$$\{P(R^\lambda) \; | \; \lambda \in \mathcal{I}, \; P(R^\lambda) \neq 0\}$$
is a complete set of pair-wise non-isomorphic irreducible representations of the algebra $PAP$. 
\end{lemma}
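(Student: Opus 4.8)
The plan is to reduce everything to the Artin--Wedderburn structure of $A$ together with the elementary behaviour of an idempotent inside a single matrix block, using the idempotent-truncation functor $M \mapsto PM$ only to identify the resulting simple modules with the $P(R^\lambda)$.

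First I would record the elementary facts. The set $B := PAP$ is closed under multiplication since $(PaP)(PbP) = P(aPb)P$, and $P$ is a two-sided unit because $P \cdot PaP = PaP = PaP \cdot P$; as a subspace of the finite-dimensional algebra $A$ it is finite-dimensional. I would also introduce the functor $F(M) = PM$ on left $A$-modules, with $B$-action $(PaP)(Pm) = PaPm$; since $M = PM \oplus (1-P)M$ as vector spaces, multiplication by $P$ is a split projection, so $F$ is exact. A one-line computation then shows $F$ carries a simple module to a simple module or to $0$: if $S$ is simple with $PS \neq 0$, then for any nonzero $v \in PS$ one has $Av = S$ by simplicity, whence $PS = PAv = (PAP)v = Bv$ using $Pav = Pa(Pv) = (PaP)v$; thus every nonzero vector of $PS$ generates it over $B$, so $PS$ is simple.

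Next I would bring in Artin--Wedderburn: write $A \cong \prod_{\lambda \in \mathcal I} M_{n_\lambda}(D_\lambda)$, so that $R^\lambda$ is the simple module of the $\lambda$-th block, and the central idempotents $1_\lambda$ decompose $P = \sum_\lambda P_\lambda$ with each $P_\lambda \in M_{n_\lambda}(D_\lambda)$ idempotent of some rank $r_\lambda$. Then $B = PAP \cong \prod_\lambda P_\lambda M_{n_\lambda}(D_\lambda) P_\lambda$. The key local lemma is that an idempotent of rank $r$ in $M_n(D)$ is conjugate to the diagonal projection $\operatorname{diag}(I_r,0)$, so its corner algebra is isomorphic to $M_{r}(D)$, and is $0$ exactly when $r=0$. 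Hence $B \cong \prod_{\lambda:\, r_\lambda \geq 1} M_{r_\lambda}(D_\lambda)$ is finite-dimensional and semisimple, its simple modules are indexed precisely by $\{\lambda : r_\lambda \geq 1\}$, and they are pairwise non-isomorphic since they lie in distinct blocks.

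Finally I would identify these simple $B$-modules with the truncations $P(R^\lambda)$. Since every block $\mu \neq \lambda$ annihilates $R^\lambda$, one has $P(R^\lambda) = P_\lambda(R^\lambda) = \operatorname{im}(P_\lambda) \cong D_\lambda^{\,r_\lambda}$, which is nonzero exactly when $r_\lambda \geq 1$ and is then the simple module of the factor $M_{r_\lambda}(D_\lambda)$. Combining, $\{P(R^\lambda) : P(R^\lambda) \neq 0\}$ is a complete set of pairwise non-isomorphic irreducible $B$-modules, as claimed. I expect the only point needing genuine care is the local matrix lemma: over a general division ring one must check that an idempotent matrix is conjugate to a diagonal one, equivalently that $D^n = P D^n \oplus (1-P)D^n$ with $PD^n$ free of well-defined rank, after which the corner computation $eM_n(D)e \cong M_r(D)$ is immediate; over the field $\mathbb C$ relevant to $\mathbb C[\mathfrak S_n]$ this is just the statement that $P$ projects onto its $r$-dimensional image.
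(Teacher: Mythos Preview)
Your argument is correct. Note, however, that the paper does not actually give a proof of this lemma: it is stated with the attribution ``known from \cite[Appendix A.1]{CA}'' and then used as a black box. So there is no proof in the paper to compare against; you have supplied a self-contained argument where the authors simply cite one.

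For what it is worth, your route via Artin--Wedderburn is the standard and cleanest way to establish the result. The only redundant step is the exactness of the functor $M \mapsto PM$: you never use it, since you prove directly that $PS$ is simple by showing every nonzero vector generates it. The local matrix lemma is indeed the only point requiring care over a general division ring, and your sketch (split $D^n = PD^n \oplus (1-P)D^n$, use that modules over $D$ are free to define rank, change basis) is exactly right; in the paper's intended application $A = \mathbb C[\mathfrak S_n]$, so $D_\lambda = \mathbb C$ throughout and this reduces to elementary linear algebra.
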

Lemma \ref{lemmaPAP} implies that $\idpts \mathbb C[\mathfrak S_n] \idpts$ and $\idpta \mathbb C[\mathfrak S_n] \idpta$ are semisimple. The next Theorem from \cite{CA} identifies its complete set of irreducible representations.
\begin{theorem}{\cite[Theorem 6.5]{CA}} \label{proppVlambda}
The following statements hold:
\begin{enumerate}
    \item {\normalfont$\text{dim}\lb \idpts\lb V^\lambda \rb \rb = |\text{CSYT}^\lambda_\varsigma|$},
    \item A complete set of pair-wise non-isomorphic, non-zero irreducible representations of $\idpts \mathbb C[\mathfrak S_n] \idpts$ is
    $$\{\idpts\lb V^\lambda \rb\}_{\lambda \geq \varsigma^\textup{ord}}.$$ 
\end{enumerate}
\end{theorem}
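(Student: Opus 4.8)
The plan is to split the statement into the dimension count of part (1) and the abstract classification of part (2), linked by the classical positivity criterion for Kostka numbers. Everything reduces to standard symmetric-group representation theory once $\idpts$ is correctly identified.

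First I would treat part (1). The element $\idpts = \frac{1}{s_1!\cdots s_d!}\prod_{k=1}^d \sum_{\sigma\in\mathfrak S_{s_k}}\sigma$ is the averaging idempotent over the Young subgroup $\mathfrak S_\varsigma := \mathfrak S_{s_1}\times\cdots\times\mathfrak S_{s_d}$; the factors commute because they act on disjoint blocks of letters, so the product is genuinely idempotent and its image, for any $\mathfrak S_n$-module $V$, is exactly the subspace of $\mathfrak S_\varsigma$-invariants. Since the trivial representation is one-dimensional, $\dim\idpts(V)$ equals the multiplicity of the trivial $\mathfrak S_\varsigma$-representation in $V\!\mid_{\mathfrak S_\varsigma}$. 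Applying this to $V=V^\lambda$ and using Frobenius reciprocity,
\[
\dim \idpts(V^\lambda) = \dim \operatorname{Hom}_{\mathfrak S_\varsigma}(\mathbf 1, V^\lambda\!\mid_{\mathfrak S_\varsigma}) = \dim \operatorname{Hom}_{\mathfrak S_n}\!\big(\operatorname{Ind}_{\mathfrak S_\varsigma}^{\mathfrak S_n}\mathbf 1,\, V^\lambda\big) = [M^\varsigma : V^\lambda],
\]
where $M^\varsigma=\operatorname{Ind}_{\mathfrak S_\varsigma}^{\mathfrak S_n}\mathbf 1$ is the permutation (Young) module. By Young's rule this multiplicity is the Kostka number $K_{\lambda\varsigma}$, which by definition counts the column-strict tableaux of shape $\lambda$ and content $\varsigma$, i.e. $|\text{CSYT}^\lambda_\varsigma|$. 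This establishes (1).

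For part (2) I would invoke Lemma \ref{lemmaPAP} with $A=\mathbb C[\mathfrak S_n]$ and $P=\idpts$. As $A$ is semisimple by Maschke's theorem, with complete set of irreducibles $\{V^\lambda\}_{\lambda\vdash n}$, the lemma immediately yields that $\{\idpts(V^\lambda):\idpts(V^\lambda)\neq 0\}$ is a complete set of pairwise non-isomorphic irreducible $\idpts\mathbb C[\mathfrak S_n]\idpts$-modules. The only remaining task is to pin down the indexing set: by part (1), $\idpts(V^\lambda)\neq 0$ iff $K_{\lambda\varsigma}>0$, i.e. iff at least one column-strict tableau of shape $\lambda$ and content $\varsigma$ exists. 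The classical nonvanishing criterion for Kostka numbers says this holds precisely when $\lambda\geq\varsigma^{\mathrm{ord}}$ in dominance order (using $M^\varsigma\cong M^{\varsigma^{\mathrm{ord}}}$, so the count depends only on the multiset of parts of $\varsigma$), with $\varsigma^{\mathrm{ord}}$ the partition defined above. Substituting gives exactly $\{\idpts(V^\lambda)\}_{\lambda\geq\varsigma^{\mathrm{ord}}}$.

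I expect no deep obstacle: the content is entirely classical. The points needing care are purely bookkeeping, namely (i) verifying that $\idpts$ is idempotent and that its image is the full invariant subspace — routine given the disjoint-block structure — and (ii) matching conventions so that the tableau count $|\text{CSYT}^\lambda_\varsigma|$, the Young module $M^\varsigma$, and the dominance comparison are all phrased in terms of the same weight $\varsigma$ (and against $\varsigma^{\mathrm{ord}}$, not $\varsigma$ itself). The semisimplicity hypothesis required to legitimately apply Lemma \ref{lemmaPAP} is supplied by Maschke's theorem.
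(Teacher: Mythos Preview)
Your proposal is correct. The paper does not supply its own proof of this theorem; it is quoted verbatim from \cite[Theorem 6.5]{CA} and used as a black box, so there is no in-paper argument to compare against. Your route via $\idpts(V^\lambda)=(V^\lambda)^{\mathfrak S_\varsigma}$, Frobenius reciprocity, Young's rule $[M^\varsigma:V^\lambda]=K_{\lambda\varsigma}$, and the dominance criterion $K_{\lambda\varsigma}>0\iff\lambda\geq\varsigma^{\mathrm{ord}}$ is the standard one and exactly what one would expect the cited reference to contain; the appeal to Lemma~\ref{lemmaPAP} for part (2) is also how the paper itself frames the passage from $\mathbb C[\mathfrak S_n]$ to the fused algebra.
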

We are now ready to prove that $P^{\Bar{\pi}}_\varsigma$ is a representation of $\idpts \mathbb C[\mathfrak S_n] \idpts$.
\begin{proposition}\label{prop:repfusedsym}
    The linear map 
\begin{align*}
    \psi: &\idpts(P^{\Bar{\pi}})\rightarrow P^{\Bar{\pi}}_\varsigma\\
    &\idpts \mathcal P\mapsto [\idpts \mathcal P]_\textup{eval}=[\mathcal P]_\textup{eval}
\end{align*} is an isomorphism of vector spaces. Therefore, it induces a representation of $\idpts \mathbb C[\mathfrak S_n] \idpts$ on $P^{\Bar{\pi}}_\varsigma$ as
\begin{align*}
    \idpts\, x\,\idpts\; [\mathcal P]_\textup{eval}=[\idpts\, x\,\idpts\; \mathcal P]_\textup{eval}.
\end{align*}
\end{proposition}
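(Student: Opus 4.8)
The plan is to realize $\psi$ as the restriction to $\idpts(P^{\bar\pi})$ of the plain evaluation map, and then to upgrade surjectivity to bijectivity by a dimension count supplied by the two cited results. The single identity doing all the work is
\[
[\idpts \mathcal P]_{\text{eval}} = [\mathcal P]_{\text{eval}}, \qquad \mathcal P \in P^{\bar\pi},
\]
which I would establish first. Recall that $\idpts$ averages over the Young subgroup $\mathfrak S_{s_1}\times\cdots\times\mathfrak S_{s_d}$, whose $k$-th factor permutes only the variables indexed by the block $\{p_k,\dots,p_{k+1}-1\}$. Since $[\cdot]_{\text{eval}}$ identifies all variables within each such block (it sets $x_{p_k}=\cdots=x_{p_{k+1}-1}$), every $\sigma$ in this subgroup acts trivially after evaluation, so $[\sigma\mathcal P]_{\text{eval}}=[\mathcal P]_{\text{eval}}$; averaging over the subgroup gives the identity. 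For polynomials the evaluation is ordinary substitution, so no continuity issue from Definition \ref{remarkeval} arises.

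Next I would use this identity to show $\psi$ is a well-defined surjection. Let $\Phi:P^{\bar\pi}\to P^{\bar\pi}_\varsigma$, $\mathcal P\mapsto[\mathcal P]_{\text{eval}}$, which is linear and, by the very definition of $P^{\bar\pi}_\varsigma$, surjective. The identity reads $\Phi=\Phi\circ\idpts$ as maps $P^{\bar\pi}\to P^{\bar\pi}_\varsigma$, so $\Phi$ factors through the image $\idpts(P^{\bar\pi})$ of the idempotent, and $\psi:=\Phi|_{\idpts(P^{\bar\pi})}$ satisfies $\psi(\idpts\mathcal P)=[\idpts\mathcal P]_{\text{eval}}=[\mathcal P]_{\text{eval}}$. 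In particular $\operatorname{im}\psi=\operatorname{im}\Phi=P^{\bar\pi}_\varsigma$, so $\psi$ is surjective, and it is automatically well defined, being a restriction of a genuine linear map.

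The crux is then injectivity, which I would get purely by matching dimensions. On the source side, $P^{\bar\pi}\cong V^{\bar\pi}$ as $\mathfrak S_n$-modules \cite{P75}, and this isomorphism is $\mathbb C[\mathfrak S_n]$-equivariant, hence commutes with $\idpts$; therefore $\dim\idpts(P^{\bar\pi})=\dim\idpts(V^{\bar\pi})=|\text{CSYT}^{\bar\pi}_\varsigma|$ by Theorem \ref{proppVlambda}(1). On the target side, $\bar\pi=(3^{(n/3)})$ has exactly three columns, so Proposition \ref{basisSpechtprop} applies and gives that $\{\mathcal P_T : T\in\text{CSYT}^{\bar\pi}_\varsigma\}$ is a basis of $P^{\bar\pi}_\varsigma$, whence $\dim P^{\bar\pi}_\varsigma=|\text{CSYT}^{\bar\pi}_\varsigma|$. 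As $\psi$ is a surjection between finite-dimensional spaces of equal dimension, it is an isomorphism. I expect this dimension matching to be the heart of the argument: it is precisely where the three-column hypothesis is used and where both cited theorems enter, whereas the rest is formal.

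Finally, for the representation, $\idpts(P^{\bar\pi})$ is a left module over the compressed algebra $\idpts\mathbb C[\mathfrak S_n]\idpts$ via $(\idpts x\idpts)\cdot\idpts\mathcal P=\idpts x\idpts\mathcal P$ (using $\idpts^2=\idpts$), which is a module by the general idempotent-compression mechanism of Lemma \ref{lemmaPAP}. Transporting this action along the isomorphism $\psi$ defines a representation on $P^{\bar\pi}_\varsigma$, and the stated formula follows by computing
\[
(\idpts x\idpts)\cdot[\mathcal P]_{\text{eval}}=\psi\big((\idpts x\idpts)\cdot\idpts\mathcal P\big)=\psi(\idpts x\idpts\mathcal P)=[\idpts x\idpts\mathcal P]_{\text{eval}},
\]
where the last equality uses the key identity applied to $x\idpts\mathcal P\in P^{\bar\pi}$. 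Well-definedness of the transported action needs no separate verification, being inherited through $\psi$ from the module structure on $\idpts(P^{\bar\pi})$.
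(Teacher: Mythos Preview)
Your proof is correct and follows essentially the same approach as the paper: both establish surjectivity of $\psi$ (you via the identity $[\idpts\mathcal P]_{\text{eval}}=[\mathcal P]_{\text{eval}}$ and the definition of $P^{\bar\pi}_\varsigma$, the paper tacitly) and then match dimensions using Theorem~\ref{proppVlambda} for the source and Proposition~\ref{basisSpechtprop} for the target. You have simply fleshed out the details that the paper leaves implicit, including the well-definedness of $\psi$ and the transport of the module structure.
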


\begin{proof}
    From Theorem \ref{proppVlambda} we have that $\text{dim}\lb \idpts\lb P^{\Bar{\pi}} \rb \rb=\text{dim}\lb \idpts\lb V^{\Bar{\pi}} \rb \rb = |\text{CSYT}^{\Bar{\pi}}_\varsigma|$. From Proposition \ref{basisSpechtprop}, we have that $\text{dim}\lb P^{\Bar{\pi}}_\varsigma \rb = |\text{CSYT}^{\Bar{\pi}}_\varsigma|$. Moreover $\psi$ is surjective. This completes the proof.
\end{proof}

\begin{remark}$\idpta \mathbb C[\mathfrak S_n] \idpta$ is related to the fused symmetric group $\idpts \mathbb C[\mathfrak S_n] \idpts$ as follows. It is clear that $\omega(\idpts)=\idpta$ which implies that
\begin{align}
\label{isoFusedSym}
    \idpts \mathbb C[\mathfrak S_n] \idpts\cong \idpta \mathbb C[\mathfrak S_n] \idpta
\end{align}
The isomorphism and its inverse are given by
\begin{align*}
    &\idpts x \idpts \mapsto \idpta \omega(x) \idpta ,\quad x\in \mathbb C[\mathfrak S_n], \\
    &\idpta x \idpta \mapsto \idpts \omega(x) \idpts,\quad x\in \mathbb C[\mathfrak S_n].
\end{align*}
\end{remark}

It follows from Theorem \ref{proppVlambda} that a complete set of irreducible representations of $\idpta \mathbb C[\mathfrak S_n] \idpta$ is given by $\idpta(V^\lambda)$ with $\Bar{\lambda}\geq\varsigma^\text{ord}$. Let $f\in \mathcal C_{\varsigma}^{(c=2)}$ where
\begin{align}
    f=\lb \prod_{1\leq i<j\leq n} (x_j-x_i)\rb ^{-\frac{s_is_j}{3}} \mathcal P
\end{align}
with $\mathcal P\in {P}^{\Bar{\pi}}_\varsigma$. We define the action of $\idpta \mathbb C[\mathfrak S_n] \idpta$ on $\mathcal C_{\varsigma}^{(c=2)}$ by 
\begin{align}
    \idpta x \idpta f=\lb \prod_{1\leq i<j\leq n} (x_j-x_i)\rb ^{-\frac{s_is_j}{3}} \idpts \omega(x) \idpts \mathcal P,\quad x\in \mathbb C[\mathfrak S_n] .
\end{align}
\begin{proposition}
$C_{\varsigma}^{(c=2)}$ descends to the quotient $\idpta \text{TLM}_n \idpta$ and is isomorphic to $\idpta( V^{\pi})$.
\end{proposition}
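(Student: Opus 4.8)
The plan is to identify the $\idpta\mathbb C[\mathfrak S_n]\idpta$-module $\mathcal C_\varsigma^{(c=2)}$ with the idempotent truncation $\idpta(V^\pi)$, and then to deduce that it descends to $\idpta\text{TLM}_n\idpta$ from the already-established fact that $V^\pi$ descends to $\text{TLM}_n$. In effect this reduces the general-$\varsigma$ statement to the case $\varsigma=(1^n)$ by applying the idempotent $\idpta$ throughout.

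First I would record that multiplication by the fixed prefactor $\prod_{i<j}(x_j-x_i)^{-s_is_j/3}$ is an equivariant linear isomorphism $P^{\bar\pi}_\varsigma\xrightarrow{\sim}\mathcal C_\varsigma^{(c=2)}$, since that prefactor is untouched by the defined action; hence it suffices to analyze $P^{\bar\pi}_\varsigma$. By Proposition \ref{prop:repfusedsym} together with $P^{\bar\pi}\cong V^{\bar\pi}$, the space $P^{\bar\pi}_\varsigma$ is isomorphic to $\idpts(V^{\bar\pi})$ as a module over $\idpts\mathbb C[\mathfrak S_n]\idpts$. The action actually placed on $\mathcal C_\varsigma^{(c=2)}$ is, by construction, this $\idpts$-algebra module transported along the inverse of the isomorphism \eqref{isoFusedSym}, that is, the $\omega$-twist in which $\idpta x\idpta$ acts as $\idpts\omega(x)\idpts$. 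So the task becomes to compute this $\omega$-twist of $\idpts(V^{\bar\pi})$.

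The main computation is to check that this twist is isomorphic to $\idpta(V^\pi)$ as an $\idpta\mathbb C[\mathfrak S_n]\idpta$-module. For this I would use the intertwiner $\Phi\colon V^{\bar\pi}\to V^\pi$ of Remark \ref{remarkomega} realizing $(V^{\bar\pi},\rho_{\bar\pi}\circ\omega)\cong(V^\pi,\rho_\pi)$, so that $\Phi\,\rho_{\bar\pi}(\omega(y))=\rho_\pi(y)\,\Phi$ for all $y$. Since $\omega$ is an involution with $\omega(\idpts)=\idpta$, we also have $\omega(\idpta)=\idpts$; taking $y=\idpta x\idpta$ then gives $\Phi\,\rho_{\bar\pi}(\idpts\omega(x)\idpts)=\rho_\pi(\idpta x\idpta)\,\Phi$, while taking $y=\idpta$ gives $\Phi(\idpts V^{\bar\pi})=\idpta V^\pi$. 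Hence $\Phi$ restricts to an isomorphism carrying the twisted $\idpta$-action on $\idpts V^{\bar\pi}$ to the natural $\idpta$-action on $\idpta(V^\pi)$, and combining the three identifications yields $\mathcal C_\varsigma^{(c=2)}\cong\idpta(V^\pi)$ (which is nonzero and irreducible by Theorem \ref{proppVlambda} via \eqref{isoFusedSym}, since $\bar\pi\geq\varsigma^{\text{ord}}$). I expect the bookkeeping here to be the main obstacle: one must keep straight which idempotent $\omega$ sends to which, and verify that the single map $\Phi$ simultaneously intertwines the twisted action and sends the symmetrized subspace $\idpts V^{\bar\pi}$ onto $\idpta V^\pi$.

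Finally, for the descent I would invoke the case $\varsigma=(1^n)$: there $V^\pi$ is a module over $\text{TLM}_n=\mathbb C[\mathfrak S_n]/I$, where $I$ is the two-sided ideal generated by the antisymmetrizers on four consecutive sites, because these antisymmetrizers annihilate $V^\pi$ \cite{Ma}. Consequently $I$ acts as zero on the subspace $\idpta(V^\pi)$ as well, so every element of $\idpta\mathbb C[\mathfrak S_n]\idpta$ lying in $I$ acts trivially, and the $\idpta\mathbb C[\mathfrak S_n]\idpta$-action factors through the idempotent truncation $\idpta\text{TLM}_n\idpta$; this is the general compatibility of idempotent truncation with passage to a quotient algebra. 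Therefore $\mathcal C_\varsigma^{(c=2)}$ descends to $\idpta\text{TLM}_n\idpta$ and is isomorphic, as such a module, to $\idpta(V^\pi)$, as claimed.
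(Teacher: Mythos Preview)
Your proposal is correct and follows essentially the same approach as the paper's proof: identify $\mathcal C_\varsigma^{(c=2)}$ with $\idpta(V^\pi)$ via Proposition~\ref{prop:repfusedsym} combined with the $\omega$-twist of Remark~\ref{remarkomega}, and then use that the four-site antisymmetrizers annihilate $V^\pi$ to obtain the descent. The paper's proof is terser, simply citing Remark~\ref{remarkomega} for the isomorphism and then noting that $a_\varsigma(\text{antisymmetrizer})a_\varsigma$ vanishes on $\mathcal C_\varsigma^{(c=2)}$, whereas you spell out explicitly how the intertwiner $\Phi$ carries $\idpts V^{\bar\pi}$ to $\idpta V^\pi$ and why the truncated ideal acts trivially; but the underlying logic is identical.
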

\begin{proof}
    First of all, we have $P^{\Bar{\pi}}_\varsigma\cong s_\varsigma(V^{\Bar{\pi}})$ as representations of $\idpts \mathbb C[\mathfrak S_n] \idpts$. Moreover, Remark \ref{remarkomega} implies that $C_{\varsigma}^{(c=2)}\cong \idpta( V^{\pi})$ as representations of $\idpta \mathbb C[\mathfrak S_n] \idpta$. Finally, since 
    $$a_\varsigma \lb\sum_{\sigma \in \left\langle\tau_i,\tau_{i+1},\tau_{i+2}\right\rangle} \text{sgn}(\sigma) \sigma\rb a_\varsigma = 0 \quad i\in\llbracket 1, n-3\rrbracket$$ 
    on $C_{\varsigma}^{(c=2)}$, we conclude that the representation descends to the quotient.
\end{proof}

\subsection{Kuperberg algebras}
Just like for the usual Temperley-Lieb algebra, there exists a diagrammatic presentation for the Temperley-Lieb-Martin algebra. Consider the vertical strip $D=\mathbb R\times [0,1]$ with marked boundary points $(i,0)$ and $(i,1)$ for $i\in \llbracket1, d\rrbracket$. Consider oriented planar bipartite graphs embedded in $D$ such that vertices in the interior of $D$ are trivalent, and such that vertices on the boundary are univalent and given by the marked points. Moreover, we orient the graphs such that trivalent vertices are either sources or sinks and such that the edge adjacent to the marked point $(i,j)$ is oriented towards (respectively away from) the marked points if $j=0$ and $s_i=2$ or $j=1$ and $s_i=1$ (respectively $j=0$ and $s_i=1$ or $j=1$ and $s_i=2$). Such graphs, called webs, were introduced in \cite{Ku}. For instance, the following is a web for $\varsigma=(1,2,2,1)$
\begin{align*}
    \vcenter{\hbox{\includegraphics[scale=0.2]{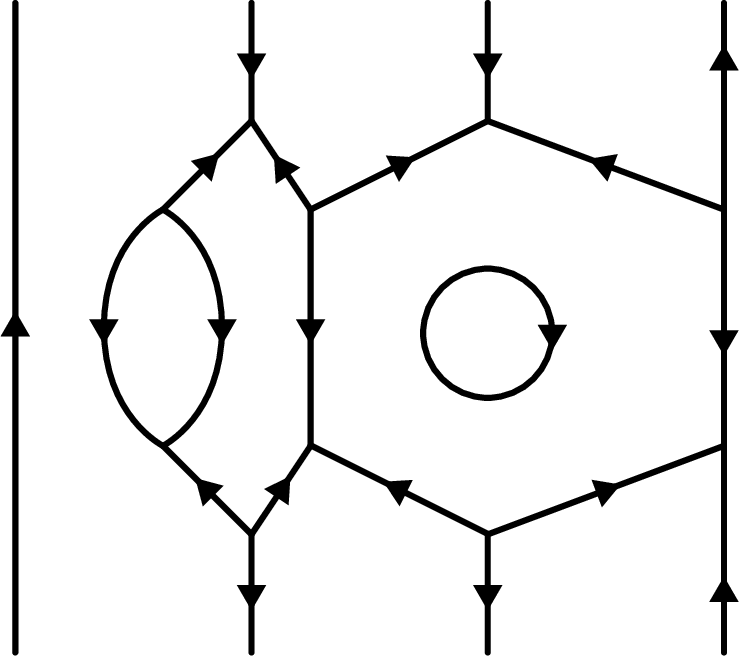}}}
\end{align*}
Denote by $\text{K}^\varsigma$ the quotient of the formal vector space over $\mathbb C$ generated by webs, considered up to isotopy and the following relations:
\begin{align}\label{eq:Kup-rels}
    \vcenter{\hbox{\includegraphics[scale=0.2]{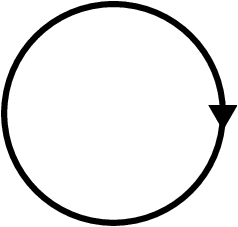}}}\quad=\quad3\;,\qquad
    \vcenter{\hbox{\includegraphics[scale=0.2]{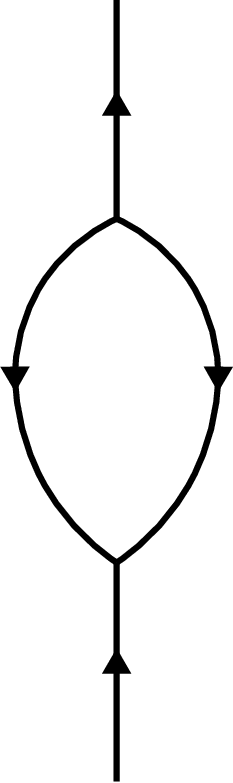}}}\quad=\quad2\;\vcenter{\hbox{\includegraphics[scale=0.2]{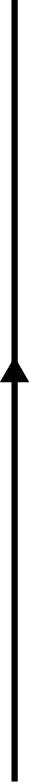}}}\;,\qquad
    \vcenter{\hbox{\includegraphics[scale=0.2]{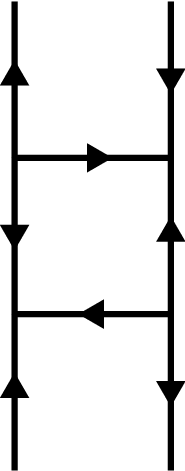}}}\quad=\quad\vcenter{\hbox{\includegraphics[scale=0.2]{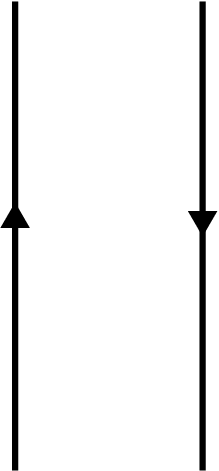}}}\quad +\quad\vcenter{\hbox{\includegraphics[scale=0.2]{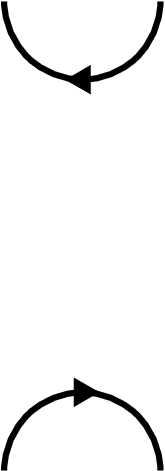}}}.
\end{align}
For instance, in $\text{K}^\varsigma$, we have the following 
\begin{align*}
    \vcenter{\hbox{\includegraphics[scale=0.2]{diagrams/exkup1.eps}}}=6\;\vcenter{\hbox{\includegraphics[scale=0.2]{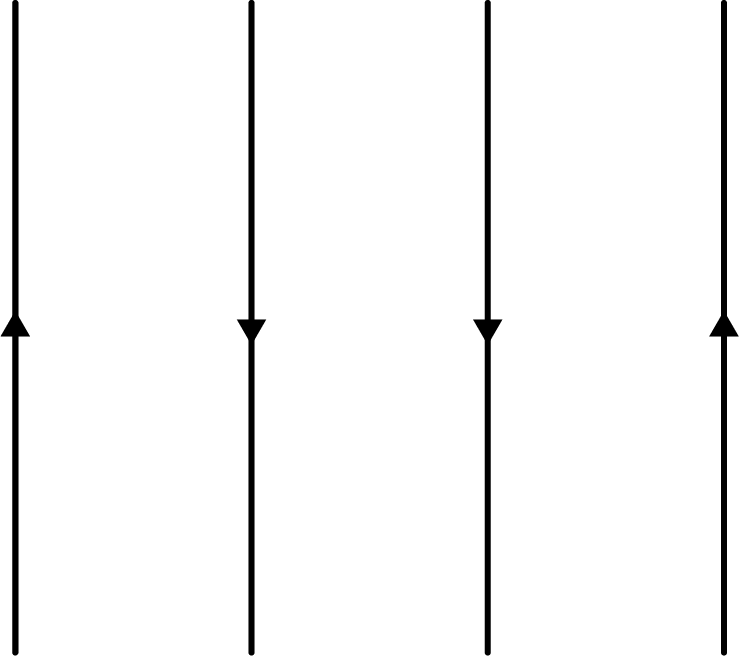}}}\quad+6\;\vcenter{\hbox{\includegraphics[scale=0.2]{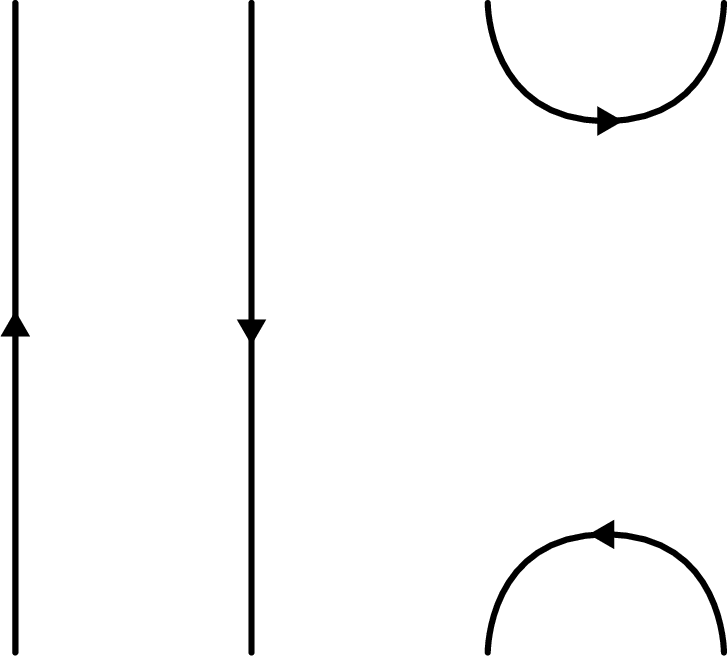}}}\quad+6\;\vcenter{\hbox{\includegraphics[scale=0.2]{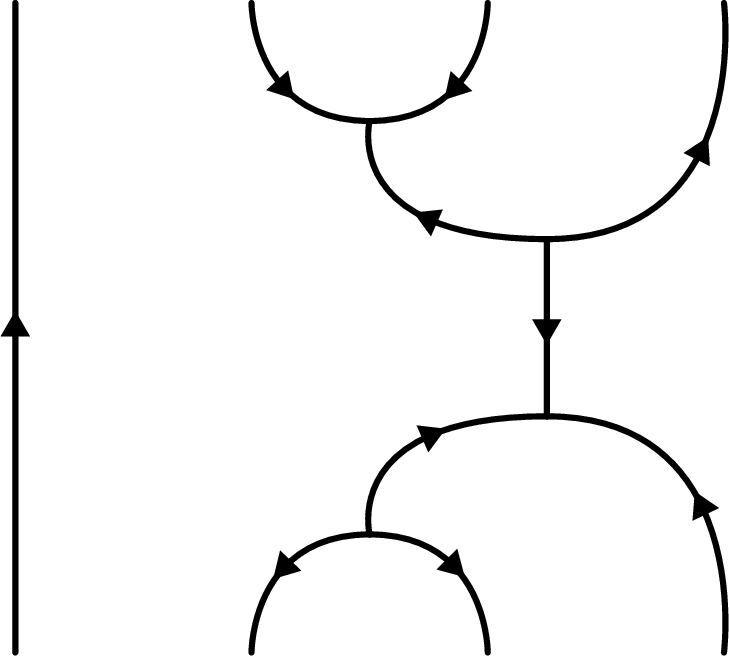}}}
\end{align*}
Given two webs $w_1$ and $w_2$, one can concatenate them placing $w_2$ on top of $w_1$ and obtain a new web $w_2w_1$ by rescaling the vertical direction. This operation can be extended bilinearly to $\text{K}^\varsigma$ to obtain an algebra that we call the Kuperberg algebra (of type $\varsigma$). The unit is given the web consisting of vertical lines joining $(i,0)$ and $(i,1)$ for $i\in \llbracket1, d\rrbracket$. 
For instance, with $\varsigma=(1,1,1)$, consider
\begin{align*}
    &w_1=\vcenter{\hbox{\includegraphics[scale=0.2]{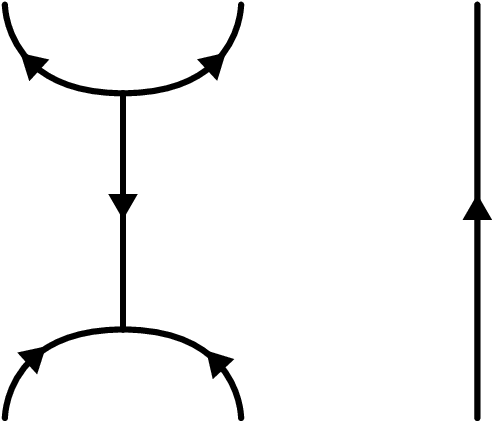}}}\qquad w_2=\vcenter{\hbox{\includegraphics[scale=0.2]{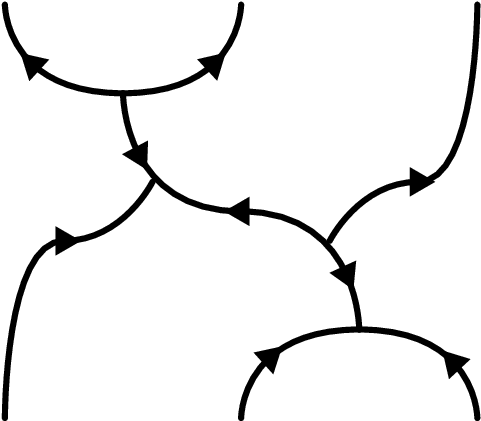}}}
\end{align*}
then we have that
\begin{align*}
    w_2w_1=\vcenter{\hbox{\includegraphics[scale=0.2]{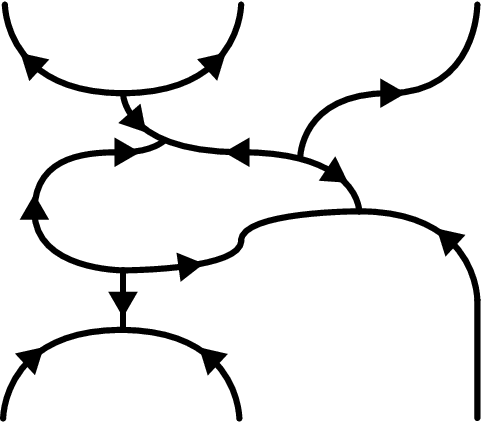}}}=\vcenter{\hbox{\includegraphics[scale=0.2]{diagrams/exkup5.eps}}}+\vcenter{\hbox{\includegraphics[scale=0.2]{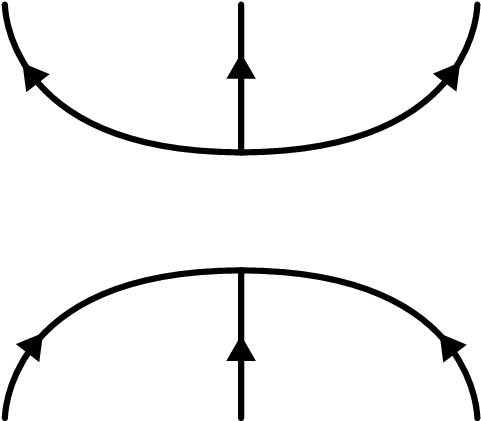}}}
\end{align*}

We have the following Schur-Weyl duality result
\begin{theorem}{\cite[Theorems 5.1 and 6.3]{Ku}}
    \begin{align*}
        \textup{K}^{\varsigma} \cong \textup{End}_{\mathfrak{sl}_3}(V_{s_1}\otimes V_{s_2}\otimes \cdots \otimes V_{s_d}).
    \end{align*}
\end{theorem}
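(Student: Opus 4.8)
The plan is to deduce the statement from the already-available $\varsigma=(1^n)$ case via \emph{fusion}, i.e. an idempotent truncation of Martin's Theorem \ref{thmmartin}. The starting observation is that the second fundamental representation is the antisymmetric square, $V_2 = V_1\wedge V_1$, so it is cut out of $V_1\otimes V_1$ by the $2$-strand antisymmetrizer $\tfrac12(1-\tau)$. Writing $W := V_{s_1}\otimes\cdots\otimes V_{s_d}$ and grouping the $n=\sum_k s_k$ factors of $V_1^{\otimes n}$ into $d$ consecutive blocks of sizes $s_1,\dots,s_d$, one obtains an isomorphism of $\mathfrak{sl}_3$-modules $W\cong \idpta\,(V_1^{\otimes n})$, where $\idpta$ is the block antisymmetrizer of the paper (trivial on a block with $s_k=1$, the $2$-strand antisymmetrizer on a block with $s_k=2$). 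By Schur--Weyl duality $\idpta$ lies in the image of $\mathbb C[\Sym_n]$ inside $\textup{End}_{\mathfrak{sl}_3}(V_1^{\otimes n})$, and since it only ever antisymmetrizes pairs (never triples), it does not vanish in the Temperley-Lieb-Martin quotient.

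First I would invoke the standard idempotent-truncation principle: for any algebra $A$, an $A$-module $U$ and an idempotent $e\in\textup{End}_A(U)$, there is a natural algebra isomorphism $\textup{End}_A(eU)\cong e\,\textup{End}_A(U)\,e$. Applying this with $A=U(\mathfrak{sl}_3)$, $U=V_1^{\otimes n}$ and $e=\idpta$, and then substituting Martin's Theorem \ref{thmmartin}, yields
\begin{align*}
   \textup{End}_{\mathfrak{sl}_3}(W)\;\cong\;\idpta\,\textup{End}_{\mathfrak{sl}_3}(V_1^{\otimes n})\,\idpta\;\cong\;\idpta\,\textup{TLM}_n\,\idpta .
\end{align*}
It therefore suffices to produce an algebra isomorphism $\textup{K}^\varsigma\cong\idpta\,\textup{TLM}_n\,\idpta$, reducing the whole statement to a diagrammatic identity.

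The second step is this diagrammatic identification, which is internal to the web calculus. The Temperley-Lieb-Martin algebra is itself the Kuperberg algebra whose boundary points all carry $V_1$, i.e. $\textup{TLM}_n\cong\textup{K}^{(1^n)}$, with the generating trivalent vertex realized by the determinant intertwiner $V_1^{\otimes 3}\to\mathbb C$ and its dual. I would define $\Psi:\textup{K}^\varsigma\to\idpta\,\textup{TLM}_n\,\idpta$ by resolving each $V_2$-labelled boundary point into a pair of $V_1$-points capped by the antisymmetrizer clasp $\idpta$, and sending the defining vertices and edges of a web to the corresponding composites of elementary $\mathfrak{sl}_3$-intertwiners. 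Well-definedness amounts to checking that the three relations in \eqref{eq:Kup-rels} are images of genuine identities among intertwiners: the closed loop evaluates to $\dim V_1=3$, the bigon to $2$, and the square to the classical resolution of $V_1\otimes V_1$ containing $V_2$ with multiplicity one, all being the $q=1$ specializations of the quantum dimensions $[3]=3$, $[2]=2$ of Kuperberg's spider relations. Surjectivity of $\Psi$ is then immediate, since clasping pairs of $V_1$-legs of an arbitrary $\idpta$-truncated web produces a web in $\textup{K}^\varsigma$.

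Injectivity is where the genuine work lies, and it is exactly the content of Kuperberg's basis theorem (Theorem~6.3 in \cite{Ku}): the non-elliptic (reduced) webs $\Lambda^\varsigma$ span $\textup{K}^\varsigma$ modulo \eqref{eq:Kup-rels} and are linearly independent in $\textup{End}_{\mathfrak{sl}_3}(W)$. I would close the argument by a dimension count, matching $|\Lambda^\varsigma|$ against $\dim\textup{End}_{\mathfrak{sl}_3}(W)=\sum_\mu m_\mu^2$, where $m_\mu$ is the multiplicity of the irreducible $V^\mu$ in $W$; by the double-centralizer theorem this multiplicity is governed by the same Littlewood-Richardson/Kostka data that enumerates non-elliptic webs, so the two counts agree and the surjection $\Psi$ is forced to be an isomorphism. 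The main obstacle is precisely this completeness of the relations, i.e. proving that the non-elliptic webs are a \emph{basis} rather than merely a spanning set (equivalently, that no web identity beyond \eqref{eq:Kup-rels} is required); the well-definedness, surjectivity, and the fact that webs span the invariants are comparatively routine once Martin's Theorem and the classical invariant theory of $SL_3$ are in hand.
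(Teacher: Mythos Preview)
The paper does not prove this statement at all: it is quoted verbatim as Theorems~5.1 and~6.3 of \cite{Ku} and used as a black box. So there is no ``paper's own proof'' to compare against.

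Your proposal is a reasonable \emph{outline} of why Kuperberg's two theorems assemble into the stated isomorphism, but it is not an independent proof, and in the logical order of the present paper it is circular in two places. First, you invoke $\textup{TLM}_n\cong\textup{K}^{(1^n)}$; in this paper that isomorphism is Corollary~\ref{propschurweyl}, which is itself \emph{deduced from} the theorem you are trying to prove (specialised to $\varsigma=(1^n)$) together with Martin's theorem. You would need to establish $\textup{K}^{(1^n)}\cong\textup{End}_{\mathfrak{sl}_3}(V_1^{\otimes n})$ by other means, and that is already the $\varsigma=(1^n)$ instance of the statement. Second, and more seriously, for injectivity of your map $\Psi$ you explicitly appeal to ``Kuperberg's basis theorem (Theorem~6.3 in \cite{Ku})'', which is precisely one of the two results being cited in the statement. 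You acknowledge this yourself (``Injectivity is where the genuine work lies, and it is exactly the content of Kuperberg's basis theorem''), so what you have written is a reduction of the statement to its own hardest ingredient rather than a proof.

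The parts that \emph{are} genuinely yours---the idempotent truncation $\textup{End}_{\mathfrak{sl}_3}(W)\cong \idpta\,\textup{End}_{\mathfrak{sl}_3}(V_1^{\otimes n})\,\idpta$ and the diagrammatic cutting argument---are correct, and they reproduce exactly what the paper proves later as Proposition~\ref{propcut}. But those steps sit logically \emph{downstream} of the cited theorem, not upstream of it.
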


\begin{corollary}\label{propschurweyl}
    The Temperley-Lieb-Martin algebra $\textup{TLM}_n$ is isomorphic to $\textup{K}^{(1^n)}$. The isomorphism maps a generator $\tau_i$ of $\textup{TLM}_n$ as 
    \begin{align}\label{mapkup}
    \tau_i\mapsto \quad\vcenter{\hbox{\includegraphics[scale=0.2]{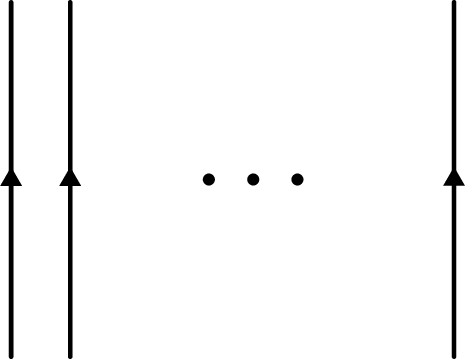}}}\quad-\quad\vcenter{\hbox{\includegraphics[scale=0.2]{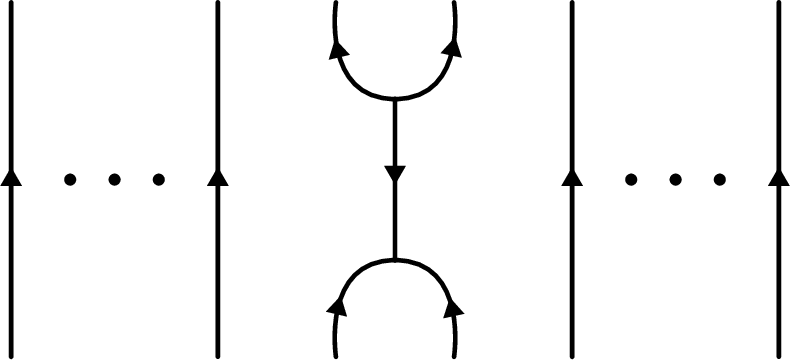}}}
\end{align}
where the connected component of the second diagram containing trivalent vertices is adjacent to boundary points $(i,0)$, $(i+1,0)$, $(i,1)$ and $(i+1,1)$.
\end{corollary}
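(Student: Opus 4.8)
The plan is to obtain the isomorphism by composing two instances of Schur--Weyl duality and then to identify the image of each generator $\tau_i$ by computing inside $\textup{End}_{\mathfrak{sl}_3}(V_1^{\otimes n})$. First I would specialize the Schur--Weyl theorem for Kuperberg algebras stated above to $\varsigma=(1^n)$: since $s_i=1$ for all $i$, it yields $\textup{K}^{(1^n)}\cong\textup{End}_{\mathfrak{sl}_3}(V_1^{\otimes n})$. Combining this with Theorem \ref{thmmartin}, which gives $\textup{TLM}_n\cong\textup{End}_{\mathfrak{sl}_3}(V_1^{\otimes n})$, produces the desired abstract isomorphism $\textup{TLM}_n\cong\textup{K}^{(1^n)}$. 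Note that the $\omega$-twist entering the module structure on $\mathcal C^{(c=2)}_{(1^n)}$ plays no role here, since the present statement concerns only the two algebras, each realized concretely as the same endomorphism algebra.

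It then remains to check that under this composite the generator $\tau_i$ is sent to the web in \eqref{mapkup}. Under the isomorphism of Theorem \ref{thmmartin}, $\tau_i=(i,i+1)$ acts on $V_1^{\otimes n}$ as the transposition operator $P_{i,i+1}$ permuting the $i$th and $(i+1)$th tensor factors; indeed, $\textup{TLM}_n$ is the image of $\mathbb C[\mathfrak S_n]$ in $\textup{End}(V_1^{\otimes n})$, whose kernel is generated by the four-site antisymmetrizer because $\wedge^4 V_1=0$ when $\dim V_1=3$. On the web side, the identity strands map to $\mathrm{Id}$, while the H-diagram, consisting of two trivalent vertices adjacent to the sites $i,i+1$, maps to $b^*\circ b$, where $b\colon V_1\otimes V_1\to\wedge^2 V_1=V_2$ is the merging map attached to the source vertex and $b^*\colon V_2\to V_1\otimes V_1$ the splitting map attached to the sink vertex.

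This reduces the claim to a short computation in $\textup{End}(V_1\otimes V_1)$. The bigon relation in \eqref{eq:Kup-rels} states $b\circ b^*=2\,\mathrm{Id}_{V_2}$, so $(b^*b)^2=b^*(bb^*)b=2\,b^*b$; hence $b^*b$ has eigenvalues $0$ and $2$ and equals $2P_\wedge$, where $P_\wedge$ is the orthogonal projector onto $\wedge^2 V_1\subset V_1\otimes V_1$. Since the transposition operator acts as $+1$ on $\operatorname{Sym}^2 V_1$ and $-1$ on $\wedge^2 V_1$, we have $P_{i,i+1}=\mathrm{Id}-2P_\wedge=\mathrm{Id}-b^*b$, which is exactly the element represented by the right-hand side of \eqref{mapkup}. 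As the $\tau_i$ generate $\textup{TLM}_n$, this determines the isomorphism on all of $\textup{TLM}_n$ and completes the argument.

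The main obstacle I anticipate is purely bookkeeping: getting the scalar in the bigon computation right (the value $2=[2]_q$ at the specialization $q=1$, consistent with the loop value $3=\dim V_1$ in the first relation of \eqref{eq:Kup-rels}) and checking that the orientation conventions for the trivalent vertices indeed fix the maps $b$ and $b^*$ as merging and splitting. No genuinely new idea is required beyond the two Schur--Weyl statements already quoted.
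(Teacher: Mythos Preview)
Your proposal is correct and follows the same overall strategy as the paper: specialize Kuperberg's Schur--Weyl duality to $\varsigma=(1^n)$ and combine it with Theorem~\ref{thmmartin} to obtain the abstract isomorphism $\textup{TLM}_n\cong\textup{K}^{(1^n)}$.

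The only difference lies in how the explicit form \eqref{mapkup} is verified. The paper checks directly, using the Kuperberg relations \eqref{eq:Kup-rels}, that the web combination on the right of \eqref{mapkup} satisfies the presentation \eqref{TLMquotient} of $\textup{TLM}_n$, thereby producing a homomorphism which is an isomorphism by the dimension count implied by the two Schur--Weyl statements. You instead compute both sides inside $\textup{End}(V_1\otimes V_1)$: the H-web is $\iota\pi$ with $\pi\iota=2\,\mathrm{Id}_{V_2}$ from the bigon relation, hence $\iota\pi=2P_\wedge$, and the transposition equals $\mathrm{Id}-2P_\wedge$. This is an equally valid and arguably more transparent route, since it directly identifies the claimed map with the composite of the two Schur--Weyl isomorphisms rather than first exhibiting a homomorphism and then arguing separately that it is bijective.
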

\begin{proof}
    We have that 
    \begin{align}
        \textup{K}^{(1^n)} \cong \textup{End}_{\mathfrak{sl}_3}(V_1^{\otimes n}).
    \end{align}
    Therefore, by Theorem \ref{thmmartin},
    \begin{align}
        \textup{K}^{(1^n)} \cong \textup{TLM}_n.
    \end{align}
    It is then readily showed that the linear combination of webs \eqref{mapkup} satisfy the relations \eqref{TLMquotient}.
\end{proof}

We now describe a certain irreducible representation of $\text{K}^\varsigma$. Instead of a vertical strip, consider the lower half plane $D'=\mathbb R\times \mathbb R_{\leq 0}$ with marked points $(i,0)$ for $i\in \llbracket1, d\rrbracket$. Consider oriented planar bipartite graphs embedded in $D'$ such that vertices in the interior of $D'$ are trivalent, and such that vertices on the boundary are univalent and given by the marked points. Again, we orient the graphs such that trivalent vertices are either sources or sinks and such that the edge adjacent to the marked point $(i,0)$ is oriented towards (respectively away from) the marked points if $s_i=1$ (respectively $s_i=2$). We also call such graphs webs. The formal vector space over $\mathbb C$ generated by webs, considered up to isotopy and relations \eqref{eq:Kup-rels} is denoted by $\text{W}^\varsigma$. Given webs $x\in\text{K}^\varsigma$ and  $w\in\text{W}^\varsigma$, one obtain a new web $xw\in \text{W}^\varsigma$ by placing $x$ on top of $w$ and translating vertically. Once extended bilinearly, this endows $\text{W}^\varsigma$ with the structure of left $\text{K}^\varsigma$-module.

A basis for $\text{W}^\varsigma$ is given by so-called non-elliptic webs, or reduced webs, which are webs that do not contain any loop, digon or square. These webs cannot be reduced, meaning written as an expansion of smaller graphs, further using the relations \eqref{eq:Kup-rels}.

\begin{lemma}
    $\mathcal C_{(1^n)}^{(c=2)}$, as a representation of $\textup{TLM}_n$ is isomorphic to $\textup{W}^{(1^n)}$.
\end{lemma}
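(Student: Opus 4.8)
The plan is to identify both $\mathcal C_{(1^n)}^{(c=2)}$ and $\text{W}^{(1^n)}$ with a single irreducible $\text{TLM}_n$-module, namely the Specht module $V^\pi$ with $\pi$ the partition \eqref{defpi}, and then conclude by semisimplicity of $\text{TLM}_n$. On the conformal-block side the work is already done: as recalled just before Theorem \ref{thmmartin}, the $\omega$-twisted action on $P^{\bar\pi}$ realizes, via Remark \ref{remarkomega}, the module $V^\pi$ with its ordinary $\mathfrak S_n$-action, and this action descends to $\text{TLM}_n$, so $\mathcal C_{(1^n)}^{(c=2)}\cong V^\pi$ as $\text{TLM}_n$-modules.

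On the web side I would argue as follows. By Corollary \ref{propschurweyl} we have $\text{TLM}_n\cong\text{K}^{(1^n)}\cong\text{End}_{\mathfrak{sl}_3}(V_1^{\otimes n})$, and by the Schur--Weyl duality of \cite{Ku} the half-plane web space $\text{W}^{(1^n)}$ is identified with the invariant space $\text{Hom}_{\mathfrak{sl}_3}(\mathbb C,V_1^{\otimes n})=\text{Inv}(V_1^{\otimes n})$, the reduced webs furnishing a basis. By the double commutant theorem, $\text{Inv}(V_1^{\otimes n})$ is an irreducible module over the centralizer $\text{End}_{\mathfrak{sl}_3}(V_1^{\otimes n})=\text{TLM}_n$. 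Classical $\mathfrak{gl}_3$--$\mathfrak S_n$ Schur--Weyl duality then realizes this invariant space as the multiplicity space of the trivial $\mathfrak{sl}_3$-representation inside $V_1^{\otimes n}$; since the only partition $\lambda\vdash n$ with at most three rows whose associated $\mathfrak{gl}_3$-module restricts to the trivial $\mathfrak{sl}_3$-module is $\lambda=(k,k,k)$ with $3k=n$, i.e.\ $\lambda=\pi$, we obtain $\text{W}^{(1^n)}\cong V^\pi$.

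The key point to verify, and where I expect the only genuine subtlety, is the compatibility of the two $\text{TLM}_n$-actions: one must check that no residual sign twist survives between the web action and the $\omega$-twisted action on the conformal blocks. I would settle this by computing the image of the generator under Corollary \ref{propschurweyl}: on the factors $i,i{+}1$ it is $\mathrm{Id}-2\,\Pi_{\wedge^2}$, which equals the transposition operator $P_{i,i+1}$ on $V_1^{\otimes n}$ (here the relation $\eqref{eq:Kup-rels}$ assigning the value $2$ to an internal bubble is exactly what makes the coefficient come out right). Hence the $\text{TLM}_n$-action on $\text{Inv}(V_1^{\otimes n})$ is the ordinary permutation action on $V^\pi$, matching the ordinary action recovered on the conformal-block side after the $\omega$-twist. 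With both modules thus identified as $V^\pi$ carrying the same action, the semisimplicity of $\text{TLM}_n$ (Theorem \ref{thmmartin}) yields the isomorphism $\mathcal C_{(1^n)}^{(c=2)}\cong \text{W}^{(1^n)}$ of $\text{TLM}_n$-modules.
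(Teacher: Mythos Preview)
Your argument is correct and follows the same overall architecture as the paper: identify both $\mathcal C_{(1^n)}^{(c=2)}$ and $\text{W}^{(1^n)}$ with the Specht module $V^\pi$ (carrying its ordinary $\mathfrak S_n$-action, restricted to $\text{TLM}_n$) and conclude by transitivity. The conformal-block side is handled identically in both proofs, using the $\omega$-twist and Remark~\ref{remarkomega}.

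The difference lies on the web side. The paper simply cites Lemma~4.2 of \cite{KPR} for the isomorphism $V^\pi\cong\text{W}^{(1^n)}$ of $\mathfrak S_n$-modules and then observes that both actions descend to the quotient $\text{TLM}_n$. You instead derive this isomorphism from first principles: you invoke Kuperberg's identification of $\text{W}^{(1^n)}$ with $\text{Inv}(V_1^{\otimes n})$, use classical $\mathfrak{gl}_3$--$\mathfrak S_n$ Schur--Weyl duality to see that this multiplicity space is $V^\pi$, and then explicitly verify that the generator $\tau_i$ sent via Corollary~\ref{propschurweyl} acts as the transposition $P_{i,i+1}$ on $V_1^{\otimes n}$ (your computation $\text{Id}-e_i=\text{Id}-2\Pi_{\wedge^2}=P_{i,i+1}$ is correct, the digon relation supplying exactly the factor $2$). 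Your route is more self-contained within the Schur--Weyl framework already built up in the paper and makes the action compatibility explicit; the paper's route is shorter but outsources the key identification. The closing appeal to semisimplicity is harmless but unnecessary: once both spaces are identified with $V^\pi$ carrying the same action, the isomorphism is immediate.
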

\begin{proof}
    $\mathcal C_{(1^n)}^{(c=2)}$ is isomorphic to ${V}^{\pi}$ as a representation of $ \mathbb C [\mathfrak{S}_n]$. By Lemma 4.2 of \cite{KPR}, ${V}^{\pi}$ is isomorphic to $\text{W}^{(1^n)}$. Since these representations descend to the quotient $\text{TLM}_n$, this concludes the proof.
\end{proof}

\begin{remark}
    Let $f\in \mathcal C_{(1^n)}^{(c=2)}$. Consider the braiding $b_k$ of two consecutive points $x_k$ and $x_{k+1}$ in a counterclockwise manner. It acts as follows:
\begin{align*}
   b_k f= e^{i\frac{2\pi}{3}}\lb \prod_{1\leq i<j\leq n} (x_j-x_i)\rb ^{-\frac{1}{3}} (-\tau_k) \mathcal P.
\end{align*}
Hence, on the representation $\mathcal C_{(1^n)}^{(c=2)}$, we have that
\begin{align*}
    b_k= e^{-i\frac{2\pi}{3}}\, 1 + e^{i\frac{\pi}{3}}\, e_k,
\end{align*}
and we recognize the skein relations of the $A_2$ spider \cite{Ku}.
\end{remark}

For general $\varsigma$, we have the following results

\begin{proposition}
\label{propcut}
    $\idpta \textup{TLM}_n \idpta$ is isomorphic to $\textup{K}^\varsigma$.
\end{proposition}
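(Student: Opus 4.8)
The plan is to pass through endomorphism algebras via Schur--Weyl duality and to realize $\idpta$ as an $\mathfrak{sl}_3$-equivariant idempotent. By Theorem \ref{thmmartin} we have $\textup{TLM}_n \cong \textup{End}_{\mathfrak{sl}_3}(V_1^{\otimes n})$, where the image of $\mathbb C[\mathfrak S_n]$ acts by permutation of tensor factors. Under this identification $\idpta$ becomes the operator that antisymmetrizes the $n$ factors blockwise, the $k$-th block consisting of the $s_k$ factors indexed by $\llbracket p_k, p_{k+1}-1\rrbracket$. First I would check that $\idpta$ does not vanish in $\textup{TLM}_n$: since every block has size $s_k \leq 2$, the relevant antisymmetrizers act on at most two consecutive sites, so none of them is the antisymmetrizer on four consecutive sites that is set to zero in the quotient \eqref{TLMquotient}. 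Thus $\idpta$ descends to a genuine nonzero idempotent of $\textup{TLM}_n$, and because it lies in the image of $\mathbb C[\mathfrak S_n]$ it commutes with the $\mathfrak{sl}_3$-action.

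The key step is a general fact about idempotents in endomorphism algebras, which I would isolate as a lemma: if $W$ is a module over an algebra $G$, $A = \textup{End}_G(W)$, and $e\in A$ is an idempotent, then $eW$ is a $G$-submodule and restriction induces an algebra isomorphism $eAe \cong \textup{End}_G(eW)$. Injectivity follows from $eae = (eae)e$, so that an element of $eAe$ is determined by its action on $eW$; surjectivity follows by extending $\psi\in\textup{End}_G(eW)$ to $\iota\circ\psi\circ e\in A$, where $\iota\colon eW\hookrightarrow W$ is the inclusion and $e\colon W\to eW$ the projection, and checking that $e(\iota\psi e)e$ restricts to $\psi$. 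Applying this with the commutant taken over $\mathfrak{sl}_3$, $W = V_1^{\otimes n}$ and $e = \idpta$ yields
\begin{align*}
    \idpta \, \textup{TLM}_n \, \idpta \;\cong\; \idpta \, \textup{End}_{\mathfrak{sl}_3}(V_1^{\otimes n}) \, \idpta \;\cong\; \textup{End}_{\mathfrak{sl}_3}\!\lb \idpta \, V_1^{\otimes n}\rb.
\end{align*}

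It then remains to identify the module $\idpta \, V_1^{\otimes n}$. Since $\idpta$ antisymmetrizes the $k$-th block of $s_k$ factors, its image is the $\mathfrak{sl}_3$-module $\bigwedge^{s_1} V_1 \otimes \cdots \otimes \bigwedge^{s_d} V_1$. As $\bigwedge^1 V_1 = V_1$ and $\bigwedge^2 V_1 = V_1\wedge V_1 = V_2$, this is precisely $V_{s_1} \otimes V_{s_2} \otimes \cdots \otimes V_{s_d}$. Combining with the Kuperberg Schur--Weyl theorem $\textup{K}^\varsigma \cong \textup{End}_{\mathfrak{sl}_3}(V_{s_1} \otimes \cdots \otimes V_{s_d})$ then gives $\idpta \, \textup{TLM}_n \, \idpta \cong \textup{K}^\varsigma$, as desired.

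The main obstacle is making the idempotent lemma fully rigorous as an \emph{algebra} isomorphism and ensuring that all identifications are compatible, in particular that the composite isomorphism is multiplicative and sends the unit $\idpta$ of $\idpta\,\textup{TLM}_n\,\idpta$ to the identity endomorphism of $V_{s_1}\otimes\cdots\otimes V_{s_d}$. A secondary subtlety worth spelling out is the blockwise identification $\idpta \, V_1^{\otimes n} \cong \bigotimes_k \bigwedge^{s_k} V_1$, which relies on the normalization by $1/(s_1!\cdots s_d!)$ in the definition of $\idpta$ that makes it genuinely idempotent. Once Theorem \ref{thmmartin} and Kuperberg's theorem are in hand, the remaining arguments are formal.
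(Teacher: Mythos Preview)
Your proof is correct and takes a genuinely different route from the paper. The paper works entirely diagrammatically: it first identifies $\textup{TLM}_n$ with $\textup{K}^{(1^n)}$ via Corollary~\ref{propschurweyl}, then realizes $\idpta$ as an explicit web idempotent (an ``H''-shaped diagram on each block of size~$2$), and constructs the isomorphism $\idpta\,\textup{K}^{(1^n)}\,\idpta \to \textup{K}^\varsigma$ by a cut-and-glue procedure on diagrams, with explicit $\sqrt{2}$ normalization factors and an explicitly described inverse. You instead stay on the representation-theoretic side: both $\textup{TLM}_n$ and $\textup{K}^\varsigma$ are realized as $\mathfrak{sl}_3$-endomorphism algebras by the cited Schur--Weyl results, and the isomorphism follows from the standard lemma $e\,\textup{End}_G(W)\,e \cong \textup{End}_G(eW)$ combined with the identification $\idpta\,V_1^{\otimes n} \cong \bigotimes_k \bigwedge^{s_k} V_1 = V_{s_1}\otimes\cdots\otimes V_{s_d}$. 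Your argument is shorter and more conceptual, and the subtleties you flag (multiplicativity, units, normalization) are all routine. The paper's argument, by contrast, produces an explicit diagrammatic map, which it then reuses verbatim in the proof of Proposition~\ref{prop:kupalgrep} on the module $\textup{W}^\varsigma$; if you go your route, that proposition would likewise follow from the companion fact $e\,W \cong \textup{Inv}_{\mathfrak{sl}_3}(V_{s_1}\otimes\cdots\otimes V_{s_d})$ rather than from diagram surgery.
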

\begin{proof}
    $\idpta \text{TLM}_n \idpta$ is isomorphic to the diagram algebra $\idpta \text{K}^{(1^n)} \idpta$ where $\idpta$ is understood as a diagram given by vertical lines on site $p_i$ if $s_i=1$ and the following idempotent on sites $p_i$ and $p_{i}+1$ if $s_i=2$ :
    \begin{align*}
        \frac{1}{2}\vcenter{\hbox{\includegraphics[scale=0.2]{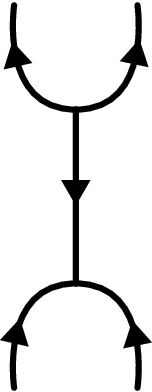}}}
    \end{align*}
    From a diagram in $\idpta \text{K}^{(1^n)} \idpta$, one can cut the above diagram as
    \begin{align*}
        \vcenter{\hbox{\includegraphics[scale=0.2]{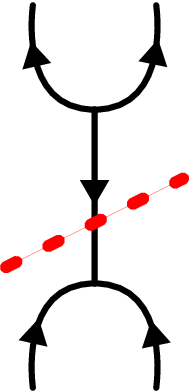}}}
    \end{align*}
    and then remove the upper part (resp. the lower part) if the idempotent sits on the top boundary (resp. the bottom boundary). We finally multiply the resulting diagram by $\sqrt{2}^{n-d}$ to account for the part of the diagram removed. This produces a diagram in $\text{K}^\varsigma$. This mapping can be extended linearly and yields an algebra homomorphism $\idpta \text{K}^{(1^n)} \idpta\rightarrow \text{K}^\varsigma$. The inverse is obtained by concatenating a diagram in $\text{K}^\varsigma$ with 
    \begin{align*}
        \frac{1}{\sqrt{2}^{3}}\vcenter{\hbox{\includegraphics[scale=0.2]{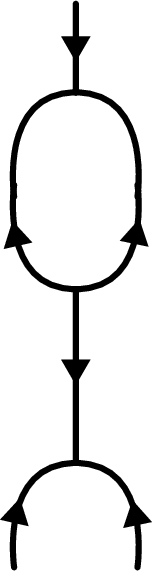}}}
    \end{align*}
    for every downward oriented arrow on the bottom boundary, and with
    \begin{align*}
        \frac{1}{\sqrt{2}^{3}}\vcenter{\hbox{\includegraphics[scale=0.2]{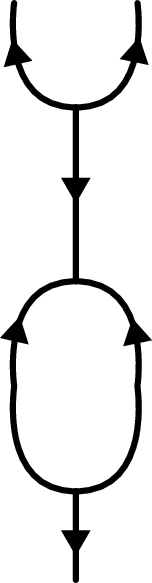}}}
    \end{align*}
    for every downward oriented arrow on the top boundary.
\end{proof}
\begin{proposition}\label{prop:kupalgrep}
    $\mathcal C_{\varsigma}^{(c=2)}$, as a representation of $\textup{K}^\varsigma$, is isomorphic to $\textup{W}^\varsigma$.
\end{proposition}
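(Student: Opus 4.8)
The plan is to deduce the general-$\varsigma$ statement from the already established case $\varsigma=(1^n)$ by idempotent truncation, paralleling the passage from $\text{TLM}_n$ to $\text{K}^\varsigma$ carried out in Proposition \ref{propcut}. The ingredients I would assemble are: the isomorphism $\mathcal C_{(1^n)}^{(c=2)}\cong \text{W}^{(1^n)}$ of $\text{TLM}_n$-modules established in the Lemma above; the identifications $\mathcal C_{(1^n)}^{(c=2)}\cong V^\pi$ and $\mathcal C_\varsigma^{(c=2)}\cong \idpta(V^\pi)$ from the preceding propositions; and the algebra isomorphism $\idpta\,\text{TLM}_n\,\idpta\cong \text{K}^\varsigma$ of Proposition \ref{propcut}.

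First I would apply the idempotent-truncation functor $M\mapsto \idpta M$, which sends left $\text{TLM}_n$-modules to left $\idpta\,\text{TLM}_n\,\idpta$-modules, is additive and exact, and in particular preserves isomorphism classes (an $A$-linear isomorphism $\phi$ commutes with the action of $\idpta\in A$, hence restricts to $\idpta M$). Applying it to $\mathcal C_{(1^n)}^{(c=2)}\cong \text{W}^{(1^n)}$ yields $\idpta\,\mathcal C_{(1^n)}^{(c=2)}\cong \idpta\,\text{W}^{(1^n)}$ as $\idpta\,\text{TLM}_n\,\idpta$-modules. Since $\mathcal C_{(1^n)}^{(c=2)}\cong V^\pi$, the left-hand side is $\idpta(V^\pi)$, which by the preceding proposition is isomorphic to $\mathcal C_\varsigma^{(c=2)}$. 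Hence $\mathcal C_\varsigma^{(c=2)}\cong \idpta\,\text{W}^{(1^n)}$ as modules over $\idpta\,\text{TLM}_n\,\idpta$.

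It then remains to identify $\idpta\,\text{W}^{(1^n)}$ with $\text{W}^\varsigma$, compatibly with the algebra isomorphism $\idpta\,\text{TLM}_n\,\idpta\cong \text{K}^\varsigma$. For the underlying vector-space bijection I would reuse the cut-and-paste of Proposition \ref{propcut}, now applied in the lower half plane $D'$ rather than in the strip $D$: at each pair of boundary sites $(p_i,0),(p_i+1,0)$ with $s_i=2$, a web in $\idpta\,\text{W}^{(1^n)}$ carries the antisymmetrizer idempotent, which one cuts and removes, multiplying by the matching power of $\sqrt2$, to produce a web on the $d$ fused sites, i.e. an element of $\text{W}^\varsigma$; the inverse concatenates the same local idempotents. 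On non-elliptic (reduced) webs this is a bijection of the two distinguished bases, hence a linear isomorphism.

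The hard part will be checking that this half-plane identification intertwines the two actions, namely that stacking an element of $\idpta\,\text{TLM}_n\,\idpta$ onto a half-plane web and then cutting coincides with first cutting and then stacking the corresponding element of $\text{K}^\varsigma$ supplied by Proposition \ref{propcut}. This is a local compatibility of the two cut-and-paste recipes, one for the algebra and one for the module, and the delicate point is the bookkeeping of the $\sqrt2$ normalization factors, which must be distributed consistently so that the module and algebra identifications agree simultaneously. Once this intertwining is verified, composing the chain of isomorphisms gives $\mathcal C_\varsigma^{(c=2)}\cong \text{W}^\varsigma$ as representations of $\text{K}^\varsigma$, as claimed.
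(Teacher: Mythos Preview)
Your proposal is correct and follows essentially the same route as the paper: both obtain $\mathcal C_\varsigma^{(c=2)}\cong \idpta(\text{W}^{(1^n)})$ from the $\varsigma=(1^n)$ case via idempotent truncation, and then identify $\idpta(\text{W}^{(1^n)})\cong \text{W}^\varsigma$ by the same cut-and-paste construction (with the $\sqrt{2}^{n-d}$ normalization) used in Proposition~\ref{propcut}, now applied at the boundary of the half-plane. The only difference is one of emphasis: you flag the intertwining of the module and algebra identifications as ``the hard part'', whereas the paper simply asserts that the linear isomorphism is visibly $\text{K}^\varsigma$-equivariant---which indeed it is, since stacking an algebra diagram on top happens away from the half-plane cut sites and the local cut/paste operations commute with such stacking by the same idempotent relation $\idpta^2=\idpta$ that drives Proposition~\ref{propcut}.
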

\begin{proof}
    We have that $\mathcal C_{\varsigma}^{(c=2)}\cong \idpta(\text{W}^{(1^n)})$. By analogy with Proposition \ref{propcut}, we obtain a linear isomorphism $\idpta(\text{W}^{(1^n)})\rightarrow \text{W}^\varsigma$ by removing the upper part of
    \begin{align*}
        \vcenter{\hbox{\includegraphics[scale=0.2]{diagrams/iso2.eps}}}
    \end{align*}
    appearing at the top boundary and multiplying by $\sqrt{2}^{n-d}$. The inverse is given by concatenating a diagram in $\text{W}^\varsigma$ with     
    \begin{align*}
        \frac{1}{\sqrt{2}^{3}}\vcenter{\hbox{\includegraphics[scale=0.2]{diagrams/iso4.eps}}}
    \end{align*}
    for every downward oriented arrow on the top boundary. It is clear that this linear isomorphism is an isomorphism of representations of $\text{K}^\varsigma$.
\end{proof}

$\textup{W}^\varsigma$ defines a left module for $\textup{K}^\varsigma$ and in fact can be identified with the space of vectors in $V_{s_1}\otimes V_{s_2}\otimes \cdots \otimes V_{s_d}$ invariant under the $\mathfrak{sl}_3$ action \cite{Ku}. Similarly, one can define the space $\Bar{\textup{W}}^\varsigma$ of webs obtained from $\textup{W}^\varsigma$ by horizontal reflection and reversal of the arrows. This space defines a right module for $\textup{K}^\varsigma$ and in fact can be identified with the space of $\mathfrak{sl}_3$ invariants (linear forms) of $V_{s_1}\otimes V_{s_2}\otimes \cdots \otimes V_{s_d}$. 

\section{Application to the triple dimer model } \label{sec:dimer}
This section is devoted to relating $W_3$ conformal blocks to the scaling limits of connection probabilities in the triple dimer model. The latter were first studied in \cite{KS}, see also \cite{FLL,DKS,KO,Ta1,Ta2} for recent results on $r$-fold dimer models. This section is mostly independent from Section \ref{sec:algebra}, although we need some definitions and properties of webs \cite{Ku} that were recalled there.

Let $x_1<\cdots< x_d$ be points on the real lines. To each point $x_i$ we fix an associated valence $s_i$ satisfying $\sum_i s_i = 3k$ for some $k\in \mathbb N_{>0}$. We will approximate the upper-half plane by a sequence of graphs with marked boundary vertices close to the points $\{x_i\}$. Let $\epsilon\in \mathbb R_{>0}$ and $\mathbb H_\epsilon =\epsilon \mathbb Z \times \epsilon \mathbb N_{>0}$. To use the results of \cite{Ke}, we consider simply connected Temperleyan graphs $G_N\subset \mathbb H_\epsilon$ containing the rectangle $[-N\epsilon,N\epsilon]\times[\epsilon, N\epsilon]$ such that its base point is outside of this rectangle. Let $w$ and $b$ be white and black vertices respectively on the boundary $[-N\epsilon,N\epsilon]\times \{\epsilon\}$. In the $N\to \infty$ limit, the inverse Kasteleyn matrix satisfy
\begin{align*}
    C(w,b)=\frac{2\epsilon}{\pi(b-w)} + O\left(\frac{\epsilon^2}{(b-w)^2}\right)
\end{align*}

Suppose $N$ is large enough so that $x_i\in [-N\epsilon,N\epsilon]$ for all $i$. We construct a new graph $\mathcal G_N^\epsilon$ by adding vertices to $G_N$ the following way. For $i\in \{1, \dots , d-k\}$, add a black vertex $v_i$ on the real line connected to the white vertex in $G_N$ closest to $x_i$. For $i\in S=\{d-k+1,\dots,d\}$, add a white vertex on the real line connected to black vertex in $G_N$ closest to $x_i$ and add a black vertex $v_i$ connected to this white vertex. The vertices $v_i,\;i\in \{1,\dots,d\}$ of $\mathcal G_N^\epsilon$ are called boundary vertices whereas the other vertices are called interior vertices. Note that the difference between interior white vertices and interior black vertices is $k$, called the excedance.

The graph $\mathcal G_N^\epsilon$ is adequate to use the results of \cite{FLL}\footnote{Note that the following can be rephrased in the language of \cite{KS} in a straightforward way.}. We weight all edges by 1. A web-like subgraph, or multiweb in the language of \cite{KS}, $w$ of $\mathcal G_N^\epsilon$ is a subgraph of $\mathcal G_N^\epsilon$ with edge labeled in $\{1,2,3\}$ such that the sum of multiplicities is equal to $3$ around each internal vertex and equal to $s_i$ at the boundary vertex $v_i$. A web-like subgraph $w$ defines a web in $\Bar{\textup{W}}^\varsigma$ if one forgets the edges labeled $3$ and orients the edges labeled $1$ (respectively $2$) from black to white (respectively from white to black). By abuse of notation, we denote this web by $w$ as well. Consider the following partition function on $\mathcal G_N^\epsilon$
\begin{align*}
    Z=\sum_{w\subset \mathcal G_N^\epsilon} w\,,
\end{align*}
where the sum runs over all web-like subgraphs of $\mathcal G_N^\epsilon$.

Recall that $V_1$ denotes the defining representation of $\mathfrak{sl}_3$ and $V_2=V_1\wedge V_1$ denotes the second fundamental representation. We also denote the standard basis of $V_1$ by $\{e_1,e_2, e_3\}$. For a tuple $K\in \{1,2,3\}^a$, define $e_K\in \bigwedge^a V$ by taking the wedge product of the basis vectors indexed by elements of $K$ in order. For instance, if $K=(2,1)$, $e_K=e_2\wedge e_1$.

Let $T\in \text{RSYT}^{\pi}_\varsigma$. For $i$ an entry in $T$, we denote by $K_i$ the tuple of row numbers of boxes where $i$ appears when reading the tableau from top to bottom, column by column starting from the left. For instance, if $T = \begin{ytableau}
        1 & 2 \\
        1 & 3 \\
        2 & 4
    \end{ytableau}$, we have $K_1=(1,2)$, $K_2=(3,1)$, $K_3=(2)$ and $K_4=(4)$.
Then, define the vector
\begin{align*}
    e_T= e_{K_1}\otimes \cdots \otimes e_{K_d}.
\end{align*}

Moreover, given $T$, define $C_i$, $i=1,\dots,3$, as the set of entries appearing in the $i$th row of $T$. The following result is key.
\begin{proposition}{\cite[Proposition 5.5]{FLL}}\label{propIan}
    We have\footnote{There is an additional sign in \cite[Proposition 5.5]{FLL} but it is not hard to see that our definition of $e_T$ implies that this sign is $+$.} 
\begin{align*}
    Z(e_T)=\sum_{w\subset \mathcal G_N^\epsilon} w(e_T)=Z_1 Z_2 Z_3,
\end{align*}
where $Z_i$ denote the dimer partition function on $\mathcal G_N^\epsilon\setminus \{v_j\; |\; j\notin C_i\}$.
\end{proposition}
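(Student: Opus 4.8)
The plan is to express both sides as sums over triples of dimer covers and then to verify that all signs cancel. First I would invoke the dictionary, underlying the triple dimer model, between $\mathfrak{sl}_3$ webs and superpositions of three dimer configurations: as an element of $\Bar{\textup{W}}^\varsigma$, each web-like subgraph $w$ represents an $\mathfrak{sl}_3$-invariant linear form on $V_{s_1}\otimes\cdots\otimes V_{s_d}$, whose value on a tensor of wedges of standard basis vectors is computed by Kuperberg's state sum, in which every trivalent source or sink contributes the determinant tensor $\epsilon$ and the boundary tensor fixes the colors at the marked points. Thus $w(e_T)$ is a signed count of the ways to color the edges of the superposed configuration $w$ by the three colors $\{1,2,3\}$ so that (i) each color meets every interior vertex exactly once, and (ii) the set of colors present at $v_i$ is exactly the set of entries of the tuple $K_i$.

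Second I would read condition (ii) through the tableau. By definition $K_i$ lists the rows occupied by $i$ in $T$, so color $a$ is present at $v_i$ iff $i$ lies in row $a$, i.e.\ iff $i\in C_a$. A color $a$ therefore selects a subgraph $D_a$ (the color-$a$ edges) covering every interior vertex once and covering $v_i$ iff $i\in C_a$; that is, $D_a$ is a dimer cover of $\mathcal G_N^\epsilon\setminus\{v_j\mid j\notin C_a\}$. Conversely any triple $(D_1,D_2,D_3)$ of such covers superimposes to a unique colored web-like subgraph. Hence summing $w(e_T)$ over all $w$ runs exactly over ordered triples $(D_1,D_2,D_3)$ of dimer covers of the three graphs $\mathcal G_N^\epsilon\setminus\{v_j\mid j\notin C_a\}$, $a=1,2,3$, and with unit edge weights this matches $Z_1Z_2Z_3=\sum_{(D_1,D_2,D_3)}1$ up to the coloring sign of each term.

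The main obstacle is to show that every admissible coloring contributes $+1$, so that the signed state sum equals the unsigned count $Z_1Z_2Z_3$. The sign of a coloring is the product over interior vertices of the determinant signs $\epsilon$ times the reordering signs incurred when the wedge factors $e_{K_i}$ of $e_T$ are matched to the colors used at $v_i$. I would show that these cancel uniformly to $+1$ by exploiting the reading convention defining $K_i$ (column by column from the left, top to bottom within each column): this is precisely the ordering compatible with a consistent planar orientation of the source/sink vertices, so that the local determinant signs are absorbed by the wedge reorderings. This is the point corrected in the footnote, where the extra sign appearing in \cite[Proposition 5.5]{FLL} reduces to $+$ for the present definition of $e_T$; I expect the cleanest rigorous route to be an induction that removes one interior trivalent vertex at a time, tracking how each $\epsilon$-sign is compensated by the induced change in the boundary ordering.

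Finally I would record the parity check that makes both sides potentially nonzero: since $T$ is row-strict of rectangular shape $\pi=(n/3,n/3,n/3)$, each row $C_a$ contains exactly $k=n/3$ distinct entries, which balances the excedance $k$ of $\mathcal G_N^\epsilon$; hence each $\mathcal G_N^\epsilon\setminus\{v_j\mid j\notin C_a\}$ has equal numbers of black and white vertices and genuinely supports dimer covers. Combining the bijection of the second paragraph with the sign cancellation of the third then yields $Z(e_T)=\sum_w w(e_T)=\sum_{(D_1,D_2,D_3)}1=Z_1Z_2Z_3$, as claimed.
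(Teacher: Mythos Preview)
The paper does not prove this proposition at all: it is stated as a direct citation of \cite[Proposition 5.5]{FLL}, with only a footnote asserting that the sign in that reference reduces to $+$ under the present definition of $e_T$. There is therefore no proof in the paper to compare your proposal against.

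That said, your outline is a faithful sketch of the argument underlying \cite[Proposition 5.5]{FLL}: the bijection between colorings of a multiweb and triples of dimer covers on the three deleted graphs is exactly the content of that result, and your identification of the sign issue as the only nontrivial point is correct. Where your proposal remains genuinely incomplete is the sign cancellation. You assert that the reading convention for $K_i$ (top to bottom, column by column) is ``precisely the ordering compatible with a consistent planar orientation of the source/sink vertices,'' but you do not actually verify this; the footnote in the paper likewise asserts it without proof. A clean verification would need to track the sign convention for the $\epsilon$-tensors at trivalent vertices in Kuperberg's calculus and check it against the bipartite orientation of $\mathcal G_N^\epsilon$, which is more bookkeeping than your inductive sketch suggests. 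Your proposed induction ``removing one interior trivalent vertex at a time'' is not obviously well-defined on a general multiweb, since removing a trivalent vertex does not produce another multiweb; a safer route is to fix a reference coloring and show that any two colorings differ by a sequence of local moves each preserving the sign.
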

Thus, if we consider three dimer covers, the dimer cover of color $i$ is on the graph $\mathcal G_N^\epsilon\setminus \{v_j\; |\; j\notin C_i\}$, which means that a boundary vertex $v_j$ with $j\in C_i$ is adjacent to a dimer of color $i$.

\subsection{Scaling limit}

Denote by $Z_D$ the dimer partition function on $G_N$. We want to compute the scaling limit $\lim_{\epsilon\to 0}\lim_{N\to \infty}Z(e_T)/Z_D^r$. 

\begin{proposition}\label{prop:scalinglimit}
    We have that
    \begin{align*}
        \lim_{\epsilon\to 0}\lim_{N\to \infty}Z(e_T)/Z_D^3=\mathcal{F}(x_1,\dots,x_n) \; \mathcal{U}_T(x_1,\dots,x_n) 
    \end{align*}
    where $\mathcal{U}_T$ is the conformal block associated to $T$ (see \eqref{definitionCB}) and $\mathcal{F}$ is a function independent of $T$.
\end{proposition}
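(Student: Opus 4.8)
The plan is to reduce the computation to three independent single-colour dimer ratios, evaluate each by Kasteleyn theory together with Kenyon's coupling-function asymptotics, and then check that the product reassembles into the Specht polynomial $\mathcal{P}_{T^t}$ up to a $T$-independent prefactor.

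First I would invoke Proposition \ref{propIan} to write $Z(e_T)=Z_1Z_2Z_3$, so that $Z(e_T)/Z_D^3=\prod_{i=1}^3 (Z_i/Z_D)$ and it suffices to analyse each factor separately. Each $Z_i$ is an ordinary single-colour dimer partition function on $\mathcal{G}_N^\epsilon$ with the boundary vertices $v_j$, $j\notin C_i$, deleted. Since every $v_j$ and every intermediate white vertex is univalent, it forces a dimer; contracting these forced edges, all of weight $1$, shows that $Z_i$ equals the dimer partition function of $G_N$ with a prescribed set $R_i^w$ of boundary white vertices and $R_i^b$ of boundary black vertices removed, with one removal sitting near each $x_j$ according to whether $j\in S$ and whether $j\in C_i$. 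The hypothesis that $T$ has rectangular shape $\pi=(n/3,n/3,n/3)$ enters precisely here: each row satisfies $|C_i|=n/3$, which guarantees $|R_i^w|=|R_i^b|=:m_i$, so the reduced graph is balanced and admits matchings.

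Next I would express $Z_i/Z_D$ through the inverse Kasteleyn matrix $C=K^{-1}$. By Jacobi's complementary-minor identity, deleting the rows $R_i^w$ and columns $R_i^b$ multiplies the partition function by $\pm\det\bigl[C(w,b)\bigr]_{w\in R_i^w,\,b\in R_i^b}$. Feeding in the asymptotics $C(w,b)=\tfrac{2\epsilon}{\pi(b-w)}+O(\epsilon^2/(b-w)^2)$ and letting $N\to\infty$ and then $\epsilon\to 0$, with $w\to x_j$ and $b\to x_{j'}$ at macroscopically separated boundary points, the determinant converges to a Cauchy determinant $\det[\,1/(x_{j'}-x_j)\,]$. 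The Cauchy formula then evaluates each colour to an explicit rational function whose numerator is a product of two Vandermonde determinants, over the points indexed by $C_i\setminus S$ and by $S\setminus C_i$, and whose denominator couples the two groups.

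Finally I would multiply the three colours and track the net exponent of each difference $(x_b-x_a)$. A short bookkeeping, splitting into the cases $a,b\notin S$; $a,b\in S$; and the mixed case, shows this exponent equals $n_{ab}+e_{ab}$, where $n_{ab}=|\{i:a,b\in C_i\}|$ is the number of rows containing both points and $e_{ab}$ depends only on $S$ and the valences, not on $T$. The factor $\prod_{a<b}(x_b-x_a)^{n_{ab}}$ is exactly $\pm\mathcal{P}_{T^t}$, so after absorbing the power $\prod_{a<b}(x_b-x_a)^{-s_as_b/3}$ from \eqref{definitionCB} one obtains $\prod_i Z_i/Z_D=\mathcal{F}\cdot\mathcal{U}_T$, with $\mathcal{F}:=\mathcal{F}'\prod_{a<b}(x_b-x_a)^{s_as_b/3}$; here $\mathcal{F}'$ collects the overall sign, the scaling prefactors (whose exponent $\sum_i m_i=n-\sum_{j\in S}s_j$ is itself $T$-independent), and $\prod_{a<b}(x_b-x_a)^{e_{ab}}$, each manifestly independent of $T$. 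The main obstacle is this last reassembly: one must verify that the $T$-dependence is confined entirely to the $n_{ab}$, hence to $\mathcal{P}_{T^t}$, and that all the remaining, $S$-dependent data cancels out of the $T$-dependence so that $\mathcal{F}$ is a genuinely well-defined, $T$-independent function. A secondary difficulty is the analytic control of the double limit, which rests on the uniformity of Kenyon's error estimate and justifies replacing $C(w,b)$ by its leading term inside the determinant.
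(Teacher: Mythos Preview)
Your proposal is correct and follows essentially the same route as the paper: factorise via Proposition~\ref{propIan}, evaluate each single-colour ratio $Z_a/Z_D$ as a Cauchy determinant using Kenyon's inverse-Kasteleyn asymptotics, and then verify by explicit exponent bookkeeping that the $T$-dependent part of the product is exactly $\mathcal{P}_{T^t}$ while the remainder depends only on $S$ and the valences. The paper carries out the reassembly step you flag as the main obstacle by computing $\prod_a F_a$ directly and checking that the row-incidence counts $c_{ij}$ (your $n_{ab}$) cancel, and it disposes of the sign by noting that the left-hand side is positive; your additional remarks on the role of the rectangular shape in balancing $|R_i^w|=|R_i^b|$ and on Jacobi's identity make explicit what the paper leaves as ``standard dimer arguments.''
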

\begin{proof}
Let $a=1,2,3$. From Proposition \ref{propIan}, we see that we can reduce the problem to the evaluation of $ \lim_{\epsilon\to 0}\lim_{N\to \infty}\; Z_a/Z_D$. By standard dimer arguments we have that
\begin{align}
    \lim_{\epsilon\to 0}\lim_{N\to \infty}\; Z_a/Z_D= \pm \det\left(\frac{2}{\pi(x_j-x_i)}\right)^{i\in C_a,\; i\notin S}_{j\notin C_a,\; j\in S}= \pm \left(\frac{2}{\pi}\right)^{n_a} \det\left(\frac{1}{x_j-x_i}\right)^{i\in C_a,\; i\notin S}_{j\notin C_a,\; j\in S}
\end{align}
for some sign $\pm$ and integer $n_a$ to be determined later. We can evaluate the Cauchy determinant as
\begin{align*}
    \det\left(\frac{1}{x_j-x_i}\right)^{i\in C_a,\; i\notin S}_{j\notin C_a,\; j\in S}\;&=\pm \frac{\prod\limits_{\substack{i<j\\ i,j \in C_a \text{ and } i,j\notin S\\ \text{or }i,j\notin C_a \text{ and } i,j\in S}}(x_j-x_i)}{\prod\limits_{\substack{i<j\\ i\in C_a \text{ and }i\notin S\\j\notin C_a \text{ and }j\in S}}(x_j-x_i)}\\
    &=\pm \prod\limits_{\substack{i<j\\ i,j \in C_a}}(x_j-x_i)\; \frac{\prod\limits_{\substack{i<j\\ i,j\notin C_a \text{ and } i,j\in S}}(x_j-x_i)}{\prod\limits_{\substack{i<j\\ i\in C_a \text{ and }i\notin S\\j\notin C_a \text{ and }j\in S}}(x_j-x_i)\prod\limits_{\substack{i<j\\ i,j \in C_a \\ i\in S \text{ or } j\in S}}(x_j-x_i)}\\
    &=\pm \prod\limits_{\substack{i<j\\ i,j \in C_a}}(x_j-x_i)\; F_a(x_1,\dots,x_n)
\end{align*}

Note that
\begin{align*}
    \mathcal P_{T^t}=\prod_{a=1}^3 \prod\limits_{\substack{i<j\\ i,j \in C_a}}(x_j-x_i)
\end{align*}
is the Specht polynomial associated with tableau $T^t$.

We want to show that $\prod_a F_a$ do not depend on the tableau $T$. First, observe that
\begin{align*}
    \prod\limits_{\substack{i<j\\ i\in C_a \text{ and }i\notin S\\j\notin C_a \text{ and }j\in S}}(x_j-x_i)\prod\limits_{\substack{i<j\\ i,j \in C_a \\ i\in S \text{ or } j\in S}}(x_j-x_i)&=\frac{\prod\limits_{\substack{i<j\\ i\in C_a\\ i\notin S, \; j\in S}}(x_j-x_i)}{\prod\limits_{\substack{i<j\\ i,j\in C_a\\ i\notin S, \; j\in S}}(x_j-x_i)}\prod\limits_{\substack{i<j\\ i,j \in C_a \\ i\in S \text{ or } j\in S}}(x_j-x_i)\\
    &=\prod\limits_{\substack{i<j\\ i\in C_a\\ i\notin S, \; j\in S}}(x_j-x_i)\prod\limits_{\substack{i<j\\ i,j \in C_a \\ i,j\in S }}(x_j-x_i)
\end{align*}
Then, denoting by $c_{ij}$ the number of rows in $T$ that contains both $i$ and $j$,
\begin{align*}
    \prod_a\;\prod\limits_{\substack{i<j\\ i,j\notin C_a \text{ and } i,j\in S}}(x_j-x_i)&=\prod\limits_{\substack{i<j\\ i,j\in S}}(x_j-x_i)^{3-s_i-s_j+c_{ij}}\\
    \prod_a\;\prod\limits_{\substack{i<j\\ i\in C_a\\ i\notin S, \; j\in S}}(x_j-x_i)&=\prod\limits_{\substack{i<j\\ i\notin S, \; j\in S}}(x_j-x_i)^{s_i}\\
    \prod_a\;\prod\limits_{\substack{i<j\\ i,j \in C_a \\ i, j \in S }}(x_j-x_i)&=\prod\limits_{\substack{i<j \\ i,j\in S }}(x_j-x_i)^{c_{ij}}
\end{align*}
Thus
\begin{align*}
    \prod_a F_a=\frac{\prod\limits_{\substack{i<j\\ i,j\in S}}(x_j-x_i)^{3-s_i-s_j}}{\prod\limits_{\substack{i<j\\ i\notin S, \; j\in S}}(x_j-x_i)^{s_i}}
\end{align*}
is independent of $T$.

We finally have that
\begin{align*}
    \lim_{\epsilon\to 0}\lim_{N\to \infty}Z(e_T)/Z_D^3=\pm\left(\frac{2}{\pi}\right)^{\sum_{i=1}^{d-k}s_i}\;\lb\prod_a F_a\rb\;P_{T^t}
\end{align*}
Because the LHS is positive, the sign on the RHS must be $+$. This completes the proof.
    
\end{proof}

\subsection{Connection probabilities} \label{connproba}
Consider the expansion of $Z$ with respect to the set of reduced webs $\Lambda^\varsigma$ in $\Bar{\textup{W}}^\varsigma$,
\begin{align*}
    Z&=\sum_{w\subset \mathcal G_N^\epsilon} w=\sum_{\lambda\in \Lambda^\varsigma} C_\lambda\; \lambda
\end{align*}
where $C_\lambda\geq 0$, as seen from the rules \eqref{eq:Kup-rels}. 
Let $M$ be the square matrix with rows indexed by tableaux in $\textup{RSYT}^{\pi}_\varsigma$ and columns indexed by reduced webs in $\Bar{\textup{W}}^\varsigma$ such that $M_{T\lambda }=\lambda(e_{T})$. There is a natural ordering on $\textup{RSYT}^{\pi}_\varsigma$ given by the lexicographic order if the tableau is read row by row. This order is transported to $\Bar{\textup{W}}^\varsigma$ by the bijection exhibited in \cite{KK}. It follows from \cite[Theorem 2]{KK} that, with respect to these orderings, $M$ is a unit lower triangular matrix with nonnegative integer entries. One can then define connection probabilities as the nonnegative quantities
\begin{align*}
    \textup{Pr}_\lambda^T=\frac{C_\lambda\; \lambda (e_{T})}{Z(e_{T})},\; \lambda\in \Lambda^\varsigma.
\end{align*}

\begin{theorem}\label{theorem:dimer}
    The scaling limit of connection probabilities are given by ratios of conformal blocks as 
    \begin{align*}
    \textup{P}_\lambda^T:=\lim_{\epsilon\to 0}\lim_{N\to \infty}\textup{Pr}_\lambda^T=  M_{T\lambda }\;\frac{\mathcal{Z}_\lambda}{\mathcal U_{T}}
\end{align*}
with
\begin{align} \label{relationZU}
    \mathcal{Z}_\lambda:=\sum_{U\in \textup{RSYT}^{\pi}_\varsigma} M^{-1}_{\lambda U}\;\mathcal U_{U}.
\end{align}
\end{theorem}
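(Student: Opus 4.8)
The plan is to express each reduced-web coefficient $C_\lambda$ as a fixed linear combination of the quantities $Z(e_U)$, whose scaling limits are already controlled by Proposition \ref{prop:scalinglimit}, and then to observe that the unknown prefactor $\mathcal{F}$ cancels in the probability ratio. The whole argument is essentially a matrix inversion combined with one already-established limit.

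First I would record the linear-algebraic identity relating the numbers $Z(e_T)$ and the coefficients $C_\lambda$. Pairing the expansion $Z=\sum_{\lambda\in\Lambda^\varsigma}C_\lambda\,\lambda$ with $e_T$ gives $Z(e_T)=\sum_\lambda C_\lambda\,\lambda(e_T)=\sum_\lambda M_{T\lambda}\,C_\lambda$, since $M_{T\lambda}=\lambda(e_T)$ by definition. Viewing $(Z(e_T))_T$ and $(C_\lambda)_\lambda$ as column vectors $\mathbf z$ and $\mathbf c$ indexed by $\textup{RSYT}^{\pi}_\varsigma$ and $\Lambda^\varsigma$, this reads $\mathbf z=M\mathbf c$. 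Because $M$ is unit lower triangular (via \cite{KK}), it is invertible with integer inverse, so $\mathbf c=M^{-1}\mathbf z$, that is,
\[
C_\lambda=\sum_{U\in\textup{RSYT}^{\pi}_\varsigma}M^{-1}_{\lambda U}\,Z(e_U).
\]
The key point is that $M$, and hence $M^{-1}$, are purely combinatorial and independent of the mesh $\epsilon$ and the cutoff $N$.

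Second, I would take the scaling limit of this identity. Dividing by $Z_D^3$ and applying $\lim_{\epsilon\to0}\lim_{N\to\infty}$, the finite sum can be exchanged with the double limit, and Proposition \ref{prop:scalinglimit} yields $Z(e_U)/Z_D^3\to\mathcal F\,\mathcal U_U$ with $\mathcal F$ independent of $U$. Therefore
\[
\lim_{\epsilon\to0}\lim_{N\to\infty}\frac{C_\lambda}{Z_D^3}
=\sum_{U}M^{-1}_{\lambda U}\,\mathcal F\,\mathcal U_U
=\mathcal F\,\mathcal Z_\lambda,
\]
where the last equality is exactly the definition \eqref{relationZU} of $\mathcal Z_\lambda$.

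Finally I would assemble the probability. Writing $\textup{Pr}_\lambda^T=M_{T\lambda}\,(C_\lambda/Z_D^3)\big/(Z(e_T)/Z_D^3)$ and inserting the two limits above, the common factor $\mathcal F$ cancels while $M_{T\lambda}$ is an integer independent of $\epsilon,N$, giving
\[
\textup{P}_\lambda^T=M_{T\lambda}\,\frac{\mathcal F\,\mathcal Z_\lambda}{\mathcal F\,\mathcal U_T}=M_{T\lambda}\,\frac{\mathcal Z_\lambda}{\mathcal U_T},
\]
as claimed. The only genuine points needing care are the invertibility and $\epsilon,N$-independence of $M$ (both furnished by the unit lower triangular structure), the legitimacy of interchanging the finite sum with the double limit, and the nonvanishing of $\mathcal U_T$ and $\mathcal F$ so that the ratio is well defined. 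I expect the exchange-of-limits and nonvanishing bookkeeping to be the most delicate part, although it is routine given the finiteness of the index set and the explicit limit supplied by Proposition \ref{prop:scalinglimit}.
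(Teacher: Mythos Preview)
Your proposal is correct and follows essentially the same approach as the paper: invert the matrix $M$ to write $C_\lambda=\sum_U M^{-1}_{\lambda U}Z(e_U)$, substitute into the definition of $\textup{Pr}_\lambda^T$, and apply Proposition \ref{prop:scalinglimit} so that the $T$-independent prefactor $\mathcal F$ cancels. Your write-up is in fact more explicit than the paper's about the bookkeeping (invertibility of $M$, finiteness of the sum, cancellation of $\mathcal F$), but the underlying argument is identical.
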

\begin{proof}
    
We can extract $C_\lambda$ from $Z$ using the inverse of $M$ as
\begin{align*}
    C_\lambda=\sum_{U\in \textup{RSYT}^{\pi}_\varsigma} M^{-1}_{\lambda U} Z(e_{U}).
\end{align*}
Then,
\begin{align*}
    \textup{Pr}_\lambda^T=  \frac{M_{T\lambda } \sum_{U\in \textup{RSYT}^{\pi}_\varsigma} M^{-1}_{\lambda U}Z(e_{U})}{Z(e_{T})}.
\end{align*}
The result follows from Proposition \ref{prop:scalinglimit}.

\end{proof}

\subsection{Examples}

We now provide explicit realizations of Theorem \ref{theorem:dimer}. Some of the connection probabilities were computed in \cite{KS}. 

\subsubsection{The case $\varsigma=(1,1,2,2)$}
In this case we have only two tableaux and webs which are, in order,
\begin{alignat*}{2}
    &T_1=\begin{ytableau}
        1 & 2 \\
        3 & 4 \\
        3 & 4
    \end{ytableau} \; , \qquad \qquad && \lambda_1=\vcenter{\hbox{\includegraphics[scale=0.2]{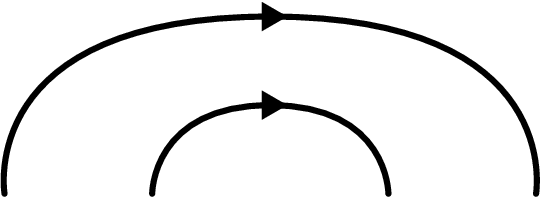}}} \; , \\[5pt]
    & T_2=\begin{ytableau}
        1 & 3 \\
        2 & 4 \\
        3 & 4
    \end{ytableau} \; , \qquad \qquad && \lambda_2=\vcenter{\hbox{\includegraphics[scale=0.2]{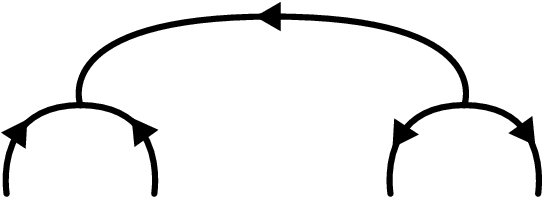}}} \; .
\end{alignat*}   The matrix $M$ is given by\begin{align*}
        M=\begin{pmatrix}
1 & 0 \\
1 & 1 
\end{pmatrix},
    \end{align*}
    and functions $\mathcal{U}_{T_1}$ and $\mathcal{U}_{T_2}$ are computed from Definition \ref{defconformalblocks}, whereas the functions $\mathcal{Z}_{T_1}$ and $\mathcal{Z}_{T_2}$ are computed from \eqref{relationZU}:
    \begin{alignat*}{2}
        & \mathcal{U}_{T_1}=\frac{(x_2-x_1)^{2/3} (x_3-x_4)^2}{(x_3-x_1)^{2/3} (x_1-x_4)^2 (x_3-x_2)^{2/3}}, \qquad&& \mathcal{Z}_{\lambda_1}=\frac{(x_2-x_1)^{2/3} (x_3-x_4)^2}{(x_3-x_1)^{2/3} (x_1-x_4)^2 (x_3-x_2)^{2/3}},\\
        &\mathcal{U}_{T_2}=\frac{(x_3-x_1)^{1/3} (x_2-x_4) (x_3-x_4)}{(x_2-x_1)^{1/3} (x_1-x_4)^2 (x_3-x_2)^{2/3}},\qquad && \mathcal{Z}_{\lambda_2}=\frac{(x_3-x_2)^{1/3} (x_3-x_4)}{(x_2-x_1)^{1/3} (x_3-x_1)^{2/3} (x_1-x_4)}.
    \end{alignat*}
    Setting $T=T_2$, this leads to
    \begin{align*}
        &\textup{P}_{\lambda_1}^T=\frac{(x_2-x_1)(x_4-x_3)}{(x_3-x_1)(x_4-x_2)}, \\
        &\textup{P}_{\lambda_2}^T=\frac{(x_3-x_2)(x_4-x_1)}{(x_3-x_1)(x_4-x_2)}.
    \end{align*}

\subsubsection{The case $\varsigma=(1,1,1,1,1,1)$} In this case the tableaux and webs are, in order, \begin{alignat*}{2}
    &T_1=\begin{ytableau}
        1 & 2 \\
        3 & 4 \\
        5 & 6
    \end{ytableau}\;, \qquad \qquad && \lambda_1=\vcenter{\hbox{\includegraphics[scale=0.2]{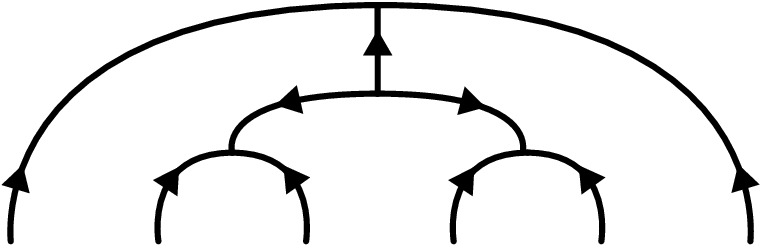}}}\;,\\[5pt]
    & T_2=\begin{ytableau}
        1 & 2 \\
        3 & 5 \\
        4 & 6
    \end{ytableau}\;, \qquad \qquad && \lambda_2=\vcenter{\hbox{\includegraphics[scale=0.2]{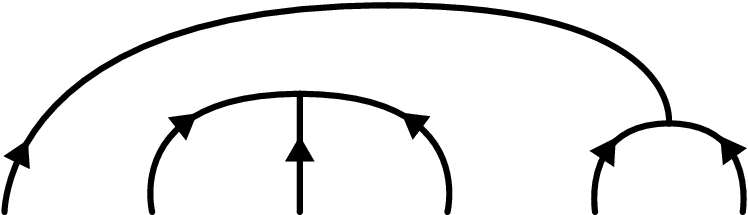}}}\;,\\[5pt]
    & T_3=\begin{ytableau}
        1 & 3 \\
        2 & 4 \\
        5 & 6
    \end{ytableau}\;, \qquad \qquad && \lambda_3=\vcenter{\hbox{\includegraphics[scale=0.2]{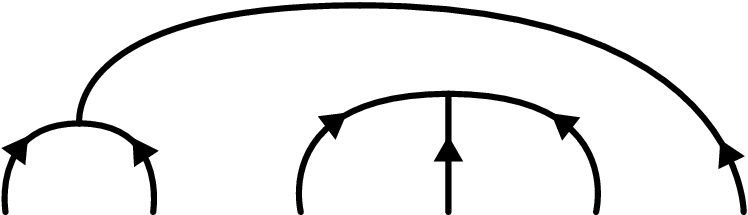}}}\;,\\[5pt]
    & T_4=\begin{ytableau}
        1 & 3 \\
        2 & 5 \\
        4 & 6
    \end{ytableau}\;, \qquad \qquad && \lambda_4=\vcenter{\hbox{\includegraphics[scale=0.2]{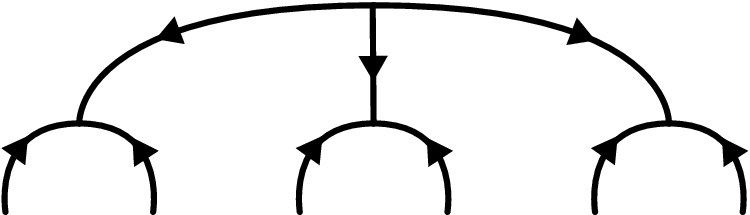}}}\;,\\[5pt]
    & T_5=\begin{ytableau}
        1 & 4 \\
        2 & 5 \\
        3 & 6
    \end{ytableau}\;, \qquad \qquad && \lambda_5=\vcenter{\hbox{\includegraphics[scale=0.2]{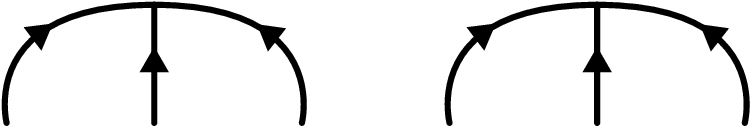}}}\;.
\end{alignat*} The matrix $M$ is given by\begin{align*}
        M=\begin{pmatrix}
1 & 0 & 0 & 0 & 0 \\
1 & 1 & 0 & 0 & 0 \\
1 & 0 & 1 & 0 & 0 \\
1 & 1 & 1 & 1 & 0 \\
1 & 1 & 1 & 1 & 1  
\end{pmatrix}.
    \end{align*} 
We have
\begin{alignat*}{2}
    & \mathcal{U}_{T_1}=\mathcal N (x_2-x_1) (x_4-x_3) (x_6-x_5),\qquad &&\mathcal{Z}_{\lambda_1}=\mathcal N(x_2-x_1) (x_4-x_3) (x_6-x_5),\\
    & \mathcal{U}_{T_2}=\mathcal N (x_2-x_1) (x_5-x_3) (x_6-x_4),\qquad &&\mathcal{Z}_{\lambda_2}=\mathcal N(x_2-x_1) (x_6-x_3) (x_5-x_4)),\\
    & \mathcal{U}_{T_3}=\mathcal N(x_3-x_1) (x_4-x_2) (x_6-x_4),\qquad &&\mathcal{Z}_{\lambda_3}=\mathcal N(x_4-x_1) (x_3-x_2) (x_6-x_5),\\
    & \mathcal{U}_{T_4}=\mathcal N(x_3-x_1) (x_5-x_2) (x_6-x_4),\qquad &&\mathcal{Z}_{\lambda_4}=\mathcal N(x_6-x_1) (x_3-x_2) (x_5-x_4),\\
    & \mathcal{U}_{T_5}=\mathcal N(x_4-x_1) (x_5-x_2) (x_6-x_3),\qquad &&\mathcal{Z}_{\lambda_5}=\mathcal N(x_6-x_1) (x_5-x_2) (x_4-x_3),
\end{alignat*}
where $\mathcal N=\lb \prod_{1\leq i<j\leq 6} (x_j-x_i)^{-\frac{s_i s_j}{3}} \rb$.
Setting $T=T_5$, this leads to
    \begin{align*}
        &\textup{P}_{\lambda_1}^T=\frac{(x_2-x_1)(x_4-x_3)(x_6-x_5)}{(x_4-x_1)(x_5-x_2)(x_6-x_3)},\\
        &\textup{P}_{\lambda_2}^T=\frac{(x_2-x_1)(x_5-x_4)}{(x_4-x_1)(x_5-x_2)},\\
        &\textup{P}_{\lambda_3}^T=\frac{(x_3-x_2)(x_6-x_5)}{(x_5-x_2)(x_6-x_3)},\\
        &\textup{P}_{\lambda_4}^T=\frac{(x_3-x_2)(x_5-x_4)(x_6-x_1)}{(x_4-x_1)(x_5-x_2)(x_6-x_3)},\\
        &\textup{P}_{\lambda_5}^T=\frac{(x_4-x_3)(x_6-x_1)}{(x_4-x_1)(x_6-x_3)}.
    \end{align*}

\subsubsection{The case $\varsigma=(1,2,1,2,1,2)$} We have
\begin{alignat*}{2}
    &T_1=\begin{ytableau}
        1 & 2 & 3\\
        2 & 4 & 6\\
        4 & 5 & 6
    \end{ytableau}\;, \qquad \qquad && \lambda_1=\vcenter{\hbox{\includegraphics[scale=0.2]{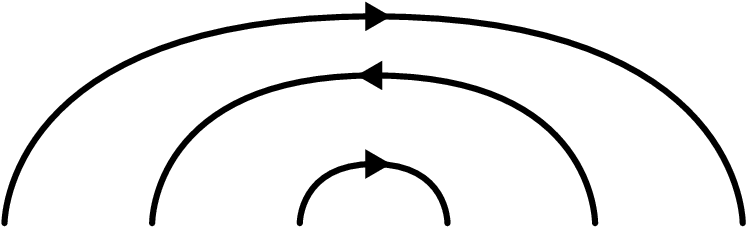}}}\;,\\[5pt]
    & T_2=\begin{ytableau}
        1 & 2 & 4\\
        2 & 3 & 6\\
        4 & 5 & 6
    \end{ytableau}\;, \qquad \qquad && \lambda_2=\vcenter{\hbox{\includegraphics[scale=0.2]{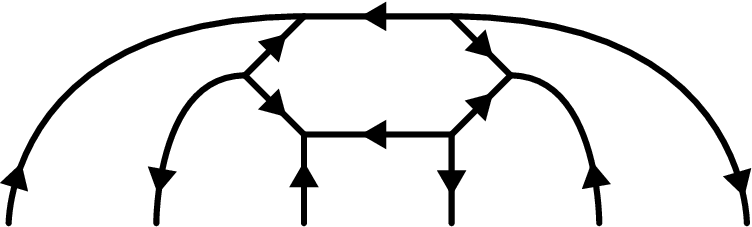}}}\;,\\[5pt]
    & T_3=\begin{ytableau}
        1 & 2 & 4\\
        2 & 4 & 6\\
        3 & 5 & 6
    \end{ytableau}\;, \qquad \qquad && \lambda_3=\vcenter{\hbox{\includegraphics[scale=0.2]{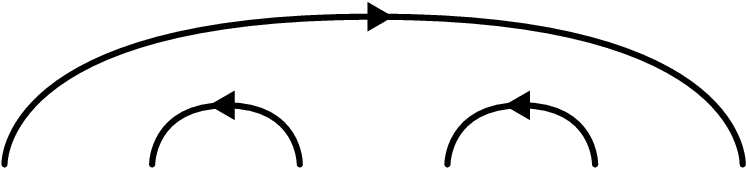}}}\;,\\[5pt]
    & T_4=\begin{ytableau}
        1 & 2 & 5\\
        2 & 4 & 6\\
        3 & 4 & 6
    \end{ytableau}\;, \qquad \qquad && \lambda_4=\vcenter{\hbox{\includegraphics[scale=0.2]{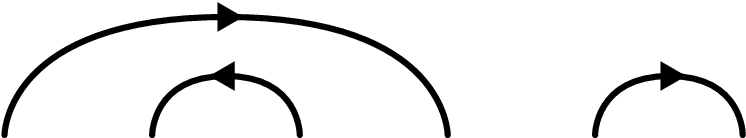}}}\;,\\[5pt]
    & T_5=\begin{ytableau}
        1 & 3 & 4\\
        2 & 4 & 6\\
        2 & 5 & 6
    \end{ytableau}\;, \qquad \qquad && \lambda_5=\vcenter{\hbox{\includegraphics[scale=0.2]{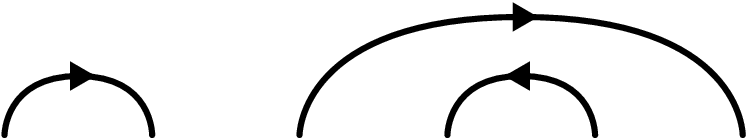}}}\;,\\[5pt]
    & T_6=\begin{ytableau}
        1 & 3 & 5\\
        2 & 4 & 6\\
        2 & 4 & 6
    \end{ytableau}\;, \qquad \qquad && \lambda_6=\vcenter{\hbox{\includegraphics[scale=0.2]{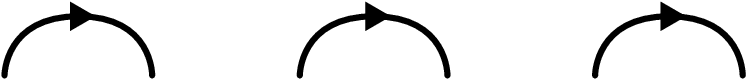}}}\;.
\end{alignat*} 
The matrix $M$ is given by
\begin{align*}
        M=\begin{pmatrix}
 1 & 0 & 0 & 0 & 0 & 0 \\
 1 & 1 & 0 & 0 & 0 & 0 \\
 1 & 1 & 1 & 0 & 0 & 0 \\
 0 & 1 & 1 & 1 & 0 & 0 \\
 0 & 1 & 1 & 0 & 1 & 0 \\
 1 & 2 & 1 & 1 & 1 & 1 \\
\end{pmatrix}.
    \end{align*}
We have
\begin{align*}
    & \mathcal{U}_{T_1}=\mathcal N(x_2-x_1) (x_3-x_1) (x_3-x_2) (x_4-x_2) (x_6-x_2) (x_5-x_4) (x_4-x_6)^2 (x_6-x_5),\\
    & \mathcal{U}_{T_2}=\mathcal N(x_2-x_1) (x_4-x_1) (x_3-x_2) (x_4-x_2) (x_6-x_2) (x_6-x_3) (x_5-x_4) (x_6-x_4) (x_6-x_5),\\
    & \mathcal{U}_{T_3}=\mathcal N(x_2-x_1) (x_4-x_1) (x_2-x_4)^2 (x_6-x_2) (x_5-x_3) (x_6-x_3) (x_6-x_4) (x_6-x_5),\\
    & \mathcal{U}_{T_4}=\mathcal N(x_1-x_2) (x_1-x_5) (x_2-x_4) (x_2-x_5) (x_2-x_6) (x_4-x_3) (x_3-x_6) (x_4-x_6)^2,\\
    & \mathcal{U}_{T_5}=\mathcal N(x_1-x_3) (x_1-x_4) (x_4-x_2) (x_2-x_5) (x_2-x_6)^2 (x_3-x_4) (x_4-x_6) (x_5-x_6),\\    
    & \mathcal{U}_{T_6}=\mathcal N(x_3-x_1) (x_5-x_1) (x_2-x_4)^2 (x_2-x_6)^2 (x_5-x_3) (x_4-x_6)^2,\\
    &\mathcal{Z}_{\lambda_1}=\mathcal N(x_2-x_1) (x_3-x_1) (x_3-x_2) (x_4-x_2) (x_6-x_2) (x_5-x_4) (x_4-x_6)^2 (x_6-x_5),\\
    &\mathcal{Z}_{\lambda_2}=\mathcal N(x_2-x_1) (x_6-x_1) (x_3-x_2) (x_4-x_2) (x_6-x_2) (x_3-x_4) (x_5-x_4) (x_4-x_6) (x_6-x_5),\\
    &\mathcal{Z}_{\lambda_3}=\mathcal N(x_1-x_2) (x_1-x_4) (x_2-x_4) (x_2-x_5) (x_2-x_6) (x_4-x_3) (x_3-x_6) (x_4-x_6) (x_5-x_6),\\
    &\mathcal{Z}_{\lambda_4}=\mathcal N(x_1-x_2) (x_1-x_6) (x_2-x_4) (x_2-x_6)^2 (x_4-x_3) (x_3-x_5) (x_4-x_5) (x_4-x_6),\\
    &\mathcal{Z}_{\lambda_5}=\mathcal N(x_1-x_5) (x_1-x_6) (x_3-x_2) (x_2-x_4)^2 (x_2-x_6) (x_3-x_4) (x_4-x_6) (x_5-x_6),\\
    &\mathcal{Z}_{\lambda_6}=\mathcal N(x_1-x_4) (x_1-x_6) (x_3-x_2) (x_2-x_4) (x_2-x_5) (x_2-x_6) (x_3-x_6) (x_4-x_5) (x_4-x_6),\\
\end{align*}
where $\mathcal N=\lb \prod_{1\leq i<j\leq 6} (x_j-x_i)^{-\frac{s_i s_j}{3}} \rb$.
Setting $T=T_6$, this leads to
    \begin{align*}
        &\textup{P}_{\lambda_1}^T=\frac{(x_2-x_1) (x_3-x_2) (x_4-x_5) (x_5-x_6)}{(x_1-x_5) (x_2-x_4) (x_2-x_6) (x_3-x_5)},\\
        &\textup{P}_{\lambda_2}^T=\frac{2 (x_1-x_2) (x_1-x_6) (x_2-x_3) (x_3-x_4) (x_4-x_5) (x_5-x_6)}{(x_1-x_3) (x_1-x_5) (x_2-x_4) (x_2-x_6) (x_3-x_5) (x_4-x_6)},\\
        &\textup{P}_{\lambda_3}^T=\frac{(x_1-x_2) (x_1-x_4) (x_2-x_5) (x_3-x_4) (x_3-x_6) (x_5-x_6)}{(x_1-x_3) (x_1-x_5) (x_2-x_4) (x_2-x_6) (x_3-x_5) (x_4-x_6)},\\
        &\textup{P}_{\lambda_4}^T=\frac{(x_1-x_2) (x_1-x_6) (x_3-x_4) (x_4-x_5)}{(x_1-x_3) (x_1-x_5) (x_2-x_4) (x_4-x_6)},\\
        &\textup{P}_{\lambda_5}^T=\frac{(x_1-x_6) (x_3-x_2) (x_3-x_4) (x_5-x_6)}{(x_1-x_3) (x_2-x_6) (x_3-x_5) (x_6-x_4)},\\
        &\textup{P}_{\lambda_6}^T=\frac{(x_1-x_4) (x_1-x_6) (x_2-x_3) (x_2-x_5) (x_3-x_6) (x_4-x_5)}{(x_1-x_3) (x_1-x_5) (x_2-x_4) (x_2-x_6) (x_3-x_5) (x_4-x_6)}.
    \end{align*}

\section{Proof of Theorem \ref{theorem2}} \label{sectionprooftheorem1}

This section is devoted to the proof of Theorem \ref{theorem2}. Since Proposition \ref{theorem1} provides a basis for $\mathcal C^{(c=2)}_\varsigma$, the proof of Theorem \ref{theorem2} consists of proving that each basis element satisfies \eqref{W3BPZ}, \eqref{covariance}, \eqref{ward1}-\eqref{ward5} and \eqref{POW}. The property \eqref{POW} naturally follows from Definition \ref{defconformalblocks}, henre it remains to verify the PDEs \eqref{W3BPZ}, \eqref{covariance}, \eqref{ward1}-\eqref{ward5}.

\subsection{Proof of \eqref{W3BPZ}}

\begin{proposition} \label{propeqW3forUT}
Let $T \in {\normalfont\RSYT}$ where $\lambda$ is a Young diagram with 3 columns which is not necessarily rectangular. Then, the $W_3$ conformal blocks $\mathcal U_T(x_1,\cdots,x_d)$ of Definition \ref{defconformalblocks} satisfy
\begin{equation} \label{eqW3BPZforUT}
    \mathcal D^{(j)}_{\varsigma,2} \; \mathcal U_T(x_1,\cdots,x_d) = 0, \qquad j=1,\cdots,d,
\end{equation}
where $\mathcal D^{(j)}_{\varsigma,c}$ is the third order partial differential operator defined in \eqref{defoperatorBPZ}.
\end{proposition}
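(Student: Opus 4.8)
The plan is to reduce \eqref{eqW3BPZforUT} to a rational-function identity by conjugation, and then to pass from a general weight $\varsigma$ to the uniform weight $(1^n)$ by a fusion, or collision-of-points, argument. Write $\mathcal U_T=W\,\mathcal P_{T^t}$ with $W=\prod_{i<j}(x_j-x_i)^{-s_is_j/3}$. Because $\mathcal D^{(m)}_{\varsigma,2}$ is linear and $\mathcal U_T$ is a single fixed function, it suffices to prove $\mathcal U_T^{-1}\,\mathcal D^{(m)}_{\varsigma,2}\,\mathcal U_T\equiv 0$. Dividing through by $\mathcal U_T$ turns every derivative acting on $\mathcal U_T$ into the logarithmic derivative $\Phi_k:=\partial_{x_k}\log\mathcal U_T$ together with its own derivatives; since each $\Phi_k$ is a sum of simple poles $(x_k-x_\ell)^{-1}$, the claim becomes a partial-fraction identity in $x_1,\dots,x_d$.

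First I would settle $\varsigma=(1^n)$, where $d=n$, every $q_i=1$, and $c=2$ forces $\beta=1$, hence $h=\tfrac13$. Here $\mathcal P_{T^t}$ is a three-column Specht polynomial, i.e.\ a product $\Delta(C_1)\Delta(C_2)\Delta(C_3)$ of three Vandermonde determinants whose index sets $C_1,C_2,C_3$ partition $\{1,\dots,n\}$, and a direct computation gives
\[
\Phi_m=\tfrac23\sum_{i\sim m}\frac{1}{x_m-x_i}-\tfrac13\sum_{i\not\sim m}\frac{1}{x_m-x_i},
\]
where $i\sim m$ means that $i$ and $m$ occur in the same factor $\Delta(C_a)$. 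Writing $t_{mi}=(x_m-x_i)^{-1}$, the derivatives $\partial_{x_m}\Phi_k$ generate double poles $t_{mk}^2$, while products such as $\Phi_m\Phi_i$ collapse under the triangle identity $t_{mi}t_{mj}=t_{ij}(t_{mi}-t_{mj})$. Feeding $\mathcal U_T^{-1}\partial_{x_m}^3\mathcal U_T=\Phi_m^3+3\Phi_m\partial_{x_m}\Phi_m+\partial_{x_m}^2\Phi_m$ and its second-order analogues into \eqref{defoperatorBPZ}, I would collect the terms by pole order and by the pattern of the index triple $\{m,i,j\}$ relative to $C_1\sqcup C_2\sqcup C_3$, and verify that each group cancels. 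The cancellation hinges on the coefficients $\tfrac23$ and $-\tfrac13$, which are forced by $h=\tfrac13$, so this is exactly where the special value $c=2$ enters. Since $\sim$ records only whether two indices share a Vandermonde factor, the computation is insensitive to the column heights and thus covers the non-rectangular three-column shapes allowed in the statement.

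Next I would upgrade to general $\varsigma$ by fusion. By Definitions \ref{defTtilde}--\ref{remarkeval} and the identity $\mathcal P_{T^t}=[\mathcal P_{\widetilde{T^t}}]_{\mathrm{eval}}$ used in the proof of Proposition \ref{basisSpechtprop}, the block $\mathcal U_T(x)$ is the regularized limit of the uniform-weight block $\mathcal U_{\tilde T}(y)$ as the $y$-variables in each group $k$ merge to $x_k$: the cross-group part of $\prod_{a<b}(y_b-y_a)^{-1/3}$ converges to $\prod_{k<\ell}(x_\ell-x_k)^{-s_ks_\ell/3}=W(x)$, whereas the singular within-group factors are stripped off by an explicit prefactor. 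The crux is the operator-level compatibility: one must show that $\mathcal D^{(m)}_{\varsigma,2}$, after conjugation by this stripping prefactor, is the collision limit of the uniform-weight operators attached to group $m$ --- the single operator $\mathcal D^{(p_m)}_{(1^n),2}$ when $s_m=1$, and the antisymmetric (fused) combination of the operators at $y_{p_m}$ and $y_{p_m+1}$ when $s_m=2$, the antisymmetrization being what produces the leading sign $q_m=(-1)^{s_m+1}$ and all the coefficients of \eqref{defoperatorBPZ}. Granting this matching, the base case immediately yields $\mathcal D^{(m)}_{\varsigma,2}\mathcal U_T=0$.

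I expect the main obstacle to be computational rather than conceptual. The delicate point is the $\varsigma=(1^n)$ identity, which is a large alternating sum of rational terms whose cancellation becomes transparent only after a careful organization by pole order and by the same-column versus cross-column combinatorics of index triples. A close second is the fusion matching for $s_m=2$, where one must track how the degenerate leading term $q_m\partial_{x_m}^3$ and every lower-order coefficient in \eqref{defoperatorBPZ} are reproduced by colliding and antisymmetrizing the two uniform-weight operators. To keep the bookkeeping in check I would exploit the $\mathfrak S_n$-equivariance of the Specht construction, so that only a few representative index patterns need to be checked by hand.
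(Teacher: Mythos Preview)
Your overall architecture is exactly that of the paper: first establish the $\varsigma=(1^n)$ case by a direct rational-function identity, then reach general $\varsigma$ by fusion. For the base case your one-step computation with $\Phi_m$ is equivalent to the paper's two-step approach (first an intrinsic third-order PDE for the three-column Specht polynomial, then conjugation by $\prod(x_j-x_i)^{-1/3}$); either works, and your same-column/cross-column bookkeeping is precisely the case analysis the paper carries out.

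The fusion step, however, is mischaracterized and as stated would not go through. The fused operator $\mathcal D^{(m)}_{\varsigma,2}$ at a site with $s_m=2$ is \emph{not} obtained as a collision limit of an antisymmetric combination of the two uniform-weight operators, and the sign $q_m=-1$ does not arise from antisymmetrization. What the paper actually does (Lemma~\ref{lemmafusion}) is to insert the Fr\"obenius ansatz $f=(x_{j+1}-x_j)^{-h}\sum_{k\ge 0}f_k\,(x_{j+1}-x_j)^k$ into the unfused equations and read off constraints order by order in $\epsilon=x_{j+1}-x_j$. The order-$2$ equation produces a relation expressing $f_2$ in terms of $f_0$ (equation~\eqref{relationf2f0}); the order-$3$ equations at $j$ and $j+1$ are then \emph{summed} (not antisymmetrized) to eliminate $f_1,f_3$, and only after substituting the $f_2$--$f_0$ relation does one obtain the fused equation for $f_0$ --- in particular the leading term $-\partial_{\hat x_j}^3$ comes from the contribution $-\tfrac{9}{8}\partial_{\hat x_j}^3 f_0$ hidden in $\partial_{\hat x_j}f_2$ combining with the $+\tfrac18\partial_{\hat x_j}^3$ coming from $\mathcal B$. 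The equations at the other sites $i\ne j,j+1$ likewise need the $f_2$--$f_0$ relation to close. So the passage is at the level of solutions with a specific local expansion, not at the level of operators, and it mixes several Fr\"obenius orders; a pure ``collision limit of operators'' will not reproduce the correct coefficients.
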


Although the proof of Proposition \ref{propeqW3forUT} could be performed by brute force calculations, we choose another approach which reveals interesting features about the $W_3$ conformal blocks and the Specht polynomials. More precisely, for the proof of Proposition \ref{propeqW3forUT} we will need Lemmas \ref{propeqspecht3columns}, \ref{lemmaW3BPZvarsigmaequals1}, \ref{lemmafusion}, \ref{lemmasecondrepCBs} and \ref{lemmaanalytic}. Lemma \ref{propeqspecht3columns} proves that the Specht polynomials $\mathcal P_{T^t}$ where $T^t$ is a numbering with $3$ columns satisfy a system of third order PDEs. Then, Lemma \ref{lemmaW3BPZvarsigmaequals1} proves \eqref{W3BPZ} in the case where $\varsigma = (1^n)$. Finally, Lemmas \ref{lemmafusion}, \ref{lemmasecondrepCBs} and \ref{lemmaanalytic} are utilized to show that solutions of \eqref{W3BPZ} for any $\varsigma$ can be constructed solely from certain solutions of \eqref{W3BPZ} for $\varsigma = (1^n)$. 

\subsubsection{The case $\varsigma = (1^n)$.}

\begin{lemma} \label{propeqspecht3columns}
Let $N \in \textup{NB}^\lambda$ where $\lambda$ is a Young diagram with $3$ columns and $|\lambda|=n$. Then, for all $m=1,\dots,n$ we have

\begin{align} \label{equationforspechtpolynomials}
       \bigg( \partial_{x_m}^3 & + \sum_{i \neq m} \frac{\partial_{x_m}^2 + \partial_{x_i} \partial_{x_m} + \partial_{x_i}^2}{x_i-x_m} + \frac12 \sum_{i \neq m} \sum_{j \neq i,m} \frac{\partial_{x_m} + \partial_{x_i} + \partial_{x_j}}{(x_i-x_m)(x_j-x_m)} \\
    \nonumber & + \frac16 \sum_{i \neq m} \sum_{j \neq m,i} \sum_{k \neq m,i,j} \frac{1}{(x_i-x_m)(x_j-x_m)(x_k-x_m)} \bigg) \mathcal P_N(x_1,\cdots,x_n) = 0.
    \end{align}
\end{lemma}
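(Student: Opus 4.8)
The plan is to reduce the statement to a divisibility property of an explicit polynomial. Write $\mathcal{L}_m$ for the third-order operator in \eqref{equationforspechtpolynomials}. Since $\mathcal{P}_N=\prod_c\Delta(N_{\cdot,c})$ depends on $N$ only through the partition of $\{1,\dots,n\}$ into its three column-sets, it suffices to prove $\mathcal{L}_m P=0$ for $P=\Delta(C_1)\Delta(C_2)\Delta(C_3)$ and an arbitrary ordered partition $\{1,\dots,n\}=C_1\sqcup C_2\sqcup C_3$ into three (possibly empty) blocks, and any $m$. I write $a\sim b$ when $a,b$ lie in the same block, and use the logarithmic derivatives $\partial_a\log P=\psi_a:=\sum_{b\sim a,\,b\neq a}(x_a-x_b)^{-1}$.

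First I would show that $\mathcal{L}_m P$ is in fact a polynomial. The coefficients of $\mathcal{L}_m$ only produce denominators $x_i-x_m$, so $\mathcal{L}_m P$ is rational with at most simple poles along the walls $x_i=x_m$, and the residue along such a wall equals
\[
\Big[(\partial_m^2+\partial_i\partial_m+\partial_i^2)P + \sum_{j\neq i,m}\frac{(\partial_m+\partial_i+\partial_j)P}{x_j-x_m} + \tfrac12\sum_{\substack{j,k\neq i,m\\ j\neq k}}\frac{P}{(x_j-x_m)(x_k-x_m)}\Big]_{x_i=x_m}.
\]
When $i\not\sim m$ one divides by $P$ (regular and nonzero on the wall), rewrites everything through the $\psi_a$, groups the indices $j\neq i,m$ by their block, and applies inside each block the partial-fraction identity $\sum_{j\neq l,\ j,l\in B}\frac{1}{(t-x_j)(x_j-x_l)}=\tfrac12\big[\big(\sum_{j\in B}\frac{1}{t-x_j}\big)^2-\sum_{j\in B}\frac{1}{(t-x_j)^2}\big]$; all quadratic and all square terms then cancel identically. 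This is exactly where the three-column hypothesis enters: once $i$ and $m$ are removed, the remaining indices form a \emph{single} block, so no cross-block products survive. When $i\sim m$, writing $P=(x_m-x_i)R$ with $R$ symmetric under $x_i\leftrightarrow x_m$ collapses the bracket to $\partial_m R-\partial_i R$ (the last two sums vanishing because $P$ and $\partial_j P$ vanish on the wall), which is $0$ by symmetry of $R$. Hence every residue vanishes and $\mathcal{L}_m P$ is a polynomial, homogeneous of degree $\deg P-3$.

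Next I would pin down its divisors. Since $\mathcal{L}_m$ is invariant under the symmetric group permuting all variables $\{x_a:a\neq m\}$, for any transposition $s_{ab}$ with $a,b\neq m$ and $a\sim b$ we get $s_{ab}(\mathcal{L}_m P)=\mathcal{L}_m(s_{ab}P)=-\mathcal{L}_m P$, so $\mathcal{L}_m P$ is antisymmetric in each such pair and is therefore divisible by $P_{\hat m}:=\Delta(C_1\setminus\{m\})\Delta(C_2)\Delta(C_3)$, a polynomial not involving $x_m$. It then remains to show that $\mathcal{L}_m P$ vanishes on each diagonal $x_l=x_m$ with $l\sim m$ — the ``next order'' of the $i\sim m$ computation above, again controlled by the block structure. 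Granting this, and since the factors $x_m-x_l$ are coprime to $P_{\hat m}$, we conclude $P=P_{\hat m}\prod_{l\sim m}(x_m-x_l)$ divides $\mathcal{L}_m P$. Writing $\mathcal{L}_m P=P\cdot h$ then forces $h$ to be homogeneous of degree $-3$, whence $h=0$ and $\mathcal{L}_m P=0$.

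The main obstacle is the residue computation at the walls $x_i=x_m$ together with the diagonal-vanishing for $i\sim m$: this is where the combinatorics of which indices share a column must be tracked carefully, and where exactly three columns is indispensable — with four or more columns the indices remaining after deleting $i,m$ split into several blocks, uncancelled cross-block terms survive, and the identity genuinely fails.
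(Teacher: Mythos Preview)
Your overall strategy --- show $\mathcal{L}_m P$ is a polynomial, then show it is divisible by $P$, then conclude by a degree count --- is a reasonable alternative to the paper's direct computation, and the antisymmetry argument giving divisibility by $P_{\hat m}$ is clean. However, two of the steps you flag as ``the main obstacle'' are not actually carried out, and one of them contains a concrete error.

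First, your justification for the vanishing of the residue when $i\not\sim m$ is wrong. You write that ``once $i$ and $m$ are removed, the remaining indices form a single block''. This is false: with three columns and $i,m$ in two different columns, the remaining indices still populate all three blocks (two of them with one element removed, the third untouched). So cross-block products do survive in $\sum_{j\neq i,m}\psi_j/(x_j-x_m)$ and in the double sum, and the cancellation you describe does not follow from the partial-fraction identity alone. The residue does in fact vanish (since the lemma is true), but establishing this requires tracking the interaction of the three blocks with the points $i$ and $m$, which is precisely the content of the case analysis the paper performs.

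Second, the step ``$\mathcal{L}_m P$ vanishes on each diagonal $x_l=x_m$ with $l\sim m$'' is simply asserted (``Granting this\ldots''). This is not the same computation as the residue for $i\sim m$: there you showed the singular part vanishes, whereas here you need the \emph{value} of the regular function $\mathcal{L}_m P$ to vanish on the diagonal, which involves all the other terms of the operator (those with poles at $x_j=x_m$ for $j\neq l$) evaluated at $x_l=x_m$. This is a genuine computation that has to be done, and again it amounts to the same block-by-block bookkeeping that the paper absorbs into its six-case check of the symmetrized coefficient $g(i,j,k)$.

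By contrast, the paper's proof bypasses both issues by computing $\mathcal{L}_m P/P$ directly: after substituting the logarithmic derivatives it reduces to a single triple sum $\sum_{i,j,k}\frac{d(i,j,k)}{(x_i-x_m)(x_j-x_m)(x_k-x_m)}$, symmetrizes the numerator to $g(i,j,k)$, and verifies $g(i,j,k)=0$ by examining the six possible column configurations of $m,i,j,k$. That case analysis is exactly where the three-column hypothesis enters, and it is not avoided by your more structural framing --- it just reappears as the residue and diagonal-vanishing checks you left undone.
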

\begin{proof}
    We first write 
    \begin{equation} \label{dePTfphi}
        \mathcal P_N(x_1,\cdots,x_n) = \prod_{1<i\leq j \leq n} (x_j-x_i)^{\phi_N(j,i)},
    \end{equation}
    where $\phi_N(j,i) = 1$ if $i,j$ lie in the same column in $N$, or $0$ otherwise. In particular, $\phi_N(j,i) = \phi_N(i,j)$ and $\phi_N(i,i) = 0$. Substitution of the derivative
    \begin{align*}
        & \partial_{x_m} \mathcal P_T(x_1,\cdots,x_n) = \lb \sum_{i \neq m} \frac{\phi_N(m,i)}{x_m-x_i} \rb \mathcal P_N(x_1,\cdots,x_n)
    \end{align*}
   into \eqref{equationforspechtpolynomials} results into a long expression involving simple, double and triple sums. Bringing all double sums to the form $\sum_{i \neq m} \sum_{j \neq m,i}$ and all triple sums to the form $\sum_{i \neq m} \sum_{j \neq m,i} \sum_{k \neq m,i,j}$, tedious but straightforward computations show that the left-hand side of \eqref{equationforspechtpolynomials} reduces to
   $$\sum_{i \neq m} \sum_{j \neq m,i} \sum_{k \neq m,i,j} \frac{d(i,j,k)}{(x_i-x_m)(x_j-x_m)(x_k-x_m)},$$
   where
   \begin{align*}
        d(i,j,k) = \; & \frac16 - \phi_N(m,i)\phi_N(m,j)\phi_N(m,k) + \phi_N(m,j)\phi_N(m,k) \\
       + & \frac12 \phi_N(m,j)\phi_N(i,k) + \frac13 \phi_N(i,j)\phi_N(i,k) - \frac12 \phi_N(i,k) - \frac12 \phi_N(m,k).
   \end{align*}
   By symmetry of the summations with respect to $i,j,k$, we have
   \begin{align*}
    &   \sum_{i \neq m} \sum_{j \neq m,i} \sum_{k \neq m,i,j} \frac{d(i,j,k)}{(x_i-x_m)(x_j-x_m)(x_k-x_m)} = \sum_{i \neq m} \sum_{j \neq m,i} \sum_{k \neq m,i,j} \frac{g(i,j,k)}{(x_i-x_m)(x_j-x_m)(x_k-x_m)},
   \end{align*}
   where 
   \begin{align*}
       g(i,j,k) = \frac16 \lb d(i,j,k)+d(j,i,k)+d(i,k,j)+d(k,j,i)+d(j,i,k)+d(k,i,j) \rb,
   \end{align*}
   or, more explicitly, 
\begin{align*}
    g(i,j,k) = \; & - \phi_N(m,i)\phi_N(m,j)\phi_N(m,k) + \frac16 \\
    &- \frac16 \lb \phi_N(m,i) + \phi_N(m,j) + \phi_N(m,k) + \phi_N(i,j) + \phi_N(i,k) + \phi_N(j,k) \rb \\
    & + \frac16 \lb \phi_N(m,i)\phi_N(j,k) + \phi_N(m,j)\phi_N(i,k) + \phi_N(m,k)\phi_N(i,j)\rb \\
    & + \frac13 \lb \phi_N(m,i)\phi_N(m,j) + \phi_N(m,i)\phi_N(m,k) + \phi_N(m,j)\phi_N(m,k)\rb \\
    & + \frac19 \lb \phi_N(i,j)\phi_N(i,k) + \phi_N(j,i) \phi_N(j,k) + \phi_N(k,i)\phi_N(k,j) \rb.
\end{align*}
   The last step of the proof is to show that $g(i,j,k) = 0$. We proceed case by case. There are, up to permutations of $i,j,k$, six different cases to consider. In each case, we give all possible values for $\phi_N$ and the identity $g(i,j,k)=0$ readily follows.
   
   \textbf{First case}: $i,j,k$ in three different columns. We have $\phi_N(i,j)=\phi_N(j,k)=\phi_N(i,k)=0$. Without loss of generality we can suppose that $i$ lies in the same column as $m$, which gives $\phi_N(m,i)=1,\;\phi_N(m,j)=\phi_N(m,k)=0$.
   
   \textbf{Second case}: $i,j,k$ in the same column different from the one of $m$. We have $\phi_N(i,j)=\phi_N(j,k)=\phi_N(i,k)=1$ and $\phi_N(m,i)=\phi_N(m,j)=\phi_N(m,k)=0$.

   \textbf{Third case}: $i,j,k,m$ in the same column. We have $\phi_N(i,j)=\phi_N(j,k)=\phi_N(i,k)=\phi_N(m,i)=\phi_N(m,j)=\phi_N(m,k)=1$.

   \textbf{Fourth case}: $i,j,m$ in the same column, $k$ in a different column. We have $\phi_N(i,j)=\phi_N(m,i)=\phi_N(m,j)=1$ and $\phi_N(j,k)=\phi_N(i,k)=\phi_N(m,k)=0$. 

   \textbf{Fifth case}: $i,m$ in the same column, $j,k$ in the same column different from the one of $i,m$. We have $\phi_N(j,k)=\phi_N(m,i)=1$ and $\phi_N(i,j)=\phi_N(i,k)=\phi_N(m,k)=\phi_N(m,j)=0$. 

   \textbf{Sixth case}: $m,i,k$ in different columns, $j,i$ in the same column. We have $\phi_N(i,j)=1$ and $\phi_N(j,k)=\phi_N(i,k)=\phi_N(m,k)=\phi_N(m,j)=\phi_N(m,i)=0$. 
\end{proof}
\begin{remark} \label{remarkspecht2columns}
Let $N \in \textup{NB}^\lambda$ where $\lambda$ is a Young diagram with $2$ columns and $|\lambda|=n$. Then, it follows from \cite{LPR24} that for all $m=1,\dots,n$ we have
    \begin{equation}
       \lb \partial_{x_m}^2 + \sum_{i \neq m} \frac{\partial_{x_i} + \partial_{x_m}}{x_i-x_m} + \frac12 \sum_{i \neq m} \sum_{j \neq i,m} \frac{1}{(x_i-x_m)(x_j-x_m)} \rb \mathcal P_N(x_1,\dots,x_n) = 0.
        \end{equation}
\end{remark}
Combining Lemma \ref{propeqspecht3columns} and Remark \ref{remarkspecht2columns}, we then formulate the following conjecture:
\begin{cj} \label{cjMcolumns}
Let $S_k^{(m)} = \{(i_1,\dots,i_k) \in \llbracket 1,n \rrbracket \; | \; i_l \neq m \; \text{for all} \; l=1,\dots,k \; \text{and} \; i_l \neq i_p \; \text{for} \; l \neq p \; \text{and for all} \; l,p =1,\dots,k\}$. Moreover, let $N \in \textup{NB}^\lambda$ where $\lambda$ is a Young diagram with $M$ columns and $|\lambda|=n$. Then, for all $m=1,\dots,n$ we have
\begin{equation}
    \lb \partial_{x_m}^M + \sum_{k=1}^M \frac{1}{k!} \sum_{(i_1,\dots,i_k) \in S_k^{(m)}} \left[\prod_{l=1}^k \frac{1}{x_{i_l}-x_m}\right] \sum_{\substack{(\alpha_0,\dots,\alpha_k) \in \mathbb Z_{\geq 0}^{k+1} \\ \alpha_0 + \cdots + \alpha_k = M-k}} \partial_{x_m}^{\alpha_0} \prod_{l \geq 1}^k \partial^{\alpha_l}_{x_{i_l}} \rb \mathcal P_N(x_1,\dots,x_n) = 0.
\end{equation}
\end{cj}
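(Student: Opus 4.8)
The plan is to follow the strategy of the proof of Lemma~\ref{propeqspecht3columns} and extend it from $M=3$ to arbitrary $M$. Writing $\mathcal P_N = \prod_{i<j}(x_j-x_i)^{\phi_N(i,j)}$ with $\phi_N(i,j)\in\{0,1\}$ the indicator that $i,j$ lie in the same column, one has $\partial_{x_m}\mathcal P_N = \bigl(\sum_{i\neq m}\tfrac{\phi_N(m,i)}{x_m-x_i}\bigr)\mathcal P_N$, and more generally every monomial $\partial_{x_m}^{\alpha_0}\prod_l\partial_{x_{i_l}}^{\alpha_l}$ applied to $\mathcal P_N$ produces $\mathcal P_N$ times a rational function whose poles sit only on the diagonals $x_a=x_b$. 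First I would substitute these expressions into the operator, divide by $\mathcal P_N$, and collect the result. As in the Lemma, I expect that after bringing all sums to a standard form only the "top" terms survive, reducing the identity to showing that a single symmetric coefficient $G(i_1,\dots,i_M)$ attached to the most singular product $\prod_{l=1}^M\frac{1}{x_{i_l}-x_m}$ vanishes for every choice of $M$ distinct indices $i_1,\dots,i_M\neq m$.

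A useful repackaging is that the inner sum $\sum_{\alpha_0+\cdots+\alpha_k=M-k}\partial_{x_m}^{\alpha_0}\prod_l\partial_{x_{i_l}}^{\alpha_l}$ is exactly the complete homogeneous symmetric polynomial $h_{M-k}(\partial_{x_m},\partial_{x_{i_1}},\dots,\partial_{x_{i_k}})$. Using $\sum_{d\geq0}h_d(y_0,\dots,y_k)\,t^d=\prod_{l}(1-ty_l)^{-1}$ and summing over $k$ and over subsets, the whole family of operators assembles into the single generating function
\[
  \sum_{M\geq0}t^M\,\mathcal O^{(M)}_{m}
  =\Bigl[\,\prod_{i\neq m}\Bigl(1+\frac{t}{x_i-x_m}\,\frac{1}{1-t\partial_{x_i}}\Bigr)\Bigr]\,\frac{1}{1-t\partial_{x_m}},
\]
where the bracketed factors mutually commute and $\frac{1}{1-t\partial_{x_m}}$ acts first. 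Since $\mathcal P_N$ is a polynomial, each resolvent $\tfrac{1}{1-t\partial_{x_i}}$ acts as a finite sum, so $\mathcal O^{(M)}_m\mathcal P_N$ is a rational function whose only poles lie at $x_i=x_m$. The conjecture asserts precisely that the coefficient of $t^M$ annihilates $\mathcal P_N$ when $N$ has $M$ columns; this bookkeeping makes transparent why the pole structure is controlled and why the order of the operator is tied to $M$.

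The structural heart of the argument is a pigeonhole observation. The most singular surviving term involves the $M+1$ indices $\{m,i_1,\dots,i_M\}$; since $N$ has only $M$ columns, at least two of these indices lie in the same column, exactly as is used in the "First case" of Lemma~\ref{propeqspecht3columns}. Equivalently, this is the polynomial incarnation of the fact recalled around \eqref{TLMquotient}, that the antisymmetrizer on $M+1$ consecutive sites annihilates the relevant Specht module (for $M=3$, on four sites). I would try to derive the vanishing of $G$ directly from this annihilation property, so that the coincidence of two same-column indices is what forces each symmetrized summand to cancel; this would replace the finite case check of the Lemma by a uniform statement valid for all $M$.

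The main obstacle is precisely this last step. For $M=3$ the vanishing was verified by enumerating the six ways in which $\{m,i,j,k\}$ can distribute among the columns; for general $M$ the number of set partitions of $M+1$ indices into at most $M$ columns grows rapidly, so a case analysis is hopeless and a structural proof of $G=0$ is required. A second, subtler difficulty is to justify the reduction to the top, simple-pole term, i.e.\ to show that all contributions with a repeated index or a higher-order pole $\tfrac{1}{(x_i-x_m)^r}$, $r\geq2$, cancel; for $M=3$ this is the "tedious but straightforward" simplification, but for general $M$ it must be organized systematically, most naturally through the generating function above together with the identity $\partial_{x_i}\mathcal P_N=\bigl(\sum_a\tfrac{\phi_N(i,a)}{x_i-x_a}\bigr)\mathcal P_N$ and the $\{0,1\}$-valuedness of $\phi_N$. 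An alternative route worth pursuing is induction on $M$ using the factorization of $\mathcal P_N$ into column Vandermondes together with the conjugation identity $\mathcal O^{(1)}_m=\prod_{i\neq m}(x_i-x_m)\circ\partial_{x_m}\circ\prod_{i\neq m}(x_i-x_m)^{-1}$, which already yields the base case $M=1$ since dividing $\mathcal P_N$ by its factors through $x_m$ leaves an expression free of $x_m$.
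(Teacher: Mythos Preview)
The statement you are attempting to prove is presented in the paper as a \emph{Conjecture}; no proof is given. The authors formulate it by extrapolation from the two-column case (Remark~\ref{remarkspecht2columns}, cited from \cite{LPR24}) and the three-column case (Lemma~\ref{propeqspecht3columns}), the latter established by the finite case analysis you summarize. There is therefore no proof in the paper against which to compare your proposal.

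Your outline is a natural extrapolation of the $M=3$ argument, and the generating-function repackaging is a sensible organizing device. But the two obstacles you yourself flag are precisely the substance of the conjecture, and neither is resolved. First, the reduction to a single symmetrized coefficient $G(i_1,\dots,i_M)$ attached to the simple-pole product $\prod_l(x_{i_l}-x_m)^{-1}$ is asserted rather than proved; already for $M=3$ the paper's ``tedious but straightforward'' step hides nontrivial cancellations among higher-order poles, and you offer no systematic mechanism for general $M$. Second, the pigeonhole observation that two of the $M+1$ indices $\{m,i_1,\dots,i_M\}$ must share a column is correct but does not by itself force $G=0$: in the $M=3$ proof the vanishing of $g(i,j,k)$ depends on the specific numerical coefficients produced by the reduction, not merely on the existence of a shared column. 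The appeal to the antisymmetrizer relation~\eqref{TLMquotient} is suggestive, but that is a statement about the Specht module as a representation, not the pointwise differential identity needed here. In short, your proposal correctly identifies the shape of the problem and the likely ingredients, but remains a strategy rather than a proof; this is consistent with the paper's own treatment of the statement as open.
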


Thanks to Lemma \ref{propeqspecht3columns} we are now ready to prove \eqref{W3BPZ} in the case where $\varsigma = (1^n)$. 

\begin{lemma} \label{lemmaW3BPZvarsigmaequals1}
    Let $N \in \textup{NB}^{\lambda}$ where $\lambda$ is a Young diagram with 3 rows which is not necessarily rectangular, and let $\mathcal P_{N^t}(x_1,\cdots,x_n)$ be its associated Specht polynomial. Then, the conformal block functions $\mathcal U_N(x_1,\dots,x_n)$ of Definition \ref{defconformalblocks} satisfy
    \begin{equation} 
        \mathcal D^{(m)}_{(1^n),2} \mathcal U_N(x_1,\cdots,x_n) = 0.
    \end{equation}
\end{lemma}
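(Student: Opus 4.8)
The plan is to prove the identity by a gauge transformation that turns the BPZ operator into the third-order Specht operator of Lemma \ref{propeqspecht3columns}, and then invoke that lemma. Write $V=\prod_{1\le i<j\le n}(x_j-x_i)$ for the Vandermonde determinant, so that by \eqref{definitionCB} we have $\mathcal U_N = V^{-1/3}\,\mathcal P_{N^t}$ when $\varsigma=(1^n)$. Since $\lambda$ has three rows, $N^t$ is a numbering of the three-column diagram $\lambda^t$, so Lemma \ref{propeqspecht3columns} applies to $\mathcal P_{N^t}$. Writing $\mathcal{L}^{(m)}$ for the operator appearing on the left-hand side of \eqref{equationforspechtpolynomials}, it therefore suffices to prove the operator identity
\begin{equation}\label{planconj}
V^{1/3}\,\mathcal D^{(m)}_{(1^n),2}\,V^{-1/3}=\mathcal{L}^{(m)},
\end{equation}
because then $\mathcal D^{(m)}_{(1^n),2}\mathcal U_N = V^{-1/3}\,\mathcal{L}^{(m)}\,\mathcal P_{N^t}=0$ by Lemma \ref{propeqspecht3columns}.

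To set up \eqref{planconj} I would first specialize the coefficients of \eqref{defoperatorBPZ}: at $c=2$ relation \eqref{relationcandbeta} forces $\beta=1$ and $h=1/3$, while $\varsigma=(1^n)$ gives $q_i=1$ for all $i$. Introducing the logarithmic derivative $L_m:=\partial_{x_m}\log V=\sum_{i\neq m}(x_m-x_i)^{-1}$, one has $V^{1/3}\partial_{x_i}V^{-1/3}=\partial_{x_i}-\tfrac13 L_i=:D_i$, so the conjugated operator is obtained from $\mathcal D^{(m)}_{(1^n),2}$ by the substitution $\partial_{x_i}\mapsto D_i$, keeping track of the commutators $[\partial_{x_j},L_i]=\partial_{x_j}L_i$. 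The only auxiliary quantities needed are $\partial_{x_m}L_m=-\sum_{i\neq m}(x_m-x_i)^{-2}$, $\partial_{x_j}L_m=(x_m-x_j)^{-2}$ for $j\neq m$, and $\partial_{x_m}^2 L_m = 2\sum_{i\neq m}(x_m-x_i)^{-3}$.

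I would then verify \eqref{planconj} order by order in the number of derivatives. The third-order part is immediate, since only the leading term $\partial_{x_m}^3$ of \eqref{defoperatorBPZ} contributes and it is unchanged by conjugation. At second order, expanding $D_m^3=\partial_{x_m}^3-L_m\partial_{x_m}^2+(3\ell^2-3\ell')\partial_{x_m}+(-\ell''+3\ell\ell'-\ell^3)$ with $\ell=\tfrac13 L_m$ produces the term $-L_m\partial_{x_m}^2=\sum_{i\neq m}(x_i-x_m)^{-1}\partial_{x_m}^2$, which together with the second-order part $\sum_{i\neq m}(x_i-x_m)^{-1}(\partial_{x_m}\partial_{x_i}+\partial_{x_i}^2)$ coming from the conjugated second term of \eqref{defoperatorBPZ} reproduces exactly the second-order part $\sum_{i\neq m}(x_i-x_m)^{-1}(\partial_{x_m}^2+\partial_{x_i}\partial_{x_m}+\partial_{x_i}^2)$ of $\mathcal{L}^{(m)}$. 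The first- and zeroth-order parts are the substance of the argument: the contributions of $D_m^3$ (through $3\ell^2-3\ell'$ and $-\ell''+3\ell\ell'-\ell^3$) must be combined with the zeroth- and first-order pieces of the conjugated versions of all five remaining terms of \eqref{defoperatorBPZ}, including the single, double and triple sums with the various denominators $(x_i-x_m)^2$, $(x_i-x_m)^3$ and $(x_m-x_i)(x_j-x_i)$.

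The main obstacle is precisely this first- and zeroth-order matching. As in the proof of Lemma \ref{propeqspecht3columns}, I would bring every double sum to the canonical form $\sum_{i\neq m}\sum_{j\neq i,m}$ and every triple sum to $\sum_{i\neq m}\sum_{j\neq i,m}\sum_{k\neq i,j,m}$, then symmetrize the summands in $(i,j)$ and in $(i,j,k)$; after collecting, the rational coefficient multiplying each $\partial_{x_m}$, $\partial_{x_i}$, $\partial_{x_j}$ must match the one in $\mathcal{L}^{(m)}$, and the residual zeroth-order triple sum must reduce to $\tfrac16\sum\sum\sum[(x_i-x_m)(x_j-x_m)(x_k-x_m)]^{-1}$ with all lower sums cancelling. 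It is exactly this bookkeeping that fixes the otherwise opaque coefficients of \eqref{defoperatorBPZ} at $h=1/3$; I expect the computation to be lengthy but entirely mechanical, the only genuine care being in the partial-fraction identities needed to merge sums carrying different denominators.
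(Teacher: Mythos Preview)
Your proposal is correct and takes essentially the same approach as the paper: both arguments write $\mathcal U_N=V^{-1/3}\mathcal P_{N^t}$, conjugate $\mathcal D^{(m)}_{(1^n),2}$ by the Vandermonde prefactor (the paper phrases this as ``substitution of the derivative'' of $V^{-1/3}$), bring all double and triple sums to the canonical forms $\sum_{i\neq m}\sum_{j\neq i,m}$ and $\sum_{i\neq m}\sum_{j\neq i,m}\sum_{k\neq i,j,m}$, and then identify the result with the left-hand side of \eqref{equationforspechtpolynomials} so that Lemma~\ref{propeqspecht3columns} applies. Your formulation of the reduction as the operator identity $V^{1/3}\mathcal D^{(m)}_{(1^n),2}V^{-1/3}=\mathcal L^{(m)}$ is a tidy way to package the same computation; like the paper, you leave the first- and zeroth-order matching as a mechanical bookkeeping step.
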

\begin{proof}
From Definition \ref{defconformalblocks} we have
\begin{equation} \label{Dvarsigmaequals1}
\mathcal D^{(m)}_{(1^n),2} \mathcal U_N(x_1,\cdots,x_n) = \mathcal D^{(m)}_{(1^n),2} \lb \lb \prod_{1<i\leq j \leq n} (x_j-x_i)^{-\frac13} \rb \mathcal P_{N^t} (x_1,\cdots,x_n) \rb.
\end{equation}
We now express these equations in terms of $\mathcal P_{N^t}$. Substitution of the derivative 
$$\partial_{x_m} \lb \prod_{1<i\leq j \leq n} (x_j-x_i)^{-\frac13} \rb = \frac13 \prod_{1<i\leq j \leq n} (x_j-x_i)^{-\frac13} \sum_{i \neq m} \frac{1}{x_i-x_m}$$
yields a long expression involving simple, double and triple sums. Rewriting all double sums in the form $\sum_{i \neq m} \sum_{j \neq i,m}$ and all triple sums in the form $\sum_{i \neq m} \sum_{j \neq i,m} \sum_{k \neq i,m,j}$, tedious but straightforward computations show that \eqref{Dvarsigmaequals1} is equivalent to the left-hand side of \eqref{equationforspechtpolynomials} with $N \to N^t$. This equals 0 by Proposition \ref{propeqspecht3columns}. This completes the proof.
\end{proof}

\subsubsection{Case of general $\varsigma$}
The remainder of this section is devoted to the proof of \eqref{W3BPZ} for any $\varsigma$.
\begin{lemma} \label{lemmafusion}
    Let $\beta \neq 0$ so that $h+1 \neq 0$. Moreover, let $\varsigma = (s_1,\cdots,s_{j-1},1,1,s_{j+2},\cdots,s_d)$ for some $j \in \llbracket 1,d-1 \rrbracket$, and let $f:\mathcal H_n \to \mathbb R$ be a function which satisfies
    $$\mathcal D^{(i)}_{\varsigma,c} \; f(x_1,\cdots,x_d) = 0, \qquad i=1,\cdots,d,$$
    and which has the Fröbenius series expansion 
    \begin{equation} \label{frobeniusansatz}
        f(x_1,\cdots,x_d) = (x_{j+1}-x_j)^{-h} \sum_{i \geq 0} f_i(x_1,\cdots,x_j,x_{j+2},\cdots,x_d) (x_{j+1}-x_j)^i.
    \end{equation}
    Suppose that the coefficients $f_i$ are smooth functions of $x_1,\cdots,x_j,x_{j+2},\cdots,x_d$, and that for any compact subset $K$ of their domain of definition and for any multi-index $\alpha$ there exist positive constants $r_{K,\alpha}, C_{K,\alpha}$ such that the bound on partial derivatives 
\begin{equation} \label{uniformbound}
    \left|\partial^\alpha f_m(x_1,\cdots,x_j,x_{j+2},\cdots,x_d) \right| \leq C_{K,\alpha} r^{-m}_{K,\alpha}
\end{equation}
  holds for all $m \in \mathbb N$ and $(x_1,\cdots,x_j,x_{j+2},\cdots,x_n) \in K$. Then, the function $f_0$ defined by the limit
  \begin{equation}
      f_0(x_1,\cdots,x_j,x_{j+2},\cdots,x_d) = \lim_{x_{j+1} \to x_j} (x_{j+1}-x_j)^h f(x_1,\cdots,x_d)
  \end{equation}
  satisfies
    \begin{align} 
          \label{fusedBPZequations} \mathcal D_{\varsigma',c}^{(i)} f_0(x_1,\cdots,x_j,x_{j+2},\cdots,x_d) = 0, \qquad i=1, \cdots, j-1,j, j+2,\cdots,d,
    \end{align}
    where $\varsigma' = (s_1,\cdots,s_{j-1},2,s_{j+2},\cdots,s_d)$.
\end{lemma}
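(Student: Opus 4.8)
The plan is to insert the Fröbenius ansatz \eqref{frobeniusansatz} into each equation $\mathcal D^{(i)}_{\varsigma,c}f=0$, expand the result as a Laurent series in the gap variable, and identify the fused equations \eqref{fusedBPZequations} with prescribed coefficients of these expansions. Concretely, I would set $u=x_j$ and $t=x_{j+1}-x_j$, so that $\partial_{x_{j+1}}=\partial_t$ and $\partial_{x_j}=\partial_u-\partial_t$, and write $f=t^{-h}F$ with $F=\sum_{k\ge0}f_k t^k$ smooth in $t$ near $0$ and $F|_{t=0}=f_0$. The bounds \eqref{uniformbound} ensure that $F$ is analytic in $t$, uniformly on compact subsets of the remaining variables, so that every differential operator may be applied term by term and the resulting Laurent coefficients manipulated individually. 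The decisive elementary point is that each identity $\mathcal D^{(i)}_{\varsigma,c}f=0$ holds for all small $t\neq0$, whence every coefficient of its $t$-expansion vanishes on its own.

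Two families of equations must then be produced. For a spectator index $m\notin\{j,j+1\}$ the $t$-expansion of $\mathcal D^{(m)}_{\varsigma,c}f$ starts at order $t^{-h-2}$, say $\mathcal D^{(m)}_{\varsigma,c}f=t^{-h}\sum_{k\ge-2}E^{(m)}_k t^k$, and one checks that $E^{(m)}_{-2}$ vanishes identically. I would prove that the coefficient $E^{(m)}_0$ of $t^{-h}$ equals $\mathcal D^{(m)}_{\varsigma',c}f_0$, which yields \eqref{fusedBPZequations} for $m\neq j$ because $E^{(m)}_0=0$. A useful structural dichotomy appears here: the combination $\partial_{x_j}+\partial_{x_{j+1}}=\partial_u$ is free of $\partial_t$ singularities, so every term of $\mathcal D^{(m)}_{\varsigma,c}$ built from $\partial_{x_m}\partial_{x_i}$ fuses immediately into the single term $\partial_{x_m}\partial_{x_j}$ of $\mathcal D^{(m)}_{\varsigma',c}$, whereas the terms built from $\partial_{x_i}^2$ and from the poles $(x_i-x_m)^{-2}$, $(x_i-x_m)^{-3}$ do generate genuine $\partial_t$ singularities that must be controlled. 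The fused equation at the new site $j$, where $q'_j=-1$, is instead extracted from the $t^{-h}$ coefficients of the two fused-site equations $\mathcal D^{(j)}_{\varsigma,c}f=0$ and $\mathcal D^{(j+1)}_{\varsigma,c}f=0$, whose leading singularity is $t^{-h-3}$.

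The instrument that tames the singular contributions is the recursion for the subleading coefficients. Extracting the most singular coefficients (orders $t^{-h-3}$, $t^{-h-2}$, $t^{-h-1}$) of the two fused-site equations yields the indicial relation—automatically satisfied, since the ansatz already carries the exponent $-h$—together with explicit expressions for $f_1$, $f_2$, $f_3$ in terms of $f_0$ and its $u$-derivatives; the hypothesis $h+1\neq0$ is precisely what keeps the leading coefficients of \eqref{defoperatorBPZ}, and hence these recursions, nondegenerate. Substituting these expressions into $E^{(m)}_0$ collapses it to an expression in $f_0$ alone, which I would match term by term against $\mathcal D^{(m)}_{\varsigma',c}f_0$. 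The conceptual content of the matching is that the recursion turns the naive sum $q_j+q_{j+1}=2$ of the two $s=1$ contributions into the single $s=2$ value $q'_j=-1$: for instance $q_j\partial_{x_j}^2+q_{j+1}\partial_{x_{j+1}}^2$ applied to $t^{-h}f_0$ contributes $\partial_u^2f_0-2\partial_u f_1+4f_2$ at order $t^{-h}$, and the recursion forces $-2\partial_u f_1+4f_2=-2\partial_u^2f_0$, leaving exactly $-\partial_u^2f_0=q'_j\partial_{x_j}^2f_0$. The same mechanism, applied to the cubic part $(\partial_u-\partial_t)^3$, produces the fused third-order operator at site $j$.

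The main obstacle is this final matching. It is a long and delicate accounting of singular terms in which the subleading coefficients genuinely intervene—one checks, for instance, that the part of $E^{(m)}_0$ proportional to the undifferentiated $f_2$ carries the nonzero factor $\tfrac{9}{2}(h+1)(h-1)/(x_m-x_j)$—so the recursion must really be used and no soft limiting argument suffices. I would organise the computation in the spirit of Lemma \ref{propeqspecht3columns}: bring every contribution to a common set of summation indices, reduce the problem to a finite list of polynomial identities in $h$, and verify each of them for all $h$ with $h+1\neq0$. Note that this argument is internal to the system for $\varsigma$ and does not invoke the $\varsigma=(1^n)$ result of Lemma \ref{lemmaW3BPZvarsigmaequals1}; the analytic hypotheses \eqref{uniformbound} serve only to justify the termwise expansions and to guarantee that $f_0$ is smooth and that the limit defining it exists.
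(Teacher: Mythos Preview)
Your plan coincides with the paper's: insert the Fr\"obenius ansatz, expand each $\mathcal D^{(i)}_{\varsigma,c}f$ in powers of the gap, use the low-order coefficients of the two fused-site equations as recursions, and match the order-$t^{-h}$ coefficient against the fused operator (your observation that the $f_2$ term in the spectator coefficient carries $\tfrac{9}{2}(h+1)(h-1)/(x_m-x_j)$ is exactly what the paper finds). Two specifics need correction. First, the paper uses the centred variable $\hat x_j=\tfrac12(x_j+x_{j+1})$ rather than your $u=x_j$, making the $j\leftrightarrow j+1$ symmetry manifest; this is what drives the key step you underdescribe: one does \emph{not} solve for $f_1$ and $f_3$ but instead adds the order-$\epsilon^{-h}$ coefficients $c_3^{(j)}+c_3^{(j+1)}$, so that the $f_3$ and $\mathcal A f_1$ contributions cancel by symmetry, leaving a relation between $\partial_{\hat x_j}f_2$ and $f_0$ which, combined with the $f_2$--$f_0$ relation from $c_2^{(j)}$, yields the fused equation at site $j$. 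This matters because the first recursion reads $(h+1)(3h-1)f_1=0$, which at $h=\tfrac13$ (i.e.\ $c=2$, precisely the case of interest) leaves $f_1$ undetermined; your phrasing ``explicit expressions for $f_1,f_2,f_3$'' would break there, whereas the symmetric sum works uniformly. Second, your illustrative computation is wrong: the $t^{-h}$ coefficient of $(\partial_u-\partial_t)^2+\partial_t^2$ acting on $f$ is $\partial_u^2f_0-2(1-h)\partial_uf_1+2(2-h)(1-h)f_2$, not $\partial_u^2f_0-2\partial_uf_1+4f_2$, so the matching is a genuine identity in $h$ rather than the $h$-independent numerology you sketch.
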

\begin{proof}
Throughout the proof it will be convenient to perform the change of variables $(x_j,x_{j+1}) \to (\hat x_j, \epsilon)$ such that
\begin{equation} \label{defhatxjandepesilon}
    \hat x_j = \frac{x_j+x_{j+1}}{2}, \qquad \epsilon = x_{j+1}-x_j,
\end{equation}
and then send $\hat x_j \to x_j$ at the end. By the chain rule of derivatives, we have
\begin{equation}
    \partial_{x_j} = - \partial_\epsilon + \frac12 \partial_{\hat x_j}, \qquad \partial_{x_j+1} = \partial_\epsilon + \frac12 \partial_{\hat x_j}.
\end{equation}
We will first treat the PDEs $\mathcal D_{\varsigma,c}^{(i)}f=0$ for $i=j$ and $i=j+1$ which will lead to \eqref{fusedBPZequations} for $i=j$. Then, we will study the PDEs $\mathcal D_{\varsigma,c}^{(i)}f=0$ for $i\neq j,j+1$ which will lead to \eqref{fusedBPZequations} for $i\neq j$. In each case, the strategy is to apply the differential operator $\mathcal D_{\varsigma,c}^{(i)}f=0$ to the Fröbenius series expansion \eqref{frobeniusansatz}. The assumption \eqref{uniformbound} provides a locally uniform bound on the coefficients $f_m$ and their partial derivatives, hence we exchange the order of the sum and the partial derivatives and we obtain an expression of the form 
$$\mathcal D_{\varsigma,c}^{(i)}f = \epsilon^{-h-3}\sum_{k \geq 0} c_k^{(i)} \; \epsilon^k = 0,$$
where the $c_k^{(i)}$'s are linear combinations of the $f_k$'s. The constraints $c_k^{(i)}=0$ for $k \geq 0$ lead to linear relations between the $f_k$'s. Exploiting such relations up to a certain order leads to the desired equations satisfied by $f_0$. However, since this requires tedious but straightforward computations, we will skip several computational steps in the proof. \\

\noindent \textbf{Proof of \eqref{fusedBPZequations} for $i=j$ utilizing $\mathcal D^{(j)}_{\varsigma,c} f = 0$ and $\mathcal D_{\varsigma,c}^{(j+1)} f = 0$.} We first consider the PDE $\mathcal D^{(j)}_{\varsigma,c}f=0$. The first equation $c_0^{(j)}=0$, also called indicial equation, is automatically satisfied:
\begin{equation} \label{indicialeq}
    c_0^{(j)} = h(h+1)(h+2) - \frac{3(h+1)}4 - \frac{h(h+1)(h+5)}{4} = 0.
\end{equation}
The second equation is
\begin{equation} \label{eqc1j}
    c_1^{(j)} = (h+1)(3h-1) f_1 = 0 \implies (3h-1) f_1 = 0.
\end{equation}
Next, the third equation is
\begin{align*}
    c_2^{(j)} = \; & -\frac32(h-1)(3h+1) f_2 + \frac34(1-3h) f_1 + \frac{3}{8} (3 h+1) \partial_{\hat x_j}^2 f_0 \\
    & - \frac{3}{8}h (h+1) (3 h+1) \sum_{i \neq j,j+1} \frac{f_0}{(x_i-\hat x_j)^2} + \frac{3}{8} (h+1) (3 h+1) \sum_{i \neq j,j+1} \frac{\partial_{x_i} f_0}{x_i-\hat x_j}.
\end{align*}
By \eqref{eqc1j} the term in $f_1$ vanishes, hence we obtain a linear relation between $f_2$ and $f_0$:
\begin{equation} \label{relationf2f0}
    f_2 = - \frac{h(1+h)}{4(h-1)} \sum_{i \neq j,j+1} \frac{f_0}{(x_i-\hat x_j)^2} + \frac{1+h}{4(h-1)} \sum_{i \neq j,j+1} \frac{\partial_{x_i} f_0}{x_i-\hat x_j} + \frac{1}{4(h-1)} \partial_{\hat x_j}^2 f_0.
\end{equation}
Obtaining the constraint $c_3^{(j)}=0$ involves more computations and requires both equations $\mathcal D^{(j)}_{\varsigma,c} f=0$ and $\mathcal D_{s_{j+1}}^{(j)} f=0$. More precisely, utilizing 
\begin{equation} \label{expansiondenom}
    \frac{1}{x_i - \hat x_j + \frac{\epsilon}2} = \frac{1}{x_i-\hat x_j} \sum_{l=0}^\infty \frac{(-1)^l\epsilon^l}{2^l}(x_i-\hat x_j)^{-l}, \qquad \frac{1}{x_i - \hat x_j - \frac{\epsilon}2} = \frac{1}{x_i-\hat x_j} \sum_{l=0}^\infty \frac{\epsilon^l}{2^l}(x_i-\hat x_j)^{-l},
\end{equation}
long but straightforward computations show that the equations $\mathcal D^{(j)}_{\varsigma,c} f=0$ and $\mathcal D^{(j+1)}_{\varsigma,c} f=0$ respectively yield the following constraints:
\begin{align}
  \label{c3j} & c_3^{(j)} = -\frac{3(9h(h-2)+5)}{4} f_3 + \mathcal A f_1 -\frac{9(h-1)}2 \partial_{\hat x_j} f_2 + \mathcal B f_0 = 0, \\
  \label{c3jp1} & c_3^{(j+1)} = -\lb -\frac{3(9h(h-2)+5)}{4} f_3 + \mathcal A f_1 \rb -\frac{9(h-1)}2 \partial_{\hat x_j} f_2 + \mathcal B f_0 = 0,
\end{align}
where 
\begin{align} \label{formulaB}
    \mathcal B = \; & \frac18 \partial_{\hat x_j}^3 + \frac{3(h+1)}{4} \sum_{i \neq j,j+1} \frac{1/2 \partial_{\hat x_j} \partial_{x_i} + q_i \partial_{x_i}^2}{x_i - \hat x_j} \\
    \nonumber & + \frac{3(h+1)}{32} \sum_{i \neq j,j+1} \frac{[7-h-(5h+1)q_i] \partial_{x_i} - 4[h q_m+(h+1)q_i] \partial_{x_m}}{(x_i-x_m)^2} \\
  \nonumber  & - \frac{3(h+1)^2}{8} \sum_{i \neq j,j+1} \sum_{k \neq i,j,j+1} \frac{q_i \partial_{x_k}}{(\hat x_j-x_i)(x_k-x_i)} - \frac{h(h+1)(h+5)}{16} \sum_{i \neq j,j+1} \frac{q_i}{(x_i-\hat x_j)^3} \\
    \nonumber  & + \frac{3h(h+1)(h-7)}{16} \sum_{i \neq j,j+1} \frac{1}{(x_i- \hat x_j)^3} + \frac{3h(h+1)^2}{8} \sum_{i \neq j,j+1} \sum_{j \neq i,j,j+1}  \frac{q_i}{(\hat x_j-x_i)(x_k-x_i)^2},
\end{align}
and where $\mathcal A$ is an intricate differential operator. The explicit expression for $\mathcal A$ needs not be written. In fact, the sum of \eqref{c3j} and \eqref{c3jp1} leads to the constraint  
\begin{equation} \label{relationbisf2f0}
    -\frac{9(h-1)}2 \partial_{\hat x_j} f_2 + \mathcal B f_0 = 0.
\end{equation}
Moreover, the derivative of \eqref{relationf2f0} with respect to $\hat x_j$ gives
\begin{align} \label{eqforderivativeoff2}
     -\frac{9(h-1)}2 \partial_{\hat x_j} f_2 = \; & \frac{9h(1+h)}{4} \sum_{i \neq j,j+1} \frac{f_0}{(x_i-\hat x_j)^3} - \frac{9(h+1)}{8} \sum_{i \neq j,j+1} \frac{\partial_{x_i} f_0}{(x_i-\hat x_j)^2} - \frac{9}{8} \partial_{\hat x_j}^3 f_0 \\
   \nonumber  & + \frac{9h(h+1)}{8} \sum_{i \neq j,j+1} \frac{\partial_{\hat x_j} f_0}{(x_i-\hat x_j)^2} - \frac{9(h+1)}{8} \sum_{i \neq j,j+1} \frac{\partial_{\hat x_j} \partial_{x_i} f_0}{x_i - \hat x_j}.
\end{align}
Substituting \eqref{eqforderivativeoff2} into \eqref{relationbisf2f0} and simplifying the result finally leads to
\begin{align*}
& \bigg[ - \partial_{\hat x_j}^3 + \frac{3(h+1)}4 \sum_{l \neq j,j+1} \frac{- \partial_{\hat x_j} \partial_{x_l} + q_l \partial_{x_l}^2}{x_l-\hat x_j} \\
& + \frac{3(h+1)}{32} \sum_{l\neq j,j+1} \frac{[-(5+h)-(5h+1)q_l] \partial_{x_l} - 4[-2h+(h+1)q_l] \partial_{\hat x_j}}{(x_l-\hat x_j)^2} \\
& - \frac{3(h+1)^2}{8} \sum_{l \neq j,j+1} \sum_{k \neq l,j,j+1} \frac{q_l \partial_{x_k}}{(\hat x_j-x_l)(x_k-x_l)} - \frac{h(h+1)(h+5)}{16} \sum_{l \neq j,j+1} \frac{q_l-3}{(x_l-\hat x_j)^3} \\
& + \frac{3h(h+1)^2}{8} \sum_{l \neq j,j+1} \sum_{k \neq l,j,j+1}  \frac{q_l}{(\hat x_j-x_l)(x_k-x_l)^2} \bigg] f_0 = 0,
\end{align*}
which is exactly \eqref{fusedBPZequations} for $i=j$. \\

\noindent \textbf{Proof of \eqref{fusedBPZequations} for $i\neq j$ utilizing $\mathcal D_{\varsigma,c}^{(i)} f = 0$ for $i \neq j,j+1$.} In this case it is straightforward to observe that $c_0^{(i)}$ and $c_1^{(i)}$ are identically zero. Next, we obtain $c_2^{(i)} = c_1^{(j)} = 0$ where $c_1^{(j)}$ is given in \eqref{eqc1j}. The condition $c_3^{(i)} = 0$ directly leads to \eqref{fusedBPZequations} for $i\neq j$, however it requires a substantial amount of straightforward computations. Therefore, most of the steps are not made explicit. The strategy is to first exclude all terms containing $j$ and $j+1$ in the various sums appearing in the differential operator $\mathcal D_{\varsigma,c}^{(i)}$, then apply the change of variables \eqref{defhatxjandepesilon}, and finally extract all terms proportional to $\epsilon^{-h}$ utilizing \eqref{expansiondenom}. This leads to an equation of the form 
\begin{align} \label{eqc2m} 
& \mathcal C f_0 - \frac{9(h+1)(h-1)f_2}{2(\hat x_j-x_i)} + q_i \partial_{x_i}^3 + \frac{3(h+1)}4 \sum_{l \neq i,j,j+1} \frac{q_i \partial_{x_i} \partial_{x_l} + q_l \partial_{x_l}^2}{x_l-x_i} \\
\nonumber & + \frac{3(h+1)}{32} \sum_{l\neq i,j,j+1} \frac{[(5+h)q_i-(5h+1)q_l] \partial_{x_l} - 4[2h q_i+(h+1)q_l] \partial_{x_i}}{(x_l-x_i)^2} \\
\nonumber & - \frac{3(h+1)^2}{8} \sum_{l \neq i,j,j+1} \sum_{k \neq l,i,j,j+1} \frac{q_l\partial_{x_k}}{(x_i-x_l)(x_k-x_l)} - \frac{h(h+1)(h+5)}{16} \sum_{l \neq i,j,j+1} \frac{q_l+3q_i}{(x_l-x_i)^3} \\
\nonumber & + \frac{3h(h+1)^2}{8} \sum_{l \neq i,j,j+1} \sum_{k \neq l,i,j,j+1}  \frac{q_l}{(x_i-x_l)(x_k-x_l)^2} = 0,
\end{align}
where $\mathcal C$ is a certain differential operator. Note that all terms not proportional to $f_0$ and $f_2$ correspond to the differential operator $\mathcal D_{\varsigma,c}^{(i)} f = 0$ in \eqref{defoperatorBPZ} where all terms containing $x_j$ and $x_{j+1}$ have been removed. The remaining part of the proof then consists of showing that the terms $\mathcal C f_0 - \frac{9(h+1)(h-1)f_2}{2(\hat x_j-x_m)}$ contribute to such missing terms. This is done by excluding the terms corresponding to $i=m$ in the sums in \eqref{relationf2f0}, substituting the result into \eqref{eqc2m}, and simplifying all the terms. 


\end{proof}

\begin{remark}
    The indicial equation \eqref{indicialeq} is automatically satisfied because a suitable exponent has been chosen in the series expansion \eqref{frobeniusansatz}. More generally, if the exponent $-h$ is replaced by some unknown $X$, then the indicial equation reads
    \begin{equation}
        -X(X-1)(X-2) + \frac{3X(h+1)^2}{4} - \frac{h(h+1)(h+5)}{4} = 0.
    \end{equation}
    This equation has solutions $X_1 = -h$, $X_2 = \frac{1+h}2$ and $X_3 = \frac{5+h}2$. 
\end{remark}

The next Lemma gives a definition for the conformal block functions $\mathcal U_T(x_1,\dots,x_d)$ that is equivalent to Definition \ref{defconformalblocks}. This will enable us to apply Lemma \ref{lemmafusion} recursively to prove \eqref{W3BPZ} for any $\varsigma$.

\begin{lemma} \label{lemmasecondrepCBs}
    The conformal block functions $\mathcal U_T(x_1,\dots,x_d)$ of Definition \ref{defconformalblocks} possess the following equivalent definition:
\begin{equation} \label{secondrepresentationconformalblocks}
    \mathcal U_T(x_1,\dots,x_d) = \left[\lb\prod_{k=1}^d \prod_{p_k \leq i < j < p_{k+1}} (x_j-x_i)^\frac13\rb \mathcal U_{\tilde T}(x_1,\dots,x_n)\right]_\text{eval},
\end{equation}
where eval is the assignment map defined in Definition \ref{remarkeval}, and where $\tilde T$ is defined from $T$ as in Definition \ref{defTtilde}. 
\end{lemma}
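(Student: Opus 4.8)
The plan is to compute the right-hand side of \eqref{secondrepresentationconformalblocks} explicitly from Definition \ref{defconformalblocks} and to split the verification into an identity for the scalar normalization and an identity for the Specht polynomial. We work with a basis element $T \in \text{RSYT}^\pi_\varsigma$, so that its transpose $N := T^t$ is column-strict, $N \in \text{CSYT}^{\bar\pi}_\varsigma$; by the remark closing Definition \ref{defTtilde} one has $\tilde T^t = \widetilde{T^t} = \tilde N$. Specializing Definition \ref{defconformalblocks} to content $(1^n)$ gives $\mathcal U_{\tilde T} = \bigl(\prod_{1\le a<b\le n}(x_b-x_a)^{-1/3}\bigr)\mathcal P_{\tilde N}$. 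Hence, before applying $[\,\cdot\,]_\text{eval}$, the bracketed expression in \eqref{secondrepresentationconformalblocks} is the product of three factors: the within-block factor $\prod_{k}\prod_{p_k\le a<b<p_{k+1}}(x_b-x_a)^{1/3}$, the global factor $\prod_{1\le a<b\le n}(x_b-x_a)^{-1/3}$, and $\mathcal P_{\tilde N}$.

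First I would simplify the scalar prefactor, combining the two power products \emph{before} evaluating. The exponent $+\tfrac13$ attached to a within-block pair cancels the $-\tfrac13$ coming from the $(1^n)$-normalization, so that only cross-block pairs survive:
\begin{equation*}
\prod_{k=1}^d \prod_{p_k\le a<b<p_{k+1}}(x_b-x_a)^{\frac13}\;\prod_{1\le a<b\le n}(x_b-x_a)^{-\frac13} = \prod_{\substack{1\le a<b\le n\\ a,b\ \text{in distinct blocks}}}(x_b-x_a)^{-\frac13}.
\end{equation*}
Performing this cancellation at the level of formal powers is what avoids the ill-defined $0^{1/3}$ that a naive evaluation of the within-block factor would create. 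The map $[\,\cdot\,]_\text{eval}$ sends every $x_a$ with $p_k\le a<p_{k+1}$ to $x_k$; since block $k$ contains $s_k$ indices and lies entirely to the left of block $l$ for $k<l$, there are exactly $s_ks_l$ cross-block pairs $(a,b)$ with $a$ in block $k$ and $b$ in block $l$. Thus the surviving prefactor evaluates to $\prod_{1\le k<l\le d}(x_l-x_k)^{-s_ks_l/3}$, which is the normalization appearing in \eqref{definitionCB}.

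It then remains to prove $[\mathcal P_{\tilde N}]_\text{eval}=\mathcal P_N$. This is precisely the identity $\mathcal P_M=[\mathcal P_{\tilde M}]_\text{eval}$ established, for every column-strict $M$, in the course of the proof of Proposition \ref{basisSpechtprop}; I would invoke it with $M=N=T^t$. Assembling this with the previous paragraph shows that the bracketed expression in \eqref{secondrepresentationconformalblocks}, after applying $[\,\cdot\,]_\text{eval}$, equals $\bigl(\prod_{1\le k<l\le d}(x_l-x_k)^{-s_ks_l/3}\bigr)\mathcal P_{T^t}=\mathcal U_T$, which is the claimed equality.

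The only delicate point, and the main obstacle, is to be sure that $[\,\cdot\,]_\text{eval}$ does not annihilate $\mathcal P_{\tilde N}$ by collapsing two of its variables to a common value. Here column-strictness of $N=T^t$ is essential: since $\mathcal P_{\tilde N}=\prod_c\Delta(\tilde N_{.,c})$ contains only \emph{intra-column} Vandermonde factors, and each column of $N$ has pairwise distinct entries, the relabelling $N\mapsto\tilde N$ (which keeps the entries of block $k$ inside the interval $\llbracket p_k,p_{k+1}-1\rrbracket$) sends the entries of any single column to \emph{distinct} blocks. Consequently every factor $(x_b-x_a)$ of $\mathcal P_{\tilde N}$ has $a,b$ in distinct blocks, so evaluation produces no vanishing factor and recovers $\mathcal P_N$ exactly, signs included. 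This is the structural reason the identity from Proposition \ref{basisSpechtprop} applies, and why the statement is used for $T\in\text{RSYT}^\pi_\varsigma$; the remaining bookkeeping (the cross-block pair count and the order-preservation of the relabelling within columns) is routine.
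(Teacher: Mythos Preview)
Your proof is correct and follows essentially the same approach as the paper: both split the verification into the scalar prefactor and the Specht polynomial part, invoke $[\mathcal P_{\widetilde{T^t}}]_{\text{eval}}=\mathcal P_{T^t}$ from the proof of Proposition~\ref{basisSpechtprop}, and then check that the ratio of normalizing products yields $\prod_{k<l}(x_l-x_k)^{-s_ks_l/3}$. Your direct cross-block pair count is a slightly cleaner packaging of the paper's case-by-case check on the values of $(s_i,s_j)$, but the underlying argument is the same.
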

\begin{proof}
    Since $\tilde T$ is a numbering (every entry in the tableau $\tilde T$ appear exactly once), we can write
    \begin{equation}
        \mathcal U_T(x_1,\dots,x_d) = \left[R(x_1,\dots,x_n) \mathcal P_{\widetilde{T^t}}(x_1,\dots,x_n)\right]_\text{eval},
    \end{equation}
    with 
    \begin{equation}
      R(x_1,\dots,x_n) = \frac{\lb\prod_{k=1}^d \prod_{p_k \leq i < j < p_{k+1}} (x_j-x_i)^\frac13\rb}{\prod_{1\leq i < j \leq n} (x_j-x_i)^\frac13}.
    \end{equation}
    First, we have $\left[ \mathcal P_{\widetilde{T^t}}(x_1,\dots,x_n) \right]_\text{eval} = \mathcal P_T(x_1,\dots,x_d)$. It then remains to show that $\left[R(x_1,\dots,x_n)\right]_\text{eval}$ is finite and is given by the prefactor in \eqref{defconformalblocks}. This is readily observed by considering the three cases (i) $s_i=1$ and $s_j=1$ for some $i \in \llbracket 1,\dots,d \rrbracket$ with $i\neq j$, (ii) $s_i=1$ and $s_j=2$ or $s_i=2$ and $s_j=1$, and (iii) $s_i=2$ and $s_j=2$.
\end{proof}

\begin{lemma} \label{lemmaanalytic}
    Let $f: \mathcal H_{n+1} \to \mathbb R$ such that $f(x_1,\cdots,x_{n+1}) = P_1(x_1,\cdots,x_{n+1}) P_2(x_1,\cdots,x_{n+1})^{-\frac13}$, where $P_1$ and $P_2$ are polynomials. Then, for any $j = 1,...,n$ there exists $\eta_{j} \in \mathbb Z$ such that
    $$f(x_1,\cdots,x_{n+1}) = (x_{j+1}-x_j)^{\frac{\eta_j}2} \sum_{k=0}^\infty u_k (x_{j+1}-x_j)^k.$$
    Moreover, all coefficients $u_k$ are of the form $P_k Q_k^{-\frac13}$ where $P_k$, $Q_k$ are polynomials in the variables $x_1,\cdots,x_j,x_{j+2},\cdots,x_{n+1}$.
\end{lemma}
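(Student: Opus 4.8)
The plan is to fix $j\in\{1,\dots,n\}$ and expand $f$ as a Frobenius/Puiseux series in the single variable $\epsilon:=x_{j+1}-x_j$, treating $y:=(x_1,\dots,x_j,x_{j+2},\dots,x_{n+1})$ as frozen parameters through the substitution $x_{j+1}=x_j+\epsilon$. Since $P_1$ and $P_2$ are polynomials, after this substitution they become polynomials in $\epsilon$ whose coefficients are polynomials in $y$, say
\[
P_1=\sum_{m=0}^{M_1} a_m(y)\,\epsilon^m,\qquad P_2=\sum_{m=0}^{M_2} b_m(y)\,\epsilon^m .
\]
Let $\ell:=\operatorname{ord}_\epsilon P_2$ be the least index with $b_\ell\not\equiv 0$, and write $P_2=\epsilon^{\ell}\,\tilde P_2$ with $\tilde P_2=b_\ell(y)+\sum_{m\ge 1}b_{\ell+m}(y)\,\epsilon^m$, so that $\tilde P_2|_{\epsilon=0}=b_\ell\not\equiv 0$. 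Then $f=\epsilon^{-\ell/3}\,P_1\,\tilde P_2^{-1/3}$, and the entire fractional-power behaviour in $\epsilon$ has been isolated in the scalar factor $\epsilon^{-\ell/3}$ (the case $P_1\equiv 0$ giving $f\equiv 0$ is trivial).

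The core of the argument is to show that $\tilde P_2^{-1/3}$ expands as an honest power series in $\epsilon$ whose coefficients carry only the \emph{single} common cube root $b_\ell^{-1/3}$. I would write $\tilde P_2^{-1/3}=b_\ell^{-1/3}(1+g)^{-1/3}$ with $g:=\sum_{m\ge 1}(b_{\ell+m}/b_\ell)\,\epsilon^m=O(\epsilon)$, and expand via the binomial series $(1+g)^{-1/3}=\sum_{k\ge 0}\binom{-1/3}{k}g^k$, which converges for $\epsilon$ small on the open set $\{b_\ell\neq 0\}$. Collecting powers of $\epsilon$ gives $(1+g)^{-1/3}=\sum_k e_k(y)\,\epsilon^k$, where each $e_k$ is a polynomial in the $b_{\ell+m}/b_\ell$ and hence rational in $y$ with denominator a power of $b_\ell$. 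Taking the Cauchy product with $P_1$ yields $f=\epsilon^{-\ell/3}\,b_\ell^{-1/3}\sum_{k\ge 0}w_k(y)\,\epsilon^k$, where $w_k=\sum_{m+k'=k}a_m e_{k'}$ is again rational in $y$ with denominator a power of $b_\ell$.

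It then remains to put each coefficient into the prescribed shape and to read off the leading exponent. Writing $w_k=p_k/b_\ell^{\,r_k}$ with $p_k$ a polynomial and $r_k\ge 0$ an integer, I set $u_k:=w_k\,b_\ell^{-1/3}=p_k\,b_\ell^{-r_k-1/3}=p_k\,\bigl(b_\ell^{\,3r_k+1}\bigr)^{-1/3}$, so that $u_k=P_kQ_k^{-1/3}$ with $P_k:=p_k$ and $Q_k:=b_\ell^{\,3r_k+1}$ both polynomials in $y$, exactly as required. The surviving prefactor is $\epsilon^{-\ell/3}$; since $\ell\in\mathbb{Z}$ this is of the stated form $(x_{j+1}-x_j)^{\eta_j/3}$ with $\eta_j:=-\ell\in\mathbb{Z}$ (the leading exponent thus lying in $\tfrac13\mathbb{Z}$), and the successive exponents $-\ell/3+k$ form the claimed arithmetic progression of step $1$. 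The main obstacle is not the mere existence of a Puiseux expansion, which is standard, but the bookkeeping that keeps the answer inside the class $\{P\,Q^{-1/3}\}$: one must verify that no fractional powers beyond the single $b_\ell^{-1/3}$ are generated, justify interchanging the binomial expansion with the Cauchy product, and control convergence on the locus $\{b_\ell\neq 0\}$ where the construction is valid.
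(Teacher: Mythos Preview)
Your argument is correct and more explicit than the paper's. In particular, you correctly obtain a leading exponent in $\tfrac13\mathbb Z$, which exposes a typo in the stated lemma: the prefactor should read $(x_{j+1}-x_j)^{\eta_j/3}$ rather than $(x_{j+1}-x_j)^{\eta_j/2}$. Indeed, already $P_1=1$, $P_2=x_{j+1}-x_j$ gives leading exponent $-1/3\notin\tfrac12\mathbb Z$, so the lemma as literally written is false; your version is the intended one.

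The paper's proof takes a slightly different but equivalent route. Rather than expanding via the binomial series, it factors out the leading power to write $f=(x_{j+1}-x_j)^{\eta_j/3}g$ with $g$ real-analytic at $x_{j+1}=x_j$, and then extracts $u_k$ as the Taylor coefficient $u_k=\tfrac{1}{k!}\bigl[\partial_{x_{j+1}}^k g\bigr]_{x_{j+1}=x_j}$. Since $g=P_1\tilde P_2^{-1/3}$ with $\tilde P_2$ a polynomial nonvanishing at $\epsilon=0$, repeated differentiation produces sums of terms (polynomial)$\cdot\tilde P_2^{-1/3-m}$, and evaluation at $\epsilon=0$ yields the form $P_kQ_k^{-1/3}$. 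Your binomial expansion carries out exactly this computation explicitly and has the advantage of identifying $Q_k$ concretely as a power of $b_\ell=\tilde P_2|_{\epsilon=0}$, whereas the paper leaves this implicit. Both arguments are valid only on the open dense set $\{b_\ell\neq 0\}$, a restriction neither you nor the paper makes explicit but which is harmless for the subsequent application.
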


\begin{proof}
The first claim readily follows since for any $j = 1,..., n$ there exists $\eta_{j} \in \mathbb Z$ such that $f(x_1,\cdots,x_{n+1}) = (x_{j+1}-x_j)^{\frac{\eta_j}2} g(x_1,\cdots,x_{n+1})$, where $g$ is (real) analytic at $x_{j+1}=x_j$. As for the second claim, we have
$$u_k = \frac{1}{k!} \left[\partial_{x_{j+1}}^k \lb(x_{j+1}-x_j)^{-\frac{\eta_j}2}  f(x_1,\cdots,x_{n+1})\rb \right]_{x_{j+1}=x_j}.$$
Thus any $u_k$ is of the form $P_k Q_k^{-\frac13}$ where $P$ and $Q$ are polynomials in the variables $x_1,\cdots,x_j,x_{j+2},\cdots,x_{n+1}$. This completes the proof. 
\end{proof}

We are finally ready to prove Proposition \ref{propeqW3forUT}.

\begin{proof}[Proof of Proposition \ref{propeqW3forUT}]
    Let $T \in \RSYT$ and $\tilde T \in \NB$ be its associated numbering according to Definition \ref{defTtilde}. We start from the case $\varsigma = (1^n)$ and proceed by induction. By Lemma \ref{lemmaW3BPZvarsigmaequals1}, we have that $\mathcal U_{\tilde T}(x_1,\dots,x_n)$ satisfies
    $$\mathcal D^{(j)}_{(1^n),2} \; \mathcal U_{\tilde T}(x_1,\cdots,x_n) = 0, \qquad j=1,\cdots,n.$$
    Now, let $m \in \llbracket 2,d \rrbracket$ as well as $\varsigma_{m-1}=(s_1,\dots,s_{m-1},1,\dots,1)$ such that $|\varsigma_{m-1}|=n$, and suppose that the function $g_{m-1}$ defined by
    \begin{equation}
        g_{m-1} = \lim\limits_{\substack{x_{p_{m-1}}\to x_{m-1}\\x_{p_{m-1}+s_{m-1}-1}\to x_{m-1}}} \cdots \lim\limits_{x_{1+s_{1}-1} \to x_1} \left[\lb\prod_{k=1}^{m-1} \prod_{p_k \leq i < j < p_{k+1}} (x_j-x_i)^\frac13\rb \mathcal U_{\tilde T}(x_1,\cdots,x_n)\right]
    \end{equation}
    satisfies
    $$\mathcal D^{(j)}_{\varsigma_{m-1},2} \; g_{m-1} = 0, \qquad j=1,\dots,m-1,p_m,\dots,n.$$
   It then follows from Lemma \ref{lemmafusion} that the function $g_m$ satisfies
    $$\mathcal D^{(j)}_{\varsigma_m,2} \; g_m = 0, \qquad j=1,\dots,m,p_{m+1},\dots,n.$$
    Note that we can apply Lemma \ref{lemmafusion} here thanks to Lemma \ref{lemmaanalytic}, since $\mathcal U_{\tilde T}(x_1,\cdots,x_n)$ is of the form $P_1 P_2^{-\frac13}$ where $P_1$ and $P_2$ are polynomials. By induction, we conclude that the function 
    $$g_d = \mathcal U_T(x_1,\dots,x_d)$$
    satisfies
    $$\mathcal D^{(j)}_{\varsigma,2} \; g_d = 0, \qquad j=1,\dots,d,$$
    which is \eqref{eqW3BPZforUT}. This completes the proof.
\end{proof}

\subsection{Proof of \eqref{covariance}}

We now turn to the covariance properties satisfied by $\mathcal U_T$.

\begin{proposition} \label{propcovariance}
Let $\varphi(x) = \frac{ax+b}{cx+d}$ with $ad-bc \neq 0$ and such that $\varphi(x_1) < \varphi(x_2) < \cdots < \varphi(x_n)$. Moreover, let $T \in \textup{RSYT}^{\pi}_\varsigma$. Then we have
\begin{equation} \label{covarianceCB} 
     \mathcal U_T(\varphi(x_1),\cdots ,\varphi(x_d)) = \lb \prod_{i=1}^d \varphi'(x_i)^{-\frac13} \rb \mathcal U_T(x_1,\cdots ,x_d).
\end{equation}
\end{proposition}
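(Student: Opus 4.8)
The plan is to establish the finite covariance \eqref{covarianceCB} directly from the factorized form \eqref{definitionCB}, by tracking how each elementary factor $(x_a-x_b)$ transforms under $\varphi$. The one computational input is the elementary Möbius identity: for $\varphi(x)=\frac{ax+b}{cx+d}$ with $ad-bc\neq 0$ one has $\varphi'(x)=\frac{ad-bc}{(cx+d)^2}$ and
\[
\varphi(x_a)-\varphi(x_b)=\frac{(ad-bc)(x_a-x_b)}{(cx_a+d)(cx_b+d)}.
\]
First I would note that the hypotheses $x_1<\cdots<x_d$ and $\varphi(x_1)<\cdots<\varphi(x_d)$ force $\varphi$ to be increasing and pole-free on $[x_1,x_d]$; hence $ad-bc>0$ and all the numbers $cx_i+d$ share the same sign. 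Consequently $\varphi'(x_i)>0$ for every $i$, the real roots $\varphi'(x_i)^{1/2}$ are well defined, and the identity above rewrites, for \emph{every} pair $a,b$, as
\[
\varphi(x_a)-\varphi(x_b)=(x_a-x_b)\,\varphi'(x_a)^{1/2}\,\varphi'(x_b)^{1/2}.
\]

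Next I would apply this to the two factors of \eqref{definitionCB} in turn. Since every factor of $\mathcal U_T$ is a power of a difference $(x_a-x_b)$, the substitution $x\mapsto\varphi(x)$ reproduces $\mathcal U_T(x)$ times a monomial in the $\varphi'(x_k)$, and it suffices to read off the exponent of each $\varphi'(x_k)$. The prefactor $\prod_{i<j}(x_j-x_i)^{-s_is_j/3}$ contributes $-\tfrac{1}{6}\sum_{b\neq k}s_ks_b=-\tfrac{s_k}{6}(n-s_k)$, using $\sum_b s_b=n$. For the Specht factor, each of the three columns of $T^t$ is a Vandermonde in $n/3$ variables, in which every variable present occurs in $n/3-1$ factors; since the value $k$ occupies exactly $s_k$ distinct columns of $T^t$ (it cannot repeat within a column without annihilating $\mathcal P_{T^t}$, and $\mathcal P_{T^t}\neq 0$ for $T\in\mathrm{RSYT}^\pi_\varsigma$), the Specht polynomial contributes $\tfrac{s_k}{2}\big(\tfrac{n}{3}-1\big)$ to the exponent of $\varphi'(x_k)$.

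The decisive step is then the arithmetic: adding the two contributions gives, for each $k$,
\[
-\frac{s_k}{6}(n-s_k)+\frac{s_k}{2}\Big(\frac{n}{3}-1\Big)=\frac{s_k^2-3s_k}{6}=\frac{s_k(s_k-3)}{6},
\]
which equals $-\tfrac13=-h$ precisely because $s_k\in\{1,2\}$, uniformly in $k$ and independently of $n$. This yields $\mathcal U_T(\varphi(x_1),\dots,\varphi(x_d))=\prod_{k=1}^d\varphi'(x_k)^{-1/3}\,\mathcal U_T(x_1,\dots,x_d)$, which is exactly \eqref{covarianceCB}. I expect the only genuinely delicate point to be the sign bookkeeping in the first step, namely confirming that the ordering hypothesis really does make each $\varphi'(x_i)$ positive so that a single choice of real square roots works and no spurious signs leak into the Vandermondes; everything afterwards is elementary degree counting. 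It is worth emphasizing that the cancellation $\tfrac{s_k(s_k-3)}{6}=-\tfrac13$ is the whole reason for the exponent $-s_is_j/3$ in \eqref{definitionCB}: it is engineered so that the prefactor promotes each of the $d$ marked points to weight $h=1/3$ while the Specht polynomial carries only weight-zero combinatorial data. As a consistency check one could instead verify the infinitesimal generators \eqref{globalward1}–\eqref{globalward3}, which by \cite{DMS} are equivalent to \eqref{covariance}, but the finite computation above is the more transparent route.
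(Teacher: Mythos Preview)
Your proof is correct and follows essentially the same approach as the paper's: both use the identity $\varphi(x_a)-\varphi(x_b)=(x_a-x_b)\,\varphi'(x_a)^{1/2}\varphi'(x_b)^{1/2}$, count the exponent of $\varphi'(x_k)$ contributed by each factor, invoke the key fact that $k$ sits in exactly $s_k$ columns of $T^t$ (equivalently, the paper's identity \eqref{identitysumpsi}), and arrive at $\tfrac{s_k(s_k-3)}{6}=-\tfrac13$. The only cosmetic difference is that the paper packages the two factors into a single exponent $\alpha_T(i,j)=\psi_{T^t}(i,j)-\tfrac{s_is_j}{3}$, whereas you treat the prefactor and the Specht polynomial separately; you are also slightly more careful in justifying the positivity of $\varphi'(x_i)$ from the ordering hypothesis, which the paper simply imports from \cite[Lemma~4.7]{KP2}. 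One small remark: your justification that $k$ occupies $s_k$ \emph{distinct} columns of $T^t$ via non-vanishing of $\mathcal P_{T^t}$ is valid but indirect---the immediate reason is that $T$ is row-strict, so the $s_k$ copies of $k$ lie in distinct rows of $T$, hence distinct columns of $T^t$.
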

\begin{proof}
    We rewrite the representation \eqref{definitionCB} in a similar way as in \eqref{dePTfphi}:
    \begin{equation} \label{representationUTalpha}
         \mathcal U_T(x_1,\cdots,x_d) = \lb \prod_{1\leq i<j\leq d} (x_j-x_i)\rb ^{\alpha_T(i,j)},
    \end{equation}
where 
\begin{equation} \label{defalphaT}
   \alpha_T(i,j) = \psi_{T^t}(i,j)-\frac{s_i s_j}{3},
\end{equation}
and where $\psi_{T^t}(i,j)$ is the number of (unordered) pairs of entries $\{i,j\}$, $i\neq j$, of $T^t$ appearing in the same column. Thus we have $\psi_{T^t}(j,i)=\psi_{T^t}(i,j)$ and $\psi_{T^t}(i,i)=0$. Utilizing the crucial identity \cite[Lemma 4.7]{KP2}
\begin{equation}
    \varphi(x)-\varphi(y) = (x-y) \sqrt{\varphi'(x) \varphi'(y)},
\end{equation}
we compute
\begin{align*}
    \mathcal U_T(\varphi(x_1),\cdots,\varphi(x_d)) & = \lb \prod_{1\leq i<j\leq d} (\varphi'(x_j) \varphi'(x_i)) ^{\frac{\psi_{T^t}(i,j)}2-\frac{s_i s_j}{6}}\rb \mathcal U_T(x_1,\cdots,x_d)  \\
    & =  \lb \prod_{j=1}^d \prod_{i=1}^{j+1} \varphi'(x_j)^{\frac{\psi_{T^t}(i,j)}2-\frac{s_i s_j}{6}}\rb \lb \prod_{i=1}^d \prod_{j=i+1}^d \varphi'(x_i)^{\frac{\psi_{T^t}(i,j)}2-\frac{s_i s_j}{6}}\rb \mathcal U_T(x_1,\cdots,x_d) \\
    & = \lb \prod_{j=1}^d \varphi'(x_j)^{\sum_{i=1}^{j+1}\lb\frac{\psi_{T^t}(i,j)}2-\frac{s_i s_j}{6} \rb}\rb \lb \prod_{i=1}^d \varphi'(x_i)^{\sum_{j=i+1}^d\lb\frac{\psi_{T^t}(i,j)}2-\frac{s_i s_j}{6}\rb}\rb \mathcal U_T(x_1,\cdots,x_d).
\end{align*}
Recalling that $\sum_{i=1}^d s_i = n$, we then obtain
\begin{align*}
    \mathcal U_T(\varphi(x_1),\cdots,\varphi(x_d)) & = \lb \prod_{j=1}^d \varphi'(x_j)^{\frac{\sum_{i=1}^d \psi_{T^t}(i,j)}2-\frac{s_j(n-s_j)}{6}}\rb \mathcal U_T(x_1,\cdots,x_d).
    \end{align*}
We now have the key identity
\begin{equation} \label{identitysumpsi}
    \sum_{i=1}^d \psi_{T^t}(i,j) = s_j\lb \frac{n}3-1 \rb,
\end{equation}
since $T^t \in \text{CSYT}^{\Bar{\pi}}_\varsigma$, and there are $\frac{n}3-1$ linear factors containing $x_j$ in each column of $T^t$ containing the entry $j$. We conclude that
\begin{align*}
    \mathcal U_T(\varphi(x_1),\cdots,\varphi(x_d)) & = \lb \prod_{j=1}^d \varphi'(x_j)^{\frac{s_j(s_j-3)}6}\rb \mathcal U_T(x_1,\cdots,x_d),
    \end{align*}
and for each $s_j=1$ and $s_j=2$, we have $\frac{s_j(s_j-3)}6 = -\frac13$. This completes the proof.
\end{proof}

\begin{remark}
    The identity \eqref{identitysumpsi}, which holds only if $T$ has rectangular shape, is the reason why Proposition \ref{propcovariance} holds only for $T$ having a rectangular shape. If $T$ does not have a rectangular shape, \eqref{globalward1} still holds, however, we expect that \eqref{globalward2} is modified and that \eqref{globalward3} does not hold at all.
\end{remark}

\subsection{Proof of \eqref{ward1}-\eqref{ward5}}

\begin{proposition} \label{propW3WardUT}
    Let $T \in \textup{CSYT}^{\pi}_\varsigma$. We have
        \begin{align} \label{eqMmUT}
    \mathcal M^{(m)}_{\varsigma,c} \; \mathcal U_T(x_1,\cdots ,x_d) = 0, \qquad m=1,2,3,4,5,
\end{align}
where $\mathcal M^{(m)}_{\varsigma,c}$ is the differential operator defined in \eqref{defM}.
\end{proposition}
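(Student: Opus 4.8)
The plan is to prove the five identities simultaneously by assembling the operators $\mathcal M^{(m)}_{\varsigma,c}$ into a single generating function in an auxiliary variable $z$. Setting $h=\tfrac13$ (the value forced by $c=2$ via \eqref{relationcandbeta}), define
\begin{align*}
\Phi(z):=\Bigg[ & \sum_{i=1}^d \frac{q_i\,\partial_{x_i}^2}{z-x_i}
-\frac{h+1}{2}\sum_{i=1}^d\sum_{j\neq i}\frac{q_i}{z-x_i}\Big(\frac{h}{(x_j-x_i)^2}-\frac{\partial_{x_j}}{x_j-x_i}\Big) \\
& +\frac{5h+1}{8}\sum_{i=1}^d\frac{q_i\,\partial_{x_i}}{(z-x_i)^2}
+\frac{h(5h+1)}{12}\sum_{i=1}^d\frac{q_i}{(z-x_i)^3}\Bigg]\mathcal U_T .
\end{align*}
Expanding $(z-x_i)^{-k}=\sum_{m\geq k}\binom{m-1}{k-1}x_i^{m-k}z^{-m}$ at $z=\infty$ and comparing with \eqref{defM}, one checks termwise that the coefficient of $z^{-m}$ in $\Phi(z)$ is exactly $\mathcal M^{(m)}_{\varsigma,c}\mathcal U_T$ for every $m\geq 1$. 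Hence \eqref{ward1}-\eqref{ward5} are together equivalent to the single analytic statement $\Phi(z)=O(z^{-6})$ as $z\to\infty$.

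The second step is to show that $\Phi(z)$ transforms as a weight-$3$ field in $z$. I would use the product form $\mathcal U_T=\prod_{i<j}(x_j-x_i)^{\alpha_{ij}}$ of \eqref{representationUTalpha}-\eqref{defalphaT} and the logarithmic derivative $\partial_{x_i}\mathcal U_T=\mathcal U_T\sum_{k\neq i}\alpha_{ik}/(x_i-x_k)$ (with $\alpha_{ik}:=\alpha_{ki}$), so that the coefficient of each pole $(z-x_i)^{-p}$ in $\Phi$ is $\mathcal U_T$ times a translation-invariant rational function of the expected homogeneity. A short bookkeeping then gives $\big(\partial_z+\sum_k\partial_{x_k}\big)\Phi=0$ and $\big(z\partial_z+3+\sum_k(x_k\partial_{x_k}+h)\big)\Phi=0$, which are immediate from \eqref{globalward1} and \eqref{globalward2} since differentiation lowers the scaling degree by the expected amount. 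The substantive identity is the special-conformal one,
\[
L_1^{\mathrm{tot}}\Phi:=\Big(z^2\partial_z+6z+\sum_{k=1}^d\big(x_k^2\partial_{x_k}+2hx_k\big)\Big)\Phi=0 .
\]
Granting it, the decay follows at once: writing $\Phi=\sum_{m\geq m_0}g_m z^{-m}$ with $g_{m_0}\neq 0$, the top-order term of $L_1^{\mathrm{tot}}\Phi$ is $(6-m_0)g_{m_0}z^{-m_0+1}$, forcing $m_0\geq 6$, i.e. $\Phi=O(z^{-6})$, as required.

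The main obstacle is therefore the verification of $L_1^{\mathrm{tot}}\Phi=0$. In contrast to the translation and dilation identities, this does not follow from Möbius covariance of $\mathcal U_T$ alone: it is precisely where the value $c=2$ and the rectangular three-column shape of $T^t$ enter. Concretely, I would expand $\Phi$ into its partial fractions in $z$, apply $L_1^{\mathrm{tot}}$, and collect the coefficient of each $(z-x_i)^{-p}$. The highest poles $p=4,3,2$ should cancel automatically, reflecting that $\Phi$ is built as a genuine spin-$3$ object, while the residual $(z-x_i)^{-1}$ terms reduce, after inserting the special-conformal covariance \eqref{globalward3} of $\mathcal U_T$, to polynomial identities in the exponents $\alpha_{ij}=\psi_{T^t}(i,j)-s_is_j/3$.

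These identities I expect to settle exactly as in Proposition \ref{propcovariance} and Lemma \ref{propeqspecht3columns}: the linear identities follow from the key relation \eqref{identitysumpsi}, namely $\sum_i\psi_{T^t}(i,j)=s_j(n/3-1)$, while the quadratic ones, produced by the $\big(\sum_k\alpha_{ik}/(x_i-x_k)\big)^2$ contributions coming from $\partial_{x_i}^2\mathcal U_T$, reduce to a finite case analysis on the relative positions, among the three columns of $T^t$, of the entries involved, the value $h=\tfrac13$ being what makes the numerical coefficients collapse. Since $\mathcal M^{(m)}_{\varsigma,c}$ is linear and the $\mathcal U_T$ span $\mathcal C^{(c=2)}_\varsigma$, it is enough to run this argument on a basis, and the covariance computation of Proposition \ref{propcovariance} applies verbatim to any filling of rectangular shape $\pi$. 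As a fallback, the same partial-fraction and case-analysis bookkeeping proves each of \eqref{ward1}-\eqref{ward5} directly; the generating function merely organizes the five computations into one and explains why precisely five such identities, and no more, hold.
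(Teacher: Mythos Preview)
Your generating-function approach is genuinely different from the paper's and is conceptually attractive: packaging the five operators into the single rational function $\Phi(z)$ and reducing \eqref{ward1}--\eqref{ward5} to $\Phi(z)=O(z^{-6})$ via M\"obius covariance in $z$ is exactly the CFT reasoning behind \eqref{otherformward1}--\eqref{otherformward5} in the Appendix, and it explains \emph{why} there are precisely five such identities. The paper instead verifies each $m$ separately: it writes $\mathcal M^{(m)}\mathcal U_T/\mathcal U_T$ as a combination of explicit functions $A^{(m)},B^{(m)},C^{(m)}$ of the exponents $\alpha_T(i,j)$, and kills each $m$ by symmetrization arguments (Lemma~\ref{lemmasmartidentities}) together with three nontrivial sum identities (Lemma~\ref{lemmasumidentities}) for $m=3,4,5$. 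What you gain is unification and a structural explanation; what the paper gains is a fully executed argument with the exact algebraic identities isolated.

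That said, your sketch has two gaps worth naming. First, the verification of $L_1^{\mathrm{tot}}\Phi=0$ is where all the content lies. The poles of order $4,3,2$ at each $x_i$ do cancel from M\"obius covariance of $\mathcal U_T$ alone, as you predict, but the residue at $(z-x_i)^{-1}$ works out to an expression of the type $4b_i+4x_ic_i+\sum_k(x_k^2\partial_{x_k}+2hx_k)c_i$, and once you insert the explicit $c_i$ and the product form \eqref{representationUTalpha} this becomes a quadratic identity in the $\alpha_T(i,j)$ that is \emph{not} settled by \eqref{identitysumpsi} alone: it needs precisely the relation $\alpha_T(i,j)^2=\tfrac{2}{9}+\tfrac{q_iq_j}{3}\alpha_T(i,j)$ (the paper's \eqref{identityalphasquares}) and a case analysis of the same flavour as Lemma~\ref{lemmasumidentities}. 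So you do not avoid the combinatorial core of the paper's proof; you compress the three hard cases $m=3,4,5$ into one residue computation. Second, a minor circularity: vanishing of all poles only shows that $L_1^{\mathrm{tot}}\Phi$ is a constant, and that constant equals $5\sum_i c_i=5\,\mathcal M^{(1)}_{\varsigma,c}\mathcal U_T$. Hence you must verify the $m=1$ identity independently (it is the easy antisymmetry case) before your leading-term argument can force $m_0\geq 6$.
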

For the proof of Proposition \ref{propW3WardUT} we need the following two Lemmas:
\begin{lemma} \label{lemmasmartidentities}
Let $\alpha_T(i,j)$ be defined in \eqref{defalphaT}. Moreover, define the functions (see the proof of Proposition \ref{propW3WardUT} for the choice of notations)
  \begin{align}
   \label{defAm} & A^{(m)}(i,j,k) = q_j x_j^{m-1} \alpha_T(i,j)\alpha_T(j,k) + \frac13 q_i x_i^{m-1} \alpha_T(j,k) + \frac13 q_k x_k^{m-1} \alpha_T(i,j), \quad m=1,\dots,5, \\
   \label{defBm}  & B^{(m)}(i,j) = q_j x_j^{m-1} \lb \alpha_T(i,j)^2 - \frac13 \alpha_T(i,j) - \frac29 \rb + \frac{m-1}3 q_j x_j^{m-2} \alpha_T(i,j) (x_j-x_i), \quad m=1,\dots,5, \\
  \label{defY} & Y(i,j,k) = \frac{q_j}3\lb\alpha_T(i,j) \alpha_T(j,k) - \frac13 \alpha_T(i,k)\rb, \\
    & \hat B^{(3)}(i,j) = B^{(3)}(i,j) - \frac13 \alpha_T(i,j) q_j (x_j-x_i)^2, \\
    & \hat B^{(4)}(i,j) = B^{(4)}(i,j) - \frac13 \alpha_T(i,j)x_j (q_i+2q_j) (x_j-x_i)^2, \\
    & \hat B^{(5)}(i,j) = B^{(5)}(i,j) - \frac13 \alpha_T(i,j) xj(x_3+\frac{x_i}2)(q_i+3q_j) (x_j-x_i)^2.
  \end{align}
  The following list of identities holds:
  \begin{align}
     \label{identityalphasquares} & \alpha_T(i,j)^2 = \frac29 + \frac{q_i q_j}3 \alpha_T(i,j).\\
    \label{B12antisymij}  & B^{(m)}(i,j) + B^{(m)}(j,i) = 0, \qquad m=1,2, \\
    \label{hatB345antisymij} & \hat B^{(m)}(i,j) + \hat B^{(m)}(j,i) = 0, \qquad m=3,4,5, \\
       \label{A1simmetricij} & A^{(1)}(i,j,k) - A^{(1)}(j,i,k) = 0, \\
    \label{Ysymij} & Y(i,j,k) - Y(j,i,k) = 0, 
  \end{align}
\end{lemma}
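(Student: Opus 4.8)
The plan is to regard \eqref{identityalphasquares} as the master identity and to deduce the remaining four relations from it by elementary algebra, using only the symmetry $\alpha_T(i,j)=\alpha_T(j,i)$ (immediate from \eqref{defalphaT}) together with $q_v^2=1$.

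\textbf{Step 1 (the quadratic identity \eqref{identityalphasquares}).} First I would observe that the factorised representation \eqref{representationUTalpha} defining $\alpha_T$ is only meaningful when $\mathcal U_T\neq 0$, which forces every column of $T^t$ to have pairwise distinct entries (a repeated value in a column produces a vanishing Vandermonde factor in $\mathcal P_{T^t}$). Under this hypothesis $\psi_{T^t}(i,j)$ is the number of columns of $T^t$ containing both $i$ and $j$, i.e. $|\mathrm{cols}(i)\cap\mathrm{cols}(j)|$, where $\mathrm{cols}(v)\subseteq\{1,2,3\}$ and $|\mathrm{cols}(v)|=s_v$ since $T^t$ has exactly three columns. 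Inclusion--exclusion then traps $\psi_{T^t}(i,j)$ in the range $\max(0,s_i+s_j-3)\le \psi_{T^t}(i,j)\le \min(s_i,s_j)$. Splitting into the three cases $(s_i,s_j)\in\{(1,1),(1,2),(2,2)\}$ --- equivalently $q_iq_j=+1$ or $-1$ --- one checks in each case that $\alpha_T(i,j)$ takes exactly the two admissible values, and that these are precisely the two roots of $X^2-\tfrac{q_iq_j}{3}X-\tfrac29$. This is \eqref{identityalphasquares}.

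\textbf{Step 2 (identities \eqref{B12antisymij} and \eqref{hatB345antisymij}).} Using \eqref{identityalphasquares} I would replace the quantity $\alpha_T(i,j)^2-\tfrac13\alpha_T(i,j)-\tfrac29$ occurring in \eqref{defBm} by $\tfrac13(q_iq_j-1)\alpha_T(i,j)$. With $q_j^2=1$ this collapses $B^{(m)}$ (for $m=1,2$) and the corrected $\hat B^{(m)}$ (for $m=3,4,5$) to polynomials in $x_i,x_j$ with prefactor proportional to $\alpha_T(i,j)$ that are manifestly antisymmetric under $i\leftrightarrow j$; for instance $B^{(1)}(i,j)=\tfrac13(q_i-q_j)\alpha_T(i,j)$, $B^{(2)}(i,j)=\tfrac13\alpha_T(i,j)(q_ix_j-q_jx_i)$ and $\hat B^{(3)}(i,j)=\tfrac13\alpha_T(i,j)(q_ix_j^2-q_jx_i^2)$. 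The whole point of the corrections in the definitions of $\hat B^{(m)}$ is exactly to cancel the symmetric $(x_j-x_i)^2$-remainder left after substituting \eqref{identityalphasquares}. Since $\alpha_T(i,j)=\alpha_T(j,i)$, adding the $i\leftrightarrow j$ transpose gives $0$, which is \eqref{B12antisymij} and \eqref{hatB345antisymij}.

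\textbf{Step 3 (identities \eqref{A1simmetricij} and \eqref{Ysymij}).} A short computation gives $Y(i,j,k)-Y(j,i,k)=\tfrac13\bigl(A^{(1)}(i,j,k)-A^{(1)}(j,i,k)\bigr)$, so the two statements are equivalent, and both reduce to
\[
q_i\Bigl(\tfrac13\alpha_T(j,k)-\alpha_T(i,j)\alpha_T(i,k)\Bigr)+q_j\Bigl(\alpha_T(i,j)\alpha_T(j,k)-\tfrac13\alpha_T(i,k)\Bigr)=0.
\]
Here I would bring in the quadratic-form description of $\alpha_T$: writing $\chi_v\in\{0,1\}^3$ for the indicator of $\mathrm{cols}(v)$ and $P=I-\tfrac13\mathbf{1}\mathbf{1}^{T}$ for the orthogonal projection onto $\mathbf{1}^{\perp}\subset\mathbb{R}^3$, one has $\alpha_T(i,j)=(P\chi_i)\cdot(P\chi_j)$ and $P\chi_v=q_v u_v$, where $u_v\in\{e_1-\tfrac13\mathbf{1},\,e_2-\tfrac13\mathbf{1},\,e_3-\tfrac13\mathbf{1}\}$ is one of the three fundamental weights of $\mathfrak{sl}_3$, each with $|u_v|^2=\tfrac23$ and pairwise inner product $-\tfrac13$. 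Substituting and pulling out the common factor $q_iq_jq_k$, the left-hand side factors as $(u_i\cdot u_j+\tfrac13)(u_j\cdot u_k-u_i\cdot u_k)$, which vanishes in every configuration: if $u_i=u_j$ the second factor is $0$, and if $u_i\neq u_j$ the first factor is $0$. This proves \eqref{A1simmetricij} and \eqref{Ysymij}.

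\textbf{Main obstacle.} The genuinely structural points are (i) recognising that \eqref{identityalphasquares} holds only in the non-vanishing regime and that it is the \emph{three}-column constraint (through inclusion--exclusion) which pins $\psi_{T^t}(i,j)$ down to the two roots of the quadratic, and (ii) finding the weight / quadratic-form presentation of $\alpha_T$ that linearises the three-index relation \eqref{A1simmetricij}. Once these are in place everything is either a finite case check or a single factorisation, and the simplifications leading to \eqref{B12antisymij}--\eqref{hatB345antisymij} are routine, if tedious.
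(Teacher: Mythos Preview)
Your argument is correct. Steps 1 and 2 are essentially what the paper does: the paper proves \eqref{identityalphasquares} by the same three-way case split on $(s_i,s_j)$, and then says that \eqref{B12antisymij} and \eqref{hatB345antisymij} follow by ``direct computations utilizing \eqref{identityalphasquares}'', which is exactly your substitution $\alpha_T(i,j)^2-\tfrac13\alpha_T(i,j)-\tfrac29=\tfrac13(q_iq_j-1)\alpha_T(i,j)$ followed by an antisymmetry check.

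Step 3 is where you genuinely diverge. The paper proves \eqref{A1simmetricij} and \eqref{Ysymij} by a brute-force case analysis over the four possibilities for $(s_i,s_j,s_k)$, each splitting further according to how the column sets overlap. You instead (i) observe that $Y(i,j,k)-Y(j,i,k)=\tfrac13\bigl(A^{(1)}(i,j,k)-A^{(1)}(j,i,k)\bigr)$, collapsing the two statements into one, and (ii) introduce the bilinear-form presentation $\alpha_T(i,j)=(P\chi_i)\cdot(P\chi_j)=q_iq_j\,u_i\cdot u_j$ with $u_v\in\{e_a-\tfrac13\mathbf{1}:a=1,2,3\}$, which turns the three-index identity into the factorised expression $q_iq_jq_k\bigl(u_i\cdot u_j+\tfrac13\bigr)\bigl(u_j\cdot u_k-u_i\cdot u_k\bigr)$. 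This is more conceptual than the paper's enumeration and explains \emph{why} the identity holds (one factor always vanishes because the three vectors $u_v$ form an equiangular frame). The paper's case-check, by contrast, is opaque but entirely self-contained. One small terminological slip: the vectors $e_a-\tfrac13\mathbf{1}$ are the weights of the defining representation of $\mathfrak{sl}_3$, not the fundamental weights; this does not affect the argument. Also, the non-repetition of entries in the columns of $T^t$ that you invoke in Step 1 is automatic from $T\in\mathrm{RSYT}^\pi_\varsigma$ (so $T^t$ is column-strict), rather than from the non-vanishing of $\mathcal U_T$.
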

\begin{proof}
To prove \eqref{identityalphasquares}, we proceed case by case. Up to permutations, there are three cases to consider. 

\textbf{First case}: $s_i=s_j=1$ implying $q_i=q_j=1$. Either $i$ and $j$ are the same column or not. In both cases \eqref{identityalphasquares} follows.

\textbf{Second case}: $s_i=2,\;s_j=1$ implying $q_i=-1,\;=q_j=1$. Either $i$ and $j$ are the same column or not. In both cases \eqref{identityalphasquares} follows.

\textbf{Third case}: $s_i=s_j=2$ implying $q_i=q_j=-1$. Either $i$ and $j$ are the same two columns or not. In both cases \eqref{identityalphasquares} follows.

Moreover, direct computations utilizing \eqref{identityalphasquares} show that \eqref{B12antisymij} for $m=1,2$ and \eqref{hatB345antisymij} for $m=3,4,5$ hold.

Finally, \eqref{A1simmetricij} and \eqref{Ysymij} are proved in the same way as \eqref{identityalphasquares}. Up to permutations, there are four cases to consider. 
    
    \textbf{First case}: $s_i=s_j=s_k=1$ implying $q_i=q_j=q_k=1$. Either $i,j,k$ are in different columns or exactly two of them are in the same column or all of them are in the same column. In every case the identities follow.
    
    \textbf{Second case}: $s_i=2,\;s_j=s_k=1$ implying $q_i=-1,\;q_j=q_k=1$. Either $j,k$ are in the same column which can be containing $i$ or not. Or $j,k$ are not in the same column and they there can be exactly one or two of them in a column containing $i$. In every case the identities follow.
    
    \textbf{Third case}: $s_i=s_j=2,\;s_k=1$ implying $q_i=q_j=-1,\;q_k=1$. Either $i,j$ are in the same two colums and $k$ can be in a column containing $i$ and $j$ or not. Or $i,j$ share only one column and $k$ can be in this column or not. In every case the identities follow.

    \textbf{Fourth case}: $s_i=s_j=s_k=2$ implying $q_i=q_j=q_k=-1$. Either $i,j$ are in the same two columns and $k$ appears once or twice in these two columns. Or $i,j$ share only one column and $k$ appears in this column or not. In every case the identities follow.
\end{proof}
\begin{lemma} \label{lemmasumidentities}
Let $T \in \textup{RSYT}^{\pi}_\varsigma$. The following sum identities hold:
\begin{align}
  \label{sumidentitymeq3} & \sum_{j=1}^d \sum_{i \neq j}^d \sum_{k \neq i,j}^d Y(i,j,k) + \frac13 \sum_{j=1}^d q_j \sum_{i \neq j}^d \alpha_T(i,j) + \frac{2}{27}\sum_{j=1}^d q_j = 0, \\
   \label{sumidentitymeq4}  & 3\sum_{j=1}^d \sum_{i \neq j}^d \sum_{k \neq i,j}^d x_j Y(i,j,k) + \frac13 \sum_{j=1}^d \sum_{i \neq j}^d \alpha_T(i,j)x_j (q_i+2q_j) + \frac29 \sum_{j=1}^d q_j x_j = 0, \\
    \label{sumidentitymeq5} & 3\sum_{j=1}^d \sum_{i \neq j}^d \sum_{k \neq i,j}^d (x_j^2+\frac{x_j}2(x_i+x_k)) Y(i,j,k) + \frac13 \sum_{j=1}^d \sum_{i \neq j}^d \alpha_T(i,j) x_j(x_j+\frac{x_i}2)(q_i+3q_j) + \frac49 \sum_{j=1}^d q_j x_j^2 = 0.
\end{align}
\end{lemma}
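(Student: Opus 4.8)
The plan is to treat each of the three identities \eqref{sumidentitymeq3}, \eqref{sumidentitymeq4} and \eqref{sumidentitymeq5} as a \emph{polynomial} identity in $x_1,\dots,x_d$ and to prove it by checking that the coefficient of every monomial vanishes. The only $x$-dependence appears through the explicit prefactors, while $\alpha_T(i,j)$, $q_i$ and $Y(i,j,k)$ are constants; hence \eqref{sumidentitymeq3} carries no $x$, \eqref{sumidentitymeq4} is homogeneous of degree one, and \eqref{sumidentitymeq5} is homogeneous of degree two. It therefore suffices to verify a single scalar identity for \eqref{sumidentitymeq3}; the coefficient of $x_l$ for each $l$ in \eqref{sumidentitymeq4}; and the coefficients of $x_l^2$ and of $x_lx_m$ ($l\neq m$) in \eqref{sumidentitymeq5}. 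Each of these reduces to a scalar identity in the combinatorial data $\alpha_T$ and $q$.

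First I would record the two structural inputs. The first is the \emph{rectangular row-sum}
\begin{equation}
\sum_{i\neq j}\alpha_T(i,j) = -\tfrac23 \qquad \text{for every } j,
\end{equation}
which follows, exactly as in the proof of Proposition \ref{propcovariance} via \eqref{identitysumpsi}, from $\sum_{i}\psi_{T^t}(i,j)=s_j(\tfrac n3-1)$ and uses crucially that $T$ has rectangular shape $\pi$; indeed $\sum_{i\neq j}\alpha_T(i,j)=\tfrac{s_j(s_j-3)}{3}=-\tfrac23$ for both $s_j=1$ and $s_j=2$. The second input is the square identity \eqref{identityalphasquares}, which gives $\sum_{i\neq j}\alpha_T(i,j)^2=\tfrac29(d-1)+\tfrac{q_j}3\sum_{i\neq j}q_i\alpha_T(i,j)$. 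Splitting $Y$ into its $\alpha_T(i,j)\alpha_T(j,k)$ part and its $-\tfrac13\alpha_T(i,k)$ part and applying the row-sum identity (once inside the first part, twice in the second) together with the square identity then yields the closed form
\begin{equation}
3\sum_{i\neq j}\sum_{k\neq i,j}Y(i,j,k)=\tfrac{2q_j}{9}-\tfrac13\sum_{i\neq j}q_i\,\alpha_T(i,j),
\end{equation}
which is the engine for all of the reductions.

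With these in hand the first three cases are routine. Summing the closed form over $j$ and using $\sum_j\sum_{i\neq j}q_i\alpha_T(i,j)=-\tfrac23\sum_i q_i$ shows the triple sum in \eqref{sumidentitymeq3} equals $\tfrac4{27}\sum_j q_j$; by the row-sum identity the remaining two terms contribute $-\tfrac29\sum_jq_j$ and $\tfrac2{27}\sum_jq_j$, and the three cancel. For \eqref{sumidentitymeq4} the coefficient of $x_l$ is $3\sum_{i\neq l}\sum_{k\neq i,l}Y(i,l,k)+\tfrac13\sum_{i\neq l}\alpha_T(i,l)(q_i+2q_l)+\tfrac29 q_l$; substituting the closed form, the $\sum_i q_i\alpha_T(i,l)$ terms cancel and the surviving multiples of $q_l$ sum to zero. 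The coefficient of $x_l^2$ in \eqref{sumidentitymeq5} is handled identically, with $(q_i+3q_l)$ and $\tfrac49$ in place of $(q_i+2q_l)$ and $\tfrac29$.

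The one genuinely delicate case, which I expect to absorb most of the work, is the coefficient of the off-diagonal monomial $x_lx_m$ in \eqref{sumidentitymeq5}: here the prefactors $\tfrac{x_j}2(x_i+x_k)$ and $\tfrac{x_i}2$ couple two distinct outer indices, so the single-index row-sum no longer applies directly. The plan is to collect the four triple-sum contributions coming from $(j,i)\in\{(l,m),(m,l)\}$ and $(j,k)\in\{(l,m),(m,l)\}$, evaluate each by the same row-sum technique to the common factor $P:=(-\tfrac23-\beta)(\beta-\tfrac13)$ with $\beta:=\alpha_T(l,m)$, and match them against the double-sum contribution $\tfrac23\beta(q_l+q_m)$. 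The symmetry \eqref{Ysymij} keeps the bookkeeping consistent, and the total reduces to $(q_l+q_m)\big(P+\tfrac23\beta\big)$. Finally, replacing $\beta^2$ via \eqref{identityalphasquares} turns $P+\tfrac23\beta$ into $\tfrac{\beta}{3}(1-q_lq_m)$, so the coefficient becomes $(q_l+q_m)\tfrac{\beta}{3}(1-q_lq_m)$, which vanishes because $q_i\in\{\pm1\}$ forces $(q_l+q_m)(1-q_lq_m)=0$. The main obstacle is thus purely combinatorial: keeping track of which triple-sum term lands on which monomial with which multiplicity, where counting errors are most likely; the algebraic cancellation itself is clean once \eqref{identityalphasquares} is invoked.
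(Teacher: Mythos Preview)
Your proposal is correct. Both you and the paper rely on the same two inputs---the rectangular row-sum $\sum_{i\neq j}\alpha_T(i,j)=-\tfrac23$ and the square identity \eqref{identityalphasquares}---so the proofs are close in substance; the difference is organizational. The paper keeps the variables $x_i$ inside the sums, parametrizes everything by $s_j$ rather than $q_j$, and handles the cross-term in \eqref{sumidentitymeq5} by first using $Y(i,j,k)=Y(k,j,i)$ to replace $\tfrac{x_j}{2}(x_i+x_k)$ by $x_jx_i$ and then killing the resulting $x_ix_j\,\alpha_T(i,j)(s_i-s_j)$ piece by antisymmetry under $i\leftrightarrow j$. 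Your monomial-by-monomial extraction, built around the single closed form $3\sum_{i\neq j}\sum_{k\neq i,j}Y(i,j,k)=\tfrac{2q_j}{9}-\tfrac13\sum_{i\neq j}q_i\alpha_T(i,j)$, is a tidier bookkeeping device and makes the off-diagonal cancellation $(q_l+q_m)(1-q_lq_m)=0$ especially transparent; the paper's route avoids the four-case split for the triple sum at the cost of slightly heavier intermediate expressions in $s_j$.
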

\begin{proof}
We record some important facts for the proof of Lemma \ref{lemmasumidentities}. Recall from \eqref{defalphaT} that $\alpha_T(i,j) = \psi_{T^t}(i,j) - \frac{s_i s_j}3$, with $\psi_{T^t}(i,j) = \psi_{T^t}(j,i)$ and $\psi_{T^t}(i,i) = 0$. Moreover, utilizing \eqref{identitysumpsi} it is easily showed that
\begin{align}
   & \sum_{k=1}^d \alpha_T(j,k) = - s_j, \\
 \label{sumkdiffiandjofalpha}  & \sum_{k \neq i,j}^d \alpha_T(j,k) = -s_j - \alpha_T(i,j) + \frac{s_j^2}3.
\end{align}

\subsubsection*{Proof of \eqref{sumidentitymeq3}} Straightforward computations utilizing \eqref{sumkdiffiandjofalpha} show that
\begin{equation} \label{firstsumofY}
    \sum_{k \neq i,j} Y(i,j,k) = \frac{q_j}3 \lb - \alpha_T(i,j)^2 -s_j \alpha_T(i,j) + \alpha_T(i,j) \frac{s_j^2}3 + \frac{s_i}3 + \frac{\alpha_T(i,j)}3 - \frac{s_i^2}9 \rb.
\end{equation}
Substitution of the identities \eqref{identityalphasquares} and
\begin{equation} \label{sjsquared}
    s_j^2 = 3s_j-2
\end{equation}
into \eqref{firstsumofY} leads to 
\begin{equation} \label{sumofY}
    \sum_{k \neq i,j} Y(i,j,k) = \frac{2q_j}9 \lb -5 + s_i(3-2s_j) + 3s_j\rb \alpha_T(i,j).
\end{equation}
Next, thanks to 
\begin{equation}
    s_j = \frac{3-q_j}2
\end{equation}
and $q_j^2 = 1$, it can be showed that
\begin{equation}
   \sum_{k \neq i,j} Y(i,j,k) + \frac{q_j \alpha_T(i,j)}3 = - \frac{\alpha_T(i,j)}3 + \frac29 q_j \alpha_T(i,j) + \frac29 s_i \alpha_T(i,j).
\end{equation}
A straightforward computation then leads to
\begin{equation}
  \sum_j \sum_{i \neq j} \left[\sum_{k \neq i,j} Y(i,j,k) + \frac{q_j \alpha_T(i,j)}3\right] = \sum_j \lb \frac{s_j}3 - \frac29 q_j s_j -\frac29 s_j^2 - \frac{s_j^2}9 + \frac{2}{27} q_j s_j^2 + \frac{2}{27} s_j^3 \rb.
\end{equation}
It finally remains to use $q_j = 3-2s_j$ and $s_j^2 = 3s_j-2$ to obtain
\begin{equation}
  \sum_j \sum_{i \neq j} \left[\sum_{k \neq i,j} Y(i,j,k) + \frac{q_j \alpha_T(i,j)}3\right] = \sum_j \frac{2}{27}(2s_j-3) = -\frac{2}{27} \sum_j q_j,
\end{equation}
which is \eqref{sumidentitymeq3}.

\subsubsection*{Proof of \eqref{sumidentitymeq4}} The proof of \eqref{sumidentitymeq4} involves similar computations, hence we will skip more steps. In particular, we can immediately use \eqref{sumofY} to obtain
\begin{align*}
    & \sum_j \sum_{i \neq j} \left[\sum_{k \neq i,j} (3x_j Y(i,j,k)) + \frac13 \alpha_T(i,j) x_j (q_i+2q_j) \right] \\
    & = \sum_j \sum_{i \neq j} \left[x_j \frac{2q_j}3 (-5+s_i(3-2s_j)+3s_j)\alpha_T(i,j)+ \frac13 \alpha_T(i,j) x_j (q_i+2q_j) \right].
\end{align*}
Then, computations that are similar to before show that this equals 
\begin{equation}
    \sum_j \sum_{i \neq j} \frac{q_j x_j}3 \alpha_T(i,j).
\end{equation}
After further simplifications we then obtain
\begin{equation}
    \sum_j \sum_{i \neq j} \left[\sum_{k \neq i,j} (3x_j Y(i,j,k)) + \frac13 \alpha_T(i,j) x_j (q_i+2q_j) \right] = -\frac29 \sum_j q_j x_j,
\end{equation}
which is \eqref{sumidentitymeq4}.

\subsubsection*{Proof of \eqref{sumidentitymeq5}}
The identity $Y(i,j,k) = Y(k,j,i)$ implies that the triple sum in the left-hand side of \eqref{sumidentitymeq5} equals
\begin{equation}
    3\sum_{j=1}^d \sum_{i \neq j}^d \sum_{k \neq i,j}^d (x_j^2+\frac{x_j}2(x_i+x_k)) Y(i,j,k) = 3\sum_{j=1}^d \sum_{i \neq j}^d \sum_{k \neq i,j}^d (x_j^2+ x_j x_i) Y(i,j,k).
\end{equation}
We then perform computations that are similar as before to obtain
\begin{align}
  &  \sum_{j=1}^d \sum_{i \neq j}^d \left[\sum_{k \neq i,j} (3 (x_j^2+ x_j x_i) Y(i,j,k)) + \frac13 \alpha_T(i,j) xj(x_j+\frac{x_i}2)(q_i+3q_j) \right] \\
  & = \sum_{j=1}^d \sum_{i \neq j}^d \frac13 x_i x_j \alpha_T(i,j) (s_i-s_j) + \sum_{j=1}^d \sum_{i \neq j}^d \lb 2x_j^2 \alpha_T(i,j) - \frac43 s_j x_j^2 \alpha_T(i,j) \rb.
\end{align}
The first double sum vanishes by antisymmetry of the summand, and the second double sum reduces to
\begin{equation*}
    \sum_{j=1}^d \sum_{i \neq j}^d \lb 2x_j^2 \alpha_T(i,j) - \frac43 s_j x_j^2 \alpha_T(i,j) \rb = \sum_{j=1}^d x_j^2 \lb -2s_j + 2 s_j^2 - \frac49 s_j^3 \rb = \sum_j \frac{4}9 x_j^2 (2s_j-3) = -\frac49 \sum_j q_j x_j^2.
\end{equation*}
This concludes the proof of \eqref{sumidentitymeq5} and of Lemma \ref{lemmasumidentities}.
\end{proof}
With Lemmas \ref{lemmasmartidentities} and \ref{lemmasumidentities} at hand, we are now ready to prove Proposition \ref{propW3WardUT}.

\begin{proof}[Proof of Proposition \ref{propW3WardUT}]
For each $m=1,2,3,4,5$, straightforward computations show that the action of $\mathcal M^{(m)}_{\varsigma,c}$ on the formula \eqref{representationUTalpha} for the conformal block function $\mathcal U_T$ takes the following form:
\begin{align} \label{actionofMonU}
   \frac{\mathcal M^{(m)}_{\varsigma,c} \; \mathcal U_T(x_1,\dots,x_d)}{\mathcal U_T(x_1,\dots,x_d)} = \sum_{j=1}^d \sum_{i \neq j}^d \sum_{k \neq i,j}^d \frac{A^{(m)}(i,j,k)}{(x_j-x_i)(x_j-x_k)} + \sum_{j=1}^d \sum_{i \neq j}^d \frac{B^{(m)}(i,j)}{(x_j-x_i)^2} + \sum_{j=1}^d C^{(m)}(j),
\end{align}
where $A^{(m)}$ and $B^{(m)}$ are defined in \eqref{defAm} and \eqref{defBm}, respectively, and
\begin{align}
C^{(m)}(j) = \frac{(m-1)(m-2)}{27} q_j x_j^{m-3}.
\end{align}
We now show that \eqref{actionofMonU} equals 0 for each $m=1,2,3,4,5$.
\subsubsection*{The case m=1.} By \eqref{B12antisymij} for $m=1$, the double sum vanishes. We now totally symmetrize the summand on the triple sum with respect to all permutations of $i,j,k$. Utilizing $A^{(1)}(i,j,k)=A^{(1)}(k,j,i)$, we have
\begin{align*}
    & \sum_{j=1}^d \sum_{i \neq j}^d \sum_{k \neq i,j}^d \frac{A^{(1)}(i,j,k)(x_i-x_k)}{(x_j-x_i)(x_j-x_k)(x_i-x_k)} \\
    & = \frac13 \sum_{j=1}^d \sum_{i \neq j}^d \sum_{k \neq i,j}^d \frac{A^{(1)}(i,j,k)(x_i-x_k) + A^{(1)}(k,i,j)(x_k-x_j) + A^{(1)}(j,k,i)(x_j-x_i)}{(x_j-x_i)(x_j-x_k)(x_i-x_k)}.
\end{align*}
By \eqref{A1simmetricij}, we also have $A^{(1)}(i,j,k)=A^{(1)}(j,k,i)=A^{(1)}(k,i,j)$. Hence the triple sum vanishes, since the numerator in the summand vanishes. This concludes the proof of \eqref{eqMmUT} for $m=1$.

\subsubsection*{The case m=2.} By \eqref{B12antisymij} for $m=2$, the double sum vanishes. As for the triple sum, we proceed in the same way as for the case $m=1$. A straightforward computation shows that
\begin{align*}
   & A^{(2)}(i,j,k)(x_i-x_k) + A^{(2)}(k,i,j)(x_k-x_j) + A^{(2)}(j,k,i)(x_j-x_i) \\
   & = 3 \left[x_j(x_i-x_k) Y(i,j,k) + x_k(x_j-x_i) Y(j,k,i) + x_i(x_k-x_j) Y(k,i,j)\right],
\end{align*}
where $Y$ is defined in \eqref{defY}. Moreover, utilizing the symmetries $Y(i,j,k)=Y(k,j,i)$ and \eqref{Ysymij} we obtain
\begin{align*}
  A^{(2)}(i,j,k)(x_i-x_k) & + A^{(2)}(k,i,j)(x_k-x_j) + A^{(2)}(j,k,i)(x_j-x_i) \\
  & = 3 Y(i,j,k)\left[x_i(x_k-x_j)+x_k(x_j-x_i)+x_j(x_i-x_k)\right] \\
  & = 0.
\end{align*}
This shows that the symmetrization of the summand in the triple sum vanishes. This concludes the proof of \eqref{eqMmUT} for $m=2$.
\subsubsection*{The case m=3.} The case $m=3$ (just like the cases $m=4$ and $m=5$) is more intricate than the cases $m=1$ and $m=2$. We first study the double sum. The identity \eqref{hatB345antisymij} for $m=3$ immediately leads to
\begin{equation}
   \sum_{j=1}^d \sum_{i \neq j}^d \frac{B^{(3)}(i,j)}{(x_j-x_i)^2} = \frac13 \sum_{j=1}^d q_j \sum_{i \neq j}^d \alpha_T(i,j).
\end{equation}
We now investigate the triple sum. The key is to introduce a function $\hat A^{(3)}$ such that
$$\frac{A^{(3)}(i,j,k)}{(x_j-x_i)(x_j-x_k)} = \frac{\hat A^{(3)}(i,j,k)}{(x_j-x_i)(x_j-x_k)(x_i-x_k)} + Y(i,j,k).$$
Since $A^{(3)}(i,j,k) = A^{(3)}(k,j,i)$ and $Y(i,j,k) = Y(k,j,i)$, it follows that $\hat A^{(3)}(i,j,k) = - \hat A^{(3)}(k,j,i)$. We then have
\begin{equation}
    \sum_{j=1}^d \sum_{i \neq j}^d \sum_{k \neq i,j}^d \frac{\hat A^{(3)}(i,j,k)}{(x_j-x_i)(x_j-x_k)(x_i-x_k)} = \frac13 \sum_{j=1}^d \sum_{i \neq j}^d \sum_{k \neq i,j}^d \frac{\hat A^{(3)}(i,j,k) + \hat A^{(3)}(k,i,j) + \hat A^{(3)}(j,k,i)}{(x_j-x_i)(x_j-x_k)(x_i-x_k)}.
\end{equation}
A straightforward computation leads to
\begin{align*}
  &  \hat A^{(3)}(i,j,k) + \hat A^{(3)}(k,i,j) + \hat A^{(3)}(j,k,i) = Y(i,j,k) (X^{(3)}(i,j,k) + X^{(3)}(k,i,j) + X^{(3)}(j,k,i)) = 0,
\end{align*}
where
\begin{align*}
  &  X^{(3)}(i,j,k) = (x_k-x_i)(x_j(x_i+x_k)-x_ix_k+2x_j^2).
\end{align*}
We then infer that
\begin{equation}
\frac{\mathcal M^{(3)}_{\varsigma,c} \; \mathcal U_T(x_1,\dots,x_d)}{\mathcal U_T(x_1,\dots,x_d)} = \sum_{j=1}^d \sum_{i \neq j}^d \sum_{k \neq i,j}^d Y(i,j,k) + \frac13 \sum_{j=1}^d q_j \sum_{i \neq j}^d \alpha_T(i,j) + \frac{2}{27}\sum_{j=1}^d q_j,
\end{equation}
and it equals 0 by Equation \eqref{sumidentitymeq3} in Lemma \ref{lemmasumidentities}. This concludes the case $m=3$.
\subsubsection*{The case m=4.} We proceed similarly to the case $m=3$. The identity \eqref{hatB345antisymij} for $m=4$ leads to
\begin{equation}
   \sum_{j=1}^d \sum_{i \neq j}^d \frac{B^{(4)}(i,j)}{(x_j-x_i)^2} = \frac13 \sum_{j=1}^d \sum_{i \neq j}^d \alpha_T(i,j)x_j (q_i+2q_j).
\end{equation}
As for the triple sum, we define $\hat A^{(4)}$ such that
$$\frac{A^{(4)}(i,j,k)}{(x_j-x_i)(x_j-x_k)} = \frac{\hat A^{(4)}(i,j,k)}{(x_j-x_i)(x_j-x_k)(x_i-x_k)} + 3 x_j Y(i,j,k).$$
We then have
\begin{equation}
    \sum_{j=1}^d \sum_{i \neq j}^d \sum_{k \neq i,j}^d \frac{\hat A^{(4)}(i,j,k)}{(x_j-x_i)(x_j-x_k)(x_i-x_k)} = \frac13 \sum_{j=1}^d \sum_{i \neq j}^d \sum_{k \neq i,j}^d \frac{\hat A^{(4)}(i,j,k) + \hat A^{(4)}(k,i,j) + \hat A^{(4)}(j,k,i)}{(x_j-x_i)(x_j-x_k)(x_i-x_k)}.
\end{equation}
A straightforward computation leads to
\begin{align*}
  &  \hat A^{(4)}(i,j,k) + \hat A^{(4)}(k,i,j) + \hat A^{(4)}(j,k,i) = Y(i,j,k) (X^{(4)}(i,j,k) + X^{(4)}(k,i,j) + X^{(4)}(j,k,i)) = 0,
\end{align*}
where
\begin{align*}
  &  X^{(4)}(i,j,k) = 3 x_j (x_i-x_k) (x_i x_j-x_i x_k+x_j x_k).
\end{align*}
We then infer that
\begin{equation}
\frac{\mathcal M^{(4)}_{\varsigma,c} \; \mathcal U_T(x_1,\dots,x_d)}{\mathcal U_T(x_1,\dots,x_d)} = 3\sum_{j=1}^d \sum_{i \neq j}^d \sum_{k \neq i,j}^d x_j Y(i,j,k) + \frac13 \sum_{j=1}^d \sum_{i \neq j}^d \alpha_T(i,j)x_j (q_i+2q_j) + \frac29 \sum_{j=1}^d q_j x_j,
\end{equation}
which vanishes by Equation \eqref{sumidentitymeq4} in Lemma \ref{lemmasumidentities}. This completes the case $m=4$.
\subsubsection*{The case m=5.} Again, we proceed similarly to the cases $m=3$ and $m=4$. The identity \eqref{hatB345antisymij} for $m=5$ leads to
\begin{equation}
   \sum_{j=1}^d \sum_{i \neq j}^d \frac{B^{(5)}(i,j)}{(x_j-x_i)^2} = \frac13 \sum_{j=1}^d \sum_{i \neq j}^d \alpha_T(i,j) xj(x_3+\frac{x_i}2)(q_i+3q_j) (x_j-x_i)^2.
\end{equation}
As for the triple sum, we define $\hat A^{(5)}$ such that
$$\frac{A^{(5)}(i,j,k)}{(x_j-x_i)(x_j-x_k)} = \frac{\hat A^{(5)}(i,j,k)}{(x_j-x_i)(x_j-x_k)(x_i-x_k)} + 3 \lb x_j^2+\frac{x_j}2(x_i+x_k)\rb Y(i,j,k).$$
We then have
\begin{equation}
    \sum_{j=1}^d \sum_{i \neq j}^d \sum_{k \neq i,j}^d \frac{\hat A^{(5)}(i,j,k)}{(x_j-x_i)(x_j-x_k)(x_i-x_k)} = \frac13 \sum_{j=1}^d \sum_{i \neq j}^d \sum_{k \neq i,j}^d \frac{\hat A^{(5)}(i,j,k) + \hat A^{(5)}(k,i,j) + \hat A^{(5)}(j,k,i)}{(x_j-x_i)(x_j-x_k)(x_i-x_k)}.
\end{equation}
A straightforward computation leads to
\begin{align*}
  &  \hat A^{(5)}(i,j,k) + \hat A^{(5)}(k,i,j) + \hat A^{(5)}(j,k,i) = Y(i,j,k) (X^{(5)}(i,j,k) + X^{(5)}(k,i,j) + X^{(5)}(j,k,i)) = 0,
\end{align*}
where
\begin{align*}
  &  X^{(5)}(i,j,k) = \frac{3}{2} x_j \left(x_i-x_k\right) \left(x_i^2 \left(x_j-x_k\right)+x_j x_k \left(x_j+x_k\right)+x_i \left(x_j^2-x_k^2\right)\right).
\end{align*}
We then infer that
\begin{align}
\frac{\mathcal M^{(5)}_{\varsigma,c} \; \mathcal U_T(x_1,\dots,x_d)}{\mathcal U_T(x_1,\dots,x_d)} = \; & 3\sum_{j=1}^d \sum_{i \neq j}^d \sum_{k \neq i,j}^d \lb x_j^2+\frac{x_j}2(x_i+x_k)\rb Y(i,j,k) \\
\nonumber & + \frac13 \sum_{j=1}^d \sum_{i \neq j}^d \alpha_T(i,j) x_j(x_j+\frac{x_i}2)(q_i+3q_j) + \frac49 \sum_{j=1}^d q_j x_j^2.
\end{align}
This vanishes by Equation \eqref{sumidentitymeq5} in Lemma \ref{lemmasumidentities}. This completes the case $m=5$ and thus the proof of Proposition \ref{propW3WardUT}.
\end{proof}

\begin{remark}
    We note that the first two Ward identities \eqref{eqMmUT} for $m=1,2$ hold even if $T$ does not have a rectangular shape. However, the last three identities \eqref{eqMmUT} for $m=3,4,5$ hold if and only if $T$ has a rectangular shape, because of Lemma \ref{lemmasumidentities}.
\end{remark}

\subsection{Asymptotic properties of the conformal block basis}
\label{sectionasy}

In this section we study asymptotic properties of the $W_3$ conformal block basis elements $\{\mathcal{U}_T,\; T\in \text{RSYT}_\varsigma^{\pi}\}$.

\begin{proposition}
    Let $\mathcal{U}_T, T\in \textup{RSYT}_\varsigma^{\pi}$ be a conformal block. 
    Let $j\in \llbracket 1, d-1 \rrbracket$. We have three kinds of asymptotics: \\
    
    $\bullet$ If $s_j=s_{j+1}=1$:
    \begin{align*}
      \lim_{x_{j+1}\to x_j}(x_{j+1}-x_j)^{\frac{1}{3}}\;\mathcal{U}_T(x_1,\dots, x_d)= \begin{cases}
            \mathcal{U}_{T'}(x_1,\dots,x_j,x_{j+2},\dots, x_d) \text{ if $j$ and $j+1$ are in different rows,}\\
            0 \text{ otherwise,}
        \end{cases}
    \end{align*}
    where $T'$ is the tableau obtained from $T$ by changing $j+1$ to $j$.\\

 $\bullet$ If $s_j=s_{j+1}=2$:

    \begin{align*}
         \lim_{x_{j+1}\to x_j}(x_{j+1}-x_j)^{\frac{1}{3}}\;\mathcal{U}_T(x_1,\dots, x_d) = \begin{cases}
            \mathcal{U}_{T'}(x_1,\dots,x_j,x_{j+2},\dots, x_d) \text{if $j$ and $j+1$ are not in the same two rows,}\\
            0 \text{ otherwise,}
        \end{cases}
    \end{align*}
    where $T'$ is the tableau obtained from $T$ by changing $j+1$ to $j$ and then removing $3$ boxes containing $j$ in $3$ different rows. \\
    
     $\bullet$ If $s_j\neq s_{j+1}$:
    
    \begin{align*}
         \lim_{x_{j+1}\to x_j}(x_{j+1}-x_j)^{\frac{2}{3}}\;\mathcal{U}_T(x_1,\dots, x_d)= \begin{cases}
            \mathcal{U}_{T'}(x_1,\dots,x_{j-1},x_{j+2},\dots, x_d) \text{ if $j$ and $j+1$ are in different rows,}\\
            0 \text{ otherwise,}
        \end{cases}
    \end{align*}
    where $T'$ is the tableau obtained from $T$ by removing boxes containing $j$ and $j+1$.
    
\end{proposition}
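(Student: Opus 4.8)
The plan is to argue directly from the product representation \eqref{representationUTalpha},
\[
\mathcal U_T=\prod_{1\le i<k\le d}(x_k-x_i)^{\alpha_T(i,k)},\qquad \alpha_T(i,k)=\psi_{T^t}(i,k)-\tfrac{s_is_k}{3},
\]
and to extract the power of $(x_{j+1}-x_j)$. Splitting off this single factor, the exponent of $(x_{j+1}-x_j)$ in $(x_{j+1}-x_j)^{\gamma}\mathcal U_T$ is $\gamma+\alpha_T(j,j+1)$, with $\gamma=\tfrac13$ in the first two cases and $\gamma=\tfrac23$ in the third. Since $\psi_{T^t}(j,j+1)$ is the number of columns of $T^t$, equivalently rows of $T$, containing both a $j$ and a $j+1$, and row-strictness forbids a repeated entry in a row, a direct count gives $\psi_{T^t}(j,j+1)\in\{0,1\}$ when $s_j=s_{j+1}=1$ or $s_j\ne s_{j+1}$, and $\psi_{T^t}(j,j+1)\in\{1,2\}$ when $s_j=s_{j+1}=2$. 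In each case one checks that $\gamma+\alpha_T(j,j+1)$ equals $0$ exactly under the stated ``different rows'' hypothesis and equals $1$ otherwise; the latter yields the vanishing alternative at once, while the former shows the renormalised limit is finite and equal to $R_0:=\lim_{x_{j+1}\to x_j}\prod_{(i,k)\ne(j,j+1)}(x_k-x_i)^{\alpha_T(i,k)}$.

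Next I would identify $R_0$ with a conformal block. The factors with $i,k\notin\{j,j+1\}$ survive the limit untouched and, since the entries $m\ne j,j+1$ keep their row-membership and valence, reproduce the matching factors of $\mathcal U_{T'}$ after relabelling. For the factors touching $j$ or $j+1$, the pair $(x_j-x_m)^{\alpha_T(m,j)}$ and $(x_{j+1}-x_m)^{\alpha_T(m,j+1)}$ coalesces into $(x_j-x_m)^{\alpha_T(m,j)+\alpha_T(m,j+1)}$, so everything reduces to
\[
\alpha_T(m,j)+\alpha_T(m,j+1)=\big(\psi_{T^t}(m,j)+\psi_{T^t}(m,j+1)\big)-\tfrac{s_m(s_j+s_{j+1})}{3}.
\]
When $s_j=s_{j+1}=1$ this is immediately the exponent of the valence-$2$ entry $j$ of $T'$ occupying the two (distinct) rows of $j$ and $j+1$, so no further input is needed. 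When $s_j=s_{j+1}=2$ or $s_j\ne s_{j+1}$, the point is that in the non-vanishing case the rows occupied by $j$ and $j+1$ exhaust all three rows of $T$; writing $\psi_{T^t}(m,j)+\psi_{T^t}(m,j+1)$ as a sum of indicators and using $[m\in\text{row }1]+[m\in\text{row }2]+[m\in\text{row }3]=s_m$, the merged exponent collapses to the valence-$1$ exponent $[m\in b]-\tfrac{s_m}{3}$ (fusion into one box in the unique common row $b$) in the equal-valence-$2$ case, and to $0$ in the unequal case, matching that there $\mathcal U_{T'}$ no longer depends on $x_j$. I would isolate this indicator identity as a one-line lemma, as it is the only place the three-row structure of $\pi$ enters.

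It then remains to read off $T'$ and confirm it is a genuine element of $\text{RSYT}^{\pi'}_{\varsigma'}$: merge $j+1$ into $j$ for a valence-$2$ entry (case $s_j=s_{j+1}=1$); merge and delete the three superfluous boxes, leaving the surviving $j$ in the common row, for a valence-$1$ entry (case $s_j=s_{j+1}=2$); and delete the boxes of both $j$ and $j+1$ (case $s_j\ne s_{j+1}$). In each case the ``different rows'' hypothesis forces the merged or deleted boxes to lie in distinct rows, so exactly one box is removed from each of the three rows and the shape stays rectangular of type $\pi'$, while row-strictness of $T'$ is inherited from that of $T$. The step I expect to be the main obstacle is this last one: pinning down precisely which box survives in the equal-valence-$2$ case and verifying that the resulting filling is a legitimate row-strict tableau of the fused rectangular shape, rather than merely a correct collection of row-sets. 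This bookkeeping, together with the indicator identity collapsing $\psi_{T^t}(m,j)+\psi_{T^t}(m,j+1)$ to $s_m$, is where the hypothesis that $\pi$ has exactly three rows is indispensable; once it is in place, all three asymptotic formulas follow from the exponent count of the first paragraph.
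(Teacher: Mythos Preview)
Your proposal is correct and follows essentially the same route as the paper --- a direct exponent count from the product representation $\mathcal U_T=\prod_{i<k}(x_k-x_i)^{\alpha_T(i,k)}$ --- but is organised more uniformly. The paper handles $s_j=s_{j+1}=1$ by citing the fusion Lemma~\ref{lemmafusion} (a somewhat opaque reference, since that lemma concerns the PDE system rather than the explicit limit), and treats the remaining two cases by splitting the ratio $\lim(x_{j+1}-x_j)^\gamma\mathcal U_T/\mathcal U_{T'}$ into a prefactor piece $R_1$ and a Specht-polynomial piece $R_2$ and concluding ``by careful inspection''. Your single indicator identity $\sum_{r=1}^3[m\in\text{row }r]=s_m$, valid exactly because $\pi$ has three rows, makes that inspection explicit and handles all three cases in one stroke. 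The ``main obstacle'' you flag --- verifying that $T'$ is a legitimate row-strict tableau of the fused rectangular shape --- is real but mild: since $\mathcal U_{T'}$ depends only on the row-multisets (the Specht polynomial $\mathcal P_{(T')^t}$ being a product of Vandermondes over rows of $T'$), any row-strict arrangement of the surviving entries does the job, and the paper does not address this point any more carefully than you do.
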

\begin{proof}
 We proceed case by case. \\
 
\textbf{The case $s_j = s_{j+1} = 1$.} We treat the two cases separately. If $j$, $j+1$ lie in different rows of $T\in \text{RSYT}_\varsigma^{\pi}$, then $T'\in \text{RSYT}_{\varsigma'}^{\pi}$ where $\varsigma'=(s_1,\dots,s_{j-1},2,s_{j+2},\dots,s_d)$. The proof then follows directly from Lemma \ref{lemmafusion} and Proposition \ref{propeqW3forUT}. On the other hand, if $j$ and $j+1$ are on the same row of $T$, it means that $T' \notin \text{RSYT}_{\varsigma'}^{\pi}$ and that
$$\mathcal U_T(x_1,\dots,x_d) = (x_{j+1}-x_j)^{\frac23} f(x_1,\dots,x_d),$$
where $f$ is analytic at $x_{j+1}=x_j$. This proves the second asymptotic property. \\

\textbf{The case $s_j = s_{j+1} = 2$.} Utilizing the representation \eqref{definitionCB} and omitting the variable dependencies of the conformal blocks we write
\begin{equation} \label{limR1R2}
\frac{\lim_{x_{j+1}\to x_j}(x_{j+1}-x_j)^{\frac{1}{3}}\;\mathcal{U}_T}{\mathcal{U}_{T'}} = \lim_{x_{j+1}\to x_j} \left[R_1 R_2\right],
\end{equation}
with
\begin{equation} \label{defR1R2}
    R_1 = \frac{(x_{j+1}-x_j)^{\frac23} \prod_{1\leq l<k\leq d} (x_k-x_l)^{-\frac{s_k s_l}{3}}}{\prod_{\substack{1\leq l<k\leq d \\ l,k \neq j+1}} (x_k-x_l)^{-\frac{s_k' s_l'}{3}}}, \qquad R_2 = \frac{\mathcal P_{T^t}}{(x_{j+1}-x_j) \mathcal P_{T'^t}}.
\end{equation}
We now show that the limit of $R_1$ and $R_2$ is well defined. Starting with $R_1$, we rewrite
$$\lim_{x_{j+1}\to x_j} R_1 = \lim_{x_{j+1}\to x_j} \prod_{k=1}^d \prod_{l=1}^{k-1} (x_j-x_i)^{-\frac{s_k s_l}{3}}\prod_{\substack{k=1 \\ k \neq j+1}}^d \prod_{\substack{l=1 \\ l \neq j+1}}^{k-1} (x_j-x_i)^{\frac{s_k' s_l'}{3}}.$$
In the first double product, we exclude all terms containing $x_{j+1}$:
$$\lim_{x_{j+1}\to x_j} R_1 = \lim_{x_{j+1}\to x_j} \prod_{l=1}^{j-1} (x_{j+1}-x_l)^{-\frac{2s_l}3} \prod_{k=j+2}^d (x_k-x_{j+1})^{-\frac{2s_k}3}\prod_{\substack{k=1 \\ k \neq j+1}}^d \prod_{\substack{l=1 \\ l \neq j+1}}^{k-1} (x_j-x_i)^{\frac{s_k' s_l'-s_k s_l}{3}}.$$
The only terms for which $s_k' s_l'-s_k s_l \neq 0$ are for $k=j$ or for $l=j$. Excluding all such terms in the last double product, we obtain 
\begin{equation}
    \lim_{x_{j+1}\to x_j} R_1 = \prod_{k=1}^{j-1} (x_j-x_k)^{-s_k} \prod_{k=j+2}^d (x_k-x_j)^{-s_k}.
\end{equation}
Finally, since $s_j=s_{j+1}=2$ and $T$ has three rows, there are two cases two consider. If $j$ and $j+1$ appear in the same two rows of $T$, then $j$ and $j+1$ appear in the same two columns of $T^t$, and we infer that $\mathcal P_{T^t}$ contains a factor $(x_{j+1}-x_j)^2$. Hence the limit of $R_2$ equals zero. On the other hand, if $j$ and $j+1$ do not appear in the same two rows of $T$, then there is one row (resp. column) of $T$ (respt. $T^t$) where both $j$ and $j+1$ appear, and we infer that $\mathcal P_{T^t}$ contains a factor $x_{j+1}-x_j$. Hence the limit is finite and nonzero. Finally, due to the structure of the Specht polynomials $\mathcal P_{T^t}$ as a product of three Vandermonde polynomials and to the form of $(T')^t$ relative to $T^t$, a careful inspection shows that 
\begin{equation}
    \lim_{x_{j+1}\to x_j} R_2 = \prod_{k=1}^{j-1} (x_j-x_k)^{s_k} \prod_{k=j+2}^d (x_k-x_j)^{s_k},
\end{equation}
which leads to the desired asymptotic result. \\

\textbf{The case $s_j \neq s_{j+1}$.}
In this case we assume that $s_j=1$ and $s_{j+1}=2$. The case $s_j=2$ and $s_{j+1}=1$ is treated in the same way. We start with
\begin{equation*} \label{limS1S2}
\frac{\lim_{x_{j+1}\to x_j}(x_{j+1}-x_j)^{\frac23}\;\mathcal{U}_T}{\mathcal{U}_{T'}} = \lim_{x_{j+1}\to x_j} \left[S_1 S_2\right],
\end{equation*}
with
$$S_1 = \frac{(x_{j+1}-x_j)^{\frac23} \prod_{1\leq l<k\leq d} (x_k-x_l)^{-\frac{s_k s_l}{3}}}{\prod_{\substack{1\leq l<k\leq d \\ l,k \neq j+1}} (x_k-x_l)^{-\frac{s_k' s_l'}{3}}}, \qquad S_2 = \frac{\mathcal P_{T^t}}{\mathcal P_{T'^t}}.$$
Computations that are similar to the case above lead to the limit
$$\lim_{x_{j+1}\to x_j} S_1 = \lim_{x_{j+1}\to x_j} R_1 = \prod_{k=1}^{j-1} (x_j-x_k)^{-s_k} \prod_{k=j+2}^d (x_k-x_j)^{-s_k}.$$
On the other hand, if $j$ and $j+1$ lie on a same row in $T$, it means that they lie on a same column in $T^t$. Therefore, in this case the polynomial $\mathcal P_{T^t}$ contains a factor $x_{j+1}-x_j$ and we infer that $\lim_{x_{j+1}\to x_j} S_2 =0$. Finally, if $j$ and $j+1$ lie on three different rows of $T$, then it is clear by inspection that 
$$\lim_{x_{j+1}\to x_j} S_2 = \lim_{x_{j+1}\to x_j} R_2 = \prod_{k=1}^{j-1} (x_j-x_k)^{s_k} \prod_{k=j+2}^d (x_k-x_j)^{s_k}.$$
This completes the proof.
\end{proof}

\appendix

\section{Informal derivation of the system of $d+8$ PDEs from CFT} 

In this Appendix we explain how the system of $d+8$ PDEs described in Section \ref{section2} arises in the CFT setting. We adopt the conformal bootstrap approach of \cite{BPZ} and therefore we do not rely on a specific CFT. Nevertheless, let us mention that some of the content of this Appendix has been made rigorous for the (real) $\mathfrak{sl}_3$ Toda CFT in \cite{C,CH2}.

\subsection{$W_3$ highest-weight modules and null-vectors}
Fix a parameter $c\in \mathbb C$, called the central charge. The $W_3$-algebra is the infinite-dimensional algebra generated by $\{L_n,\; W_n,\;n\in \mathbb Z\}$ subject to the following commutation relations \cite{Zam85}:

\begin{align*}
    & [L_n,L_m] = (n-m) L_{n+m} + \frac{c}{12} (n^3-n) \delta_{n,-m}, \\
    & [L_n,W_m] = (2n-m) W_{n+m}, 
\end{align*}
as well as
\begin{align*}
    [W_n,W_m] = \; & \frac{ c}{3\times5!} (n^2-1)(n^2-4) n \delta_{n,-m} + \frac{16}{22+5c} (n-m) \Lambda_{n+m} \\
    & + (n-m) \lb \frac{1}{15} (n+m+2)(n+m+3) - \frac16 (n+2)(m+2) \rb L_{n+m}, 
\end{align*}
where
\begin{align*}
    & \Lambda_n = \sum_{k \in \mathbb Z} : L_k L_{n-k} : + \frac15 x_n L_n, \\
    & x_{2l} = (1+l)(1-l), \\
    & x_{2l+1} = (2+l)(1-l),
\end{align*}
and where $::$ denotes the normal ordering, that is, $:L_{n_1} L_{n_2}: \; = L_{n_2} L_{n_1}$ if $n_1 > n_2$, or $:L_{n_1} L_{n_2}: \; = L_{n_1} L_{n_2}$ if $n_2 > n_1$.

Verma modules $V^{(\boldsymbol{\alpha},c)}$ over the $W_3$ algebra are labelled by vectors $\boldsymbol\alpha = \alpha_1 \boldsymbol\omega_1 + \alpha_2 \boldsymbol\omega_2$ which are linear combinations of fundamental weights of the Lie algebra $\mathfrak{sl}_3$ and by the central charge $c$. We have
\begin{align*}
    V^{(\boldsymbol{\alpha},c)} = \mathcal U(W_3)/I,
\end{align*}
where $I$ is the left ideal generated by $L_0-h$, $W_0-\rho$, $L_n$ and $W_n$ for $n>0$, and where $h$ and $\rho$ are expressed in terms of $\alpha_1,\alpha_2$ and $c$. In particular, $V^{(\boldsymbol{\alpha},c)}$ is generated by a highest-weight vector $v_0$ (that is, $V^{(\boldsymbol{\alpha},c)}= \mathcal U(W_3) \, v_0$) which satisfies $L_0 v_0 = h v_0$, $W_0 v_0 = \rho v_0$ and $L_n v_0 = W_n v_0 = 0$ for all $n>0$. Moreover, the Verma module $V^{(\boldsymbol{\alpha},c)}$ is graded by the $L_0$ eigenvalues. More precisely, we have
$$V^{(\boldsymbol{\alpha},c)} = \bigoplus_{n \geq 0} V^{(\boldsymbol{\alpha},c)}_n,$$
and elements $v \in V^{(\boldsymbol{\alpha},c)}_n$ satisfy $L_0 v = (h+n) v$. For this reason, any $v \in V^{(\boldsymbol{\alpha},c)}_n$ is said to be of level $n$.

A singular vector $w_n \in V^{(\boldsymbol{\alpha},c)}$ is a highest-weight vector of level $n>0$. If a non-zero singular vector $w_n \in V^{(\boldsymbol{\alpha},c)}$ can be found, then $V^{(\boldsymbol{\alpha},c)}$ is said to be degenerate and $w_n$ generates a proper submodule of $V^{(\boldsymbol{\alpha},c)}$. 

In this Appendix we consider highest-weight vectors $v_0^{(j)} \in V^{(i \beta \boldsymbol{\omega}_j,c)}$ which satisfy 
\begin{equation}
    L_0 v_0^{(j)} = h v_0^{(j)}, \qquad W_0 v_0^{(j)} = \rho^{(j)} v_0^{(j)},
\end{equation}
where $h$ is given in \eqref{relationcandbeta} and where
$$\rho^{(j)} = (-1)^j \frac{\left(3-5 \beta ^2\right) \left(3-4 \beta ^2\right)}{27 \beta}.$$

Each of $v_0^{(1)}$ and $v_0^{(2)}$ possesses three singular vectors. Define $\chi_i^{(j)} \in \mathcal U \lb W_3 \rb$ for $i=1,2,3$ and $j=1,2$ as follows:
\begin{align}
    \label{defchi1} & \chi_1^{(j)} := \lb W_{-1} -\frac{3\rho^{(j)}}{2h} L_{-1} \rb, \\
    \label{defchi2} & \chi_2^{(j)} := W_{-2} - \left( \frac{12\rho^{(j)}}{h(5h+1)} L_{-1}^2 - \frac{6\rho^{(j)}(h+1)}{h(5h+1)} L_{-2} \right), \\
    \label{defchi3} & \chi_3^{(j)} := W_{-3} -  \left( \frac{16\rho^{(j)}}{h(h+1)(5h+1)} L_{-1}^3 - \frac{12\rho^{(j)}}{h(5h+1)} L_{-1} L_{-2} - \frac{3\rho^{(j)}(h-3)}{2h(5h+1)} L_{-3} \right).
\end{align}
Then, the null-vectors are $\chi_i^{(j)} v_0^{(j)}$ for $i=1,2,3$ and $j=1,2$.

\subsection{CFT derivation of the PDEs}

In this section we explain how the PDEs \eqref{W3BPZ} and \eqref{ward1}--\eqref{ward5} arise in CFTs with an extended $W_3$-symmetry algebra. The chiral part of the algebra of symmetries of the CFT is encoded into two local fields $T(z)$ and $W(z)$ which are defined as follows:
\begin{align*}
  &  T(z) = \sum_{n \in \mathbb Z} \frac{L_n}{z^{n+2}}, \qquad \mathcal W(z) = \sum_{n \in \mathbb Z} \frac{W_n}{z^{n+3}},
\end{align*}
where the Laurent modes $L_n$ and $W_n$ satisfy the commutation relations of the $W_3$-algebra. Important local fields in such CFTs are the primary fields $V_{\boldsymbol\alpha}$ which correspond to highest-weight vectors in $W_3$ highest-weights modules, as well as their descendants $L_{-n} V_{\boldsymbol\alpha}$ and $W_{-n} V_{\boldsymbol\alpha}$. We now define the correlation function
\begin{equation} \label{defFvarsigma}
    \mathcal F_\varsigma(x_1,\dots,x_d) := \left\langle \prod_{j=1}^d V_{i\beta \boldsymbol{\omega}_{s_j}}(x_j) \right\rangle.
\end{equation}
Crucial identities in CFT are the local Ward identities which measure the effect of inserting the fields $T(z)$ and $\mathcal W(z)$ inside correlation functions. We have \cite{FL}
\begin{align} \label{localwardforT}
     \left\langle T(z) \prod_{j=1}^d V_{i\beta \boldsymbol{\omega}_{s_j}}(x_j) \right\rangle = \sum_{i=1}^d \left( \frac{h}{(z-x_i)^2} + \frac{\partial_{x_i}}{z-x_i} \right) \mathcal F_\varsigma(x_1,\dots,x_d),
\end{align}
and
\begin{equation} \label{localwardforW}
    \left\langle \mathcal W(z) \prod_{j=1}^d V_{i\beta \boldsymbol{\omega}_{s_j}}(x_j) \right\rangle = \sum_{i=1}^d \left( \frac{\rho^{(s_i)}}{(z-x_i)^3} + \frac{W_{-1}^{(i)}}{(z-x_i)^2} + \frac{W_{-2}^{(i)}}{z-x_i} \right) \mathcal F_\varsigma(x_1,\dots,x_d),
\end{equation}
where 
\begin{align*} 
   W_{-n}^{(i)} \mathcal F_\varsigma(x_1,\dots,x_d) :=  \left\langle \lb W_{-n} V_{i\beta \boldsymbol{\omega}_{s_i}} \rb(x_i) \prod_{j \neq i}^d V_{i\beta \boldsymbol{\omega}_{s_j}}(x_j) \right\rangle.
\end{align*} 
Starting from the local Ward identities \eqref{localwardforT} and \eqref{localwardforW}, standard CFT calculations following \cite[Section 6.6.1]{DMS} can be performed to express the correlation function of $W_3$-descendant fields in terms of correlation functions of the corresponding primary fields. For instance, for $n \in \mathbb Z$ we have
\begin{align} \label{corrdescendantLn}
L_{-n}^{(i)} \mathcal F_\varsigma(x_1,\dots,x_d) & := \left\langle \lb L_{-n} V_{i\beta \boldsymbol{\omega}_{s_i}} \rb(x_i) \prod_{j \neq i}^d V_{i\beta \boldsymbol{\omega}_{s_j}}(x_j) \right\rangle \\
\nonumber & = \sum_{j \neq i}^d \lb \frac{(n-1)h}{(x_j-x_i)^n} - \frac{\partial_{x_j}}{(x_j-x_i)^{n-1}} \rb \mathcal F_\varsigma(x_1,\dots,x_d),
\end{align}
as well as
\begin{equation} \label{corrdescendantWn}
    \left\langle \lb W_{-n} V_{i\beta \boldsymbol{\omega}_{s_i}} \rb(x_i) \prod_{j\neq i}^d V_{i\beta \boldsymbol{\omega}_{s_j}}(x_j) \right\rangle = \sum_{j\neq i}^d \left(-\frac{(n-1)(n-2)\rho^{(s_j)}}{2(x_j-x_i)^n} + \frac{(n-2)W_{-1}^{(j)}}{(x_j-x_i)^{n-1}} - \frac{W_{-2}^{(j)}}{(x_j-x_i)^{n-2}} \right)  \mathcal F_\varsigma(x_1,\dots,x_d).
\end{equation}
Identities \eqref{corrdescendantLn} and \eqref{corrdescendantWn} can be generalized to more intricate $W_3$-descendants in a straightforward way. 

We are now ready to derive Equations \eqref{W3BPZ} and \eqref{ward1}--\eqref{ward5}. The null-vectors \eqref{defchi1}, \eqref{defchi2} and \eqref{defchi3} provide a set of $3d$ constraints satisfied by $\mathcal F_\varsigma$. More precisely, for $m=1,\cdots,d$ we have
\begin{align}
  \label{actionW1}  & \lb W_{-1}^{(m)} -\frac{3\rho^{(s_m)}}{2h} L_{-1}^{(m)} \rb \mathcal F_\varsigma = 0, \\
   \label{actionW2} & \lb W_{-2}^{(m)} - \left( \frac{12\rho^{(s_m)}}{h(5h+1)} \lb L_{-1}^{(m)}\rb^2 - \frac{6\rho^{(s_m)}(h+1)}{h(5h+1)} L_{-2}^{(m)} \right) \rb \mathcal F_\varsigma = 0, \\
   \label{actionW3} & \lb W_{-3}^{(m)} - \left( \frac{16\rho^{(s_m)}}{h(h+1)(5h+1)} \lb L_{-1}^{(m)}\rb^3 - \frac{12\rho^{(s_m)}}{h(5h+1)} L_{-1}^{(m)} L_{-2}^{(m)} - \frac{3\rho^{(s_m)}(h-3)}{2h(5h+1)} L_{-3}^{(m)} \right)\rb \mathcal F_\varsigma = 0.
\end{align}
Next, substitution of \eqref{corrdescendantWn} for $n=3$ into \eqref{actionW3} leads to
\begin{align}
   \label{zgueg} & \bigg[ \sum_{i \neq m}^d \left(-\frac{\rho^{(s_i)}}{(x_i-x_m)^3} + \frac{W_{-1}^{(i)}}{(x_i-x_m)^2} - \frac{W_{-2}^{(i)}}{(x_i-x_m)} \right) \\
  \nonumber  & - \left( \frac{16\rho^{(s_m)}}{h(h+1)(5h+1)} \lb L_{-1}^{(m)}\rb^3 - \frac{12\rho^{(s_m)}}{h(5h+1)} L_{-1}^{(m)} L_{-2}^{(m)} - \frac{3\rho^{(s_m)}(h-3)}{2h(5h+1)} L_{-3}^{(m)} \right) \bigg] \mathcal F_\varsigma = 0.
\end{align}
It finally remains to  substitute \eqref{actionW1} and \eqref{actionW2} into \eqref{zgueg}, use \eqref{corrdescendantLn} as well as the parameter correspondence
\begin{align} \label{correspondencerhoq}
    \rho^{(s_i)}= -q_i \frac{\left(3-5 \beta ^2\right) \left(3-4 \beta ^2\right)}{27 \beta}
\end{align}
to obtain \eqref{W3BPZ}. Finally, the knowledge of the identity \eqref{localwardforW} and the fact that $W(z) = O(z^{-6})$ as $|t|\to \infty$ \cite{FL} implies that $\mathcal F_\varsigma$ satisfies
\begin{align}
  \label{otherformward1}  & \sum_{i=1}^d W_{-2}^{(i)} \mathcal F_\varsigma = 0, \\
  \label{otherformward2}  & \sum_{i=1}^d \lb x_i W_{-2}^{(i)} + W_{-1}^{(i)} \rb \mathcal F_\varsigma = 0, \\
   \label{otherformward3} & \sum_{i=1}^d \lb x_i^2 W_{-2}^{(i)} + 2 x_i W_{-1}^{(i)} + \rho^{(s_i)} \rb \mathcal F_\varsigma = 0, \\
   \label{otherformward4} & \sum_{i=1}^d \lb x_i^3 W_{-2}^{(i)} + 3 x_i^2 W_{-1}^{(i)} + 3 x_i \rho^{(s_i)} \rb \mathcal F_\varsigma = 0, \\
   \label{otherformward5} & \sum_{i=1}^d \lb x_i^4 W_{-2}^{(i)} + 4 x_i^3 W_{-1}^{(i)} + 6 x_i^2 \rho^{(s_i)} \rb \mathcal F_\varsigma = 0.
\end{align}
Using \eqref{correspondencerhoq}, substitution of \eqref{actionW1} and \eqref{actionW2} into \eqref{otherformward1}, \eqref{otherformward2}, \eqref{otherformward3}, \eqref{otherformward4} and \eqref{otherformward5} yield \eqref{ward1}, \eqref{ward2}, \eqref{ward3}, \eqref{ward4} and \eqref{ward5}.

\newcommand{\changeurlcolor}[1]{\hypersetup{urlcolor=#1}}      
\changeurlcolor{black}

\end{document}